\newtheorem{thm}{Theorem}[section]
\newtheorem{cor}[thm]{Corollary}
\newtheorem{lem}[thm]{Lemma}
\newtheorem{proof}[thm]{proof}
\newtheorem{defn}[thm]{Definition}
\newtheorem{rem}[thm]{Remark}
\newtheorem{exam}[thm]{Example}
\numberwithin{equation}{section}
\fi \hyphenation{op-tical net-works
semi-conduc-tor}
\begin{document}
\large \onecolumn

\title{On the Solvability of $3$s$/n$t Sum-Network --- A Region Decomposition
and Weak Decentralized Code Method}

\author{Wentu Song, ~  Kai Cai, ~ Chau Yuen, ~ and  ~ Rongquan Feng
\thanks{Wentu Song and Chau Yuen are with Singapore University of
Technology and Design, Singapore
       (e-mails: \{wentu$\_$song, yuenchau\}@sutd.edu.sg).}
\thanks{Kai Cai is with Department of Mathematics,
University of Hong Kong.
       (e-mail: kcai@hku.hk).}
\thanks{Rongquan Feng is with the LMAM, School of Mathematical Sciences,
Peking University, China
       (e-mail: fengrq@math.pku.edu.cn).}
} \maketitle

\begin{abstract}
We study the network coding problem of sum-networks with $3$
sources and $n$ terminals ($3$s$/n$t sum-network), for an
arbitrary positive integer $n$, and derive a sufficient and
necessary condition for the solvability of a family of so-called
``terminal-separable'' sum-network. Both the condition of
``terminal-separable'' and the solvability of a terminal-separable
sum-network can be decided in polynomial time. Consequently, we
give another necessary and sufficient condition, which yields a
faster $(O(|E|)$ time$)$ algorithm than that of Shenvi and Dey
$($\cite{Shenvi10}, $(O(|E|^3)$ time$)$, to determine the
solvability of the $3$s$/3$t sum-network.

To obtain the results, we further develop the region decomposition
method in \cite{Wentu11, Wentu12} and generalize the decentralized
coding method in \cite{Fragouli06}. Our methods provide new
efficient tools for multiple source multiple sink network coding
problems.
\end{abstract}

\begin{IEEEkeywords}
Function computation, linear network coding, sum-network, region
decomposition, decentralized code.
\end{IEEEkeywords}

\IEEEpeerreviewmaketitle

\section{Introduction}
A $k$-source $n$-terminal ($k$s$/n$t) sum-network is modelled as a
directed, acyclic, finite graph $G$ with a set of $k$ vertices
$\{s_1,\cdots,s_k\}$ called sources and a set of $n$ vertices
$\{t_1,\cdots,t_n\}$ called terminals (sinks), such that each
source $s_i$ generates a message $X_i\in\mathbb F$ and each
terminal wants to obtain the sum $\sum_{i=1}^kX_i$ using linear
network coding \cite{Ahlswede00, Li03}, where $\mathbb F$ is a
finite field. The problem of communicating the sum over networks
is in fact a subclass of the problem of \emph{distributed function
computation} over networks, which has been explored in different
contexts \cite{Korner79}-\cite{Kannan}, most are from information
theoretic perspective and focus on small or simple networks.

The study of network coding for sum-networks began with A.
Ramamoorthy \cite{Rama08} and was investigated from several
aspects recently \cite{Rama08}-\cite{Rai13}. In \cite{Rama08}, it
was shown that for a $k$s$/2$t or $2$s$/n$t sum-network with unit
capacity edges and independent, unit-entropy sources, it is
solvable if and only if every source-terminal pair is connected.

The main focus of the study on the sum-network is to determine the
solvability. One direction is to find out necessary and sufficient
solvability conditions for $k$s$/n$t sum-network of larger
integers $k$ and $n$. In this direction, successes to date are
sporadic and the best known result is from Shenvi and Dey
\cite{Shenvi10}, where the authors characterize the solvability of
$3$s$/3$t sum-networks using a collection of six ``connection
conditions''. Another direction is to derive bounds on the
capacity, or similarly --- sufficient or necessary conditions on
the solvability of sum-network. In \cite{Rai10}, some upper and
lower bounds on the capacity of sum-network are presented. These
bounds are observed to be loose for the case of $k>2$ and $n>3$.
In \cite{Rai12}, the authors proved that the linear solvability of
a sum-network is equivalent to the linear solvability of some
multiple-unicast network and vice versa. They also proved that for
any set of polynomials having integer coefficients, there exists a
sum-network that is scalar linear solvable over a finite field
$\mathbb F$ if and only if the polynomials have a common root in
$\mathbb F$. However, both the multiple-unicast network coding
problem and the problem of solving the common root of polynomials
with integer coefficients are NP-hard, which indicate that the
$k$s$/n$t sum-network problem for $k>2$ and $n>3$ is very
challenging.

Like the other settings of multiple-source multiple-sink network
coding problem, loosely speaking, the difficulties of the
sum-network problem lie in two aspects: one is the topological
structure of the underlying network, which can be of arbitrary
complex; the other is the distributed function computation nature
of the coding requirements. So what is of most importance than new
results, is to develop new methodologies to  handle the
difficulties mentioned above, both for analyzing of network
structure and for coding design.

The first methodology that we developed is focusing on the network
structure. The region decomposition method, which has been found
efficient for analyzing network structure and very successful in
the 2-multicast networks \cite{Wentu11} and 2-unicast networks
with non-single rates \cite{Wentu12}, is further developed in this
paper. In \cite{Wentu11}, we proved that each directed, acyclic
network $G$ has a unique ``{\em basic region graph}'', denote by
$\text{RG}(D^{**})$, and the network coding problem on $G$ can be
converted to a coding problem on $\text{RG}(D^{**})$. Generally,
$\text{RG}(D^{**})$ could have a much more ``simpler'' topological
structure than $G$. In the present paper, we further develop this
method and decompose a network $G$ by two steps. In the first
step, we perform basic region decomposition and construct the
basic region graph, just as in \cite{Wentu11}. In the second step,
we further decompose the basic region graph into mutually disjoint
parts and consider the possibility of designing valid code in each
part separately. Most important, there is a special subgraph,
denoted by $\Pi$, such that to determine solvability of $G$, we
only need to consider code design problem on $\Pi$. Specifically,
$G$ is solvable if and only if there exists a code on $\Pi$ which
satisfies some certain conditions. This significantly simplifies
the original problem of determining solvability.

Another methodology is the concept of ``weak decentralized code''
that generalize the ``decentralized code'' in \cite{Fragouli06}.
Rather than that in \cite{Fragouli06} where the coding vectors of
{\em any two} subtrees are linear independent, by weak
decentralized code, each ``equivalent family of regions'' is
assigned the same coding vector, and for any two regions, their
coding vectors are linearly independent if and only if they belong
to different equivalent families.

Combining the developed region decomposition and weak
decentralized code methods, we give a sufficient and necessary
condition for the solvability of a class of $3$s$/n$t sum-network,
termed \emph{terminal-separable region graph}, for arbitrary
$n\in\mathbb Z^+$. The condition can be verified in polynomial
time. Moreover, as two simple corollaries of our result, we prove
that: (1) A $3$s$/2$t sum-network is always solvable if each
source-terminal pair is connected; (2) A $3$s$/3$t sum-network is
unsolvable if and only if the basic region graph has some certain
fixed structure.

Following the technical line of this paper, we can also completely
characterize the solvability of $3$s$/4$t sum-network. Limited by
space of this paper, we leave the results of $3$s$/4$t sum-network
to a future paper.

This paper is organized as follows. In Section
\uppercase\expandafter{\romannumeral 2}, we introduce the network
coding model of sum-network. The method for region decomposition
and for decomposing the basic region graph is presented in
Sections \uppercase\expandafter{\romannumeral 3} and
\uppercase\expandafter{\romannumeral 4}. The method of weak
decentralized code for $3$s$/n$t sum-network is presented in
\uppercase\expandafter{\romannumeral 5}. We characterize
solvability for terminal-separable $3$s$/n$t sum-network in
Section \uppercase\expandafter{\romannumeral 6}. Finally, the
paper is concluded in Section \uppercase\expandafter{\romannumeral
7}.

%%%%%%%%%%%%%%%%%%%%%%%%%%%%%%%%%%%%%%%%%%%
\renewcommand\figurename{Fig}
\begin{figure*}[htbp]
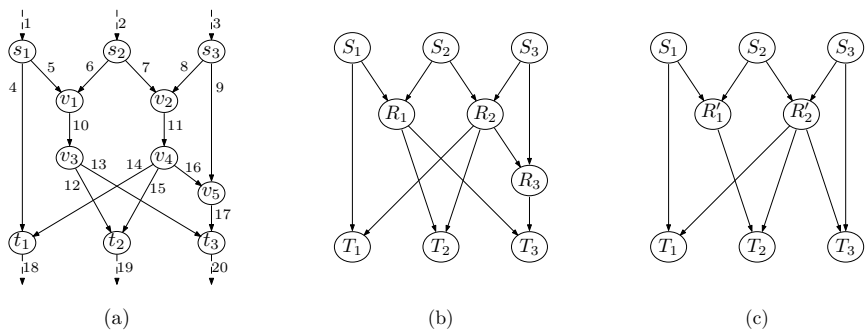

\begin{center}
\vspace{0.2cm}\includegraphics[height=4.3cm]{reg-dcm.1}
\hspace{1.2cm}\includegraphics[height=4.0cm]{reg-dcm.2}
\hspace{1.2cm}\includegraphics[height=4.0cm]{reg-dcm.3}
\end{center}
\caption{Examples of region graph: (a) is a $3$s$/3$t sum-network
$G_1$, where all links are sequentially indexed as
$1,2,\cdots,20$. Here, the imaginary links $1,2,3$ are the
$X_1,X_2,X_3$ source link, and $18,19,20$ are the terminal links
at terminal $t_1,t_2,t_3$ respectively. (b) is the region graph
$\text{RG}(D)$, where $S_1=\{1,4,5\},S_2=\{2,6,7\}, S_3=\{3,8,9\},
R_1=\{10,12,13\}, R_2=\{11,14,15,16\}, R_3=\{17\}, T_1=\{18\},
T_2=\{19\}, T_3=\{20\}$ and
$D=\{S_1,S_2,S_3,R_1,R_2,R_3,T_1,T_2,T_3\}$. (c) is the region
graph $\text{RG}(D')$, where $S_1=\{1,4,5\},S_2=\{2,6,7\},
S_3=\{3,8,9\}, R_1'=\{10,12,13\}, R_2'=\{11,14,15,16\},
T_1=\{18\}, T_2=\{19\}, T_3=\{17,20\}$ and
$D'=\{S_1,S_2,S_3,R_1',R_2',T_1,T_2,T_3\}$. In (c), although
$R_1'$ contains link $13$, which is an incoming link of link
$20\in T_3$, $R_1'$ is not a parent of $T_3$. This is because link
$20$ is not the leader of $T_3$. (Note that $\text{lead}(T_3)$ is
link $17$.)}\label{fg-ex-reg}
\end{figure*}
%%%%%%%%%%%%%%%%%%%%%%%%%%%%%%%%%%%%%%%%%%%%%%

\section{Models and Notations}
In this paper, we always denote $[n]=\{1,2,\cdots,n\}$ for any
positive integer $n$.

A $k$-source $n$-terminal ($k$s$/n$t) sum-network is a directed,
acyclic, finite graph $G=(V,E)$, where $V$ is the node (vertex)
set and $E$ is the link (edge) set. There is a set of $k$ vertices
$\{s_1, s_2, \cdots, s_k\}\subseteq V$ called sources and a set of
$n$ vertices $\{t_1, t_2, \cdots, t_n\}\subseteq V$ called
terminals (sinks) such that each source $s_i$ generates a message
$X_i \in \mathbb F$ and each terminal $t_j$ wants to get the sum
$\sum_{i=1}^kX_i$ by linear network coding, where $\mathbb F$ is a
finite field. Generally, for the sake of simplification, each link
$e$ of $G$ is further assumed to be error-free, delay-free and can
carry one symbol in each use.

For any link $e=(u,v)\in E$, $u$ is called the \emph{tail} of $e$
and $v$ is called the \emph{head} of $e$, and are denoted by
$u=\text{tail}(e)$ and $v=\text{head}(e)$, respectively. Moreover,
we call $e$ an incoming link of $v~($an outgoing link of $u)$. For
two links $e,e'\in E$, we call $e'$ an {\em incoming link} of $e$
($e$ an {\em outgoing link} of $e'$) if
$\text{tail}(e)=\text{head}(e')$. For any $e\in E$, denoted by
$\text{In}(e)$ the set of incoming links of $e$.

To aid analysis, we assume that each source $s_i$ has an imaginary
incoming link, called $X_{i}$ {\em source link} $($or \emph{source
link} for short$)$, and each terminal $t_{j}$ has an imaginary
outgoing link, called {\em terminal link}. Note that the source
links have no tail and the terminal links have no head. As a
result, the source links have no incoming link. For the sake of
convenience, if $e\in E$ is not a source link (resp. terminal
link), we call $e$ a \emph{non-source link} (resp.
\emph{non-terminal link}).

We assume that each non-source non-terminal link $e$ of $G$ is on
a path from some source to some terminal. Otherwise, $e$ has no
impact on the network coding of $G$ and can be removed from $G$.

Let $\mathbb F^k$ be the $k$-dimensional vector space over the
finite field $\mathbb F$. For any subset $A\subseteq\mathbb F^k$,
let $\langle A\rangle$ denote the subspace of $\mathbb F^k$
spanned by $A$. For $i\in\{1,\cdots,k\}$, let $\alpha_i$ denote
the vector of $\mathbb F^k$ with the $i$th component being one and
all other components being zero. Meanwhile, we denote
$$\bar{\alpha}=\sum_{i=1}^k\alpha_i=(1,1,\cdots,1)$$ i.e., the
vector with all components being one.

For any linear network coding scheme, the message transmitted
along any link $e$ is a linear combination
$M_e=\sum_{i=1}^kc_iX_i$ of the source messages, where
$c_i\in\mathbb F$. We use the vector of coefficients,
$d_e=(c_1,\cdots,c_k)$, to represent the message $M_e$ and call
$d_e$ the global encoding vector of $e$. To ensure the
computability of network coding, the outgoing message, as a
$k$-dimensional vector, must be in the span of all incoming
messages. Moreover, to ensure that all terminals receive the sum
$\sum_{i=1}^kX_i$, if $e$ is a terminal link of the sum-network,
then $d_e=\sum_{i=1}^k\alpha_i=\bar{\alpha}$. Thus, we can define
a linear network code of a $k$s$/n$t sum-network as follows:
\begin{defn}[Linear Network Code]\label{lnc}
A \emph{linear network code} (LC) of $G$ over a field $\mathbb F$
is a collection of vectors $C=\{d_{e}\in\mathbb F^k; e\in E\}$
such that
\begin{itemize}
    \item[(1)] $d_{e}=\alpha_{i}$ if $e$ is the $X_i$ source link
    $(i=1,\cdots,k)$;
    \item[(2)] $d_e\in\langle d_{e'}; e'\in\text{In}(e)\rangle$
    if $e$ is a non-source link.
\end{itemize}
The code $C=\{d_{e}\in\mathbb F^k; e\in E\}$ is said to be a
\emph{linear solution} of $G$ if $d_{e}=\bar{\alpha}$ for all
terminal link $e$.
\end{defn}

The vector $d_e$ is called the {\em global encoding vector} of
link $e$. The network $G$ is said to be {\em solvable} if it has a
linear solution over some finite field $\mathbb F$. Otherwise, it
is said {\em unsolvable}.

\section{Region Decomposition Approach for Network Coding}
In this section, we present the region decomposition approach,
which will take a key role in our discussion. The basic concepts
of region decomposition can be seen in \cite{Wentu11,Wentu12}. For
the sake of self-containment, we briefly repeat its core in the
following. It should be mentioned that the basic idea of region
decomposition is also rooted in \cite{Fragouli06}.

\subsection{Region Decomposition and Region Graph}
In the following, we consider $G=(V,E)$ as a $k$s$/n$t
sum-network.
\begin{defn}[Region Decomposition]\label{Reg}
Let $R$ be a non-empty subset of $E$. $R$ is called a region of
$G$ if there is an $e_{l}\in R$ such that for any $e\in
R\backslash\{e_l\}$, $R$ contains an incoming link of $e$. If $E$
is partitioned into mutually disjoint regions, say
$R_{1},R_{2},\cdots,R_{N}$, then we call
$D=\{R_{1},R_{2},\cdots,R_{N}\}$ a region decomposition of $G$.
\end{defn}

The edge $e_{l}$ in Definition \ref{Reg} is called the
\emph{leader} of $R$ and is denoted as $e_{l}=\text{lead}(R)$. A
region $R$ is called the $X_{i}$ \emph{source region} (or a
\emph{source region} for short) if $\text{lead}(R)$ is the $X_{i}$
source link; $R$ is called a \emph{terminal region} if $R$
contains a terminal link. If $R$ is neither a source region nor a
terminal region, then $R$ is called a {\em coding region}. If $R$
is not a source region, we call $R$ a \emph{non-source region}.

Consider the network $G_1$ in Fig. \ref{fg-ex-reg} (a). We can
easily check that the subsets of links $R=\{3,8,9,11,16,17\}$ and
$R'=\{5,10,12,13,19\}$ are two regions of $G_1$ with
$\text{lead}(R)=3$ and $\text{lead}(R')=5$. However,
$\{5,6,10,12,13\}$ is not a region because it does not contain any
incoming link of link $5$ and $6$. Similarly, the subset
$\{10,11,12,13,14\}$ is not a region.

\begin{rem}\label{num-s-t-reg}
Since the source links have no incoming link, then each source
region contains exactly one source link, i.e., its leader. But a
terminal region may contains more than one terminal links. So
there are exactly $k$ source region and at most $n$ terminal
regions for any $k$s$/n$t sum-network. Without loss of generality,
we can assume $G$ has exactly $n$ terminal regions.
\end{rem}

\textbf{Convention}: In this work, we will always denote the $k$
different source regions as $S_1,\cdots,S_k$ and the $n$ terminal
regions as $T_1,\cdots,T_n$.

\begin{defn}[Region Graph]\label{reg-g}
Let $D$ be a region decomposition of $G$. The region graph of $G$
about $D$ is a directed, simple graph with vertex set $D$ and edge
set $\mathcal E_D$, where $\mathcal E_D$ is the set of all ordered
pairs $(R',R)$ such that $R'$ contains an incoming link of
$\text{lead}(R)$.
\end{defn}

Examples of region graph are given in Fig. \ref{fg-ex-reg} (b) and
(c). In general, a network may have many region decompositions,
hence has many region graphs.

We use $\text{RG}(D)$ to denote the region graph of $G$ about $D$,
i.e., $\text{RG}(D)=(D,\mathcal E_D)$. If $(R',R)$ is an edge of
$\text{RG}(D)$, we call $R'$ a \emph{parent} of $R~(R$ a
\emph{child} of $R')$. For $R\in D$, we use $\text{In}(R)$ to
denote the set of parents of $R$ in $\text{RG}(D)$.

\begin{rem}\label{snp-arg}
Note that the leader of each source region is the corresponding
source link and the source links have no incoming link. So the
source regions have no parent. Moreover, since $G$ is acyclic,
then clearly, $\text{RG}(D)$ is acyclic.
\end{rem}

For $R,R'\in D$, a path in $\text{RG}(D)$ from $R'$ to $R$ is a
sequence of regions $\{R_0=R',R_1,\cdots,R_K=R\}$ such that
$R_{i-1}$ is a parent of $R_i$ for each $i\in\{1,\cdots,K\}$. If
there is a path from $R'$ to $R$, we say $R'$ is connected to $R$
and denote $R'\rightarrow R$. Else, we say $R'$ is not connected
to $R$ and denote $R'\nrightarrow R$. In particular, we have
$R\rightarrow R$ for all $R\in D$.

\begin{defn}[Codes on Region Graph]\label{lnc-reg-g}
A \emph{linear code} (LC) of the region graph $\text{RG}(D)$ over
the field $\mathbb F$ is a collection of vectors
$\tilde{C}=\{d_{R}\in\mathbb F^k; R\in D\}$ such that
\begin{itemize}
    \item[(1)] $d_{S_i}=\alpha_{i}$ for each $i\in\{1,\cdots,k\}$,
    where $S_i$ is the $X_i$ source region;
    \item[(2)] $d_R\in\langle d_{R'}; R'\in\text{In}(R)\rangle$
    if $R$ is a non-source region.
\end{itemize}
The code $\tilde{C}$ is said to be a \emph{linear solution} of
$\text{RG}(D)$ if $d_{T_j}=\bar{\alpha}$ for each terminal region
$T_j$. $\text{RG}(D)$ is said to be {\em feasible} if it has a
linear solution over some finite field $\mathbb F$. Otherwise, it
is said {\em infeasible}.
\end{defn}

The vector $d_R$ is called the {\em global encoding vector} of
$R$. By Definition \ref{lnc-reg-g}, for any linear solution of
$\text{RG}(D)$, it always be that $d_{S_i}=\alpha_i$ and
$d_{T_j}=\bar{\alpha}$. So in order to obtain a solution of
$\text{RG}(D)$, we only need to specify the global encoding vector
$d_R$ for each coding region $R\in D$.

Any linear solution of $\text{RG}(D)$ can be extended to a linear
solution of $G$. In fact, we have the following lemma. %whose proof
%is the same as Lemma 3.11 of \cite{Wentu11}.
\begin{lem}\label{code-ext}
Let $D$ be a region decomposition of $G$ and
$\tilde{C}=\{d_{R}\in\mathbb F^3; R\in D\}$ be a linear solution
of $\text{RG}(D)$. Let $d_e=d_R$ for each $R\in D$ and each $e\in
R$. Then $C=\{d_e; e\in E\}$ is a linear solution of $G$.
\end{lem}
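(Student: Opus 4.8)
The plan is to verify directly that $C=\{d_e; e\in E\}$ satisfies the three requirements of a linear solution of $G$ listed in Definition \ref{lnc}, namely the source-link condition, the span condition for non-source links, and the terminal condition $d_e=\bar\alpha$. First I would dispatch the two easy requirements, which are pure bookkeeping about how the region vectors are copied onto links. If $e$ is the $X_i$ source link, then by Remark \ref{num-s-t-reg} it is the (unique) source link of the source region $S_i$ and hence $e\in S_i$, so $d_e=d_{S_i}=\alpha_i$ because $\tilde C$ is a solution of $\text{RG}(D)$. Likewise, if $e$ is a terminal link then it lies in some terminal region $T_j$, whence $d_e=d_{T_j}=\bar\alpha$. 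The substantive work is condition (2).

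For condition (2) I would fix a non-source link $e$, let $R\in D$ be the region containing it, and split into two cases according to whether $e$ is the leader of $R$. The first case, $e\neq\text{lead}(R)$, is the straightforward one: by the very definition of a region (Definition \ref{Reg}), $R$ must contain some incoming link $e'$ of $e$. Then $e'\in\text{In}(e)$ and, since $e'$ and $e$ lie in the same region, $d_{e'}=d_R=d_e$; thus trivially $d_e\in\langle d_{e'};\, e'\in\text{In}(e)\rangle$. Conceptually this is just the statement that the common region vector is propagated along an incoming link internal to the region.

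The hard part will be the leader case $e=\text{lead}(R)$, where the span condition holds at the region-graph level and must be transported down to the link level. Here I would first note that $R$ is necessarily non-source (a source region's leader is a source link, contradicting $e$ being non-source), so the region-graph solution gives $d_R\in\langle d_{R'};\, R'\in\text{In}(R)\rangle$. The key step is then to prove the containment $\langle d_{R'};\, R'\in\text{In}(R)\rangle\subseteq\langle d_{e'};\, e'\in\text{In}(e)\rangle$. To see this, I would take each parent $R'\in\text{In}(R)$ and invoke Definition \ref{reg-g}: being a parent of $R$ means $R'$ contains an incoming link $e'$ of $\text{lead}(R)=e$, so $e'\in\text{In}(e)$ and $d_{R'}=d_{e'}$. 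Hence every generator $d_{R'}$ of the left-hand span already lies in the right-hand span, which yields the containment and therefore $d_e=d_R\in\langle d_{e'};\, e'\in\text{In}(e)\rangle$. Combining the three verifications completes the argument. I expect no field-size or characteristic subtleties, as the whole proof rests only on span containments forced by the region and region-graph incidence structure; the only point requiring care is keeping the region-level and link-level incidence relations aligned in the leader case.
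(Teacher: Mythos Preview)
Your proposal is correct and follows essentially the same approach as the paper: verify the source and terminal conditions by bookkeeping, then check condition (2) of Definition \ref{lnc} for non-source links. The paper actually defers the span-condition verification to ``the same discussion as in the proof of Lemma 3.11 of \cite{Wentu11}'' without spelling it out, whereas you supply the natural two-case argument (leader vs.\ non-leader) in full; your write-up is therefore more self-contained than the paper's own proof.
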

\begin{proof}
For each link $e\in E$, by Definition \ref{Reg}, there is a unique
$R\in D$ such that $e\in R$. So $C$ is well defined. By Definition
\ref{lnc-reg-g}, we have $d_e=\alpha_i$ for each $X_i$ source link
$e$ and $d_e=\bar{\alpha}$ for each terminal link $e$. Moreover,
suppose $e\in E$ is a non-source link. By the same discussion as
in the proof of Lemma 3.11 of \cite{Wentu11}, we can prove that
$d_e\in\langle d_{e'}; e'\in\text{In}(e)\rangle$. Thus, $C$ is a
linear solution of $G$.
\end{proof}

By Lemma \ref{code-ext}, if $D$ is a region decomposition of $G$
and $\text{RG}(D)$ is feasible, then $G$ is solvable. But
conversely, if $G$ is solvable, it is not necessary that
$\text{RG}(D)$ is feasible.

For the region graph $\text{RG}(D)$ in Fig. \ref{fg-ex-reg} (b),
let $d_{R_1}=\alpha_1$ and $d_{R_2}=d_{R_3}=\alpha_2+\alpha_3$.
Then $ \tilde{C}=\{d_R;R\in D\}$ is a linear solution of
$\text{RG}(D)$ and we can obtain a linear solution of $G_1$ by
Lemma \ref{code-ext}. However, the region graph $\text{RG}(D')$ in
Fig. \ref{fg-ex-reg} (c) is not feasible because for any linear
code, by condition (2) of Definition \ref{lnc-reg-g}, we have
$d_{T_3}\in\langle d_{R_2'}, d_{S_3}\rangle$ and
$d_{R_2'}\in\langle d_{S_2}, d_{S_3}\rangle$, which implies that
$\alpha_1+\alpha_2+\alpha_3=\bar{\alpha}=d_{T_3}\in\langle
d_{S_2}, d_{S_3}\rangle=\langle\alpha_2, \alpha_3\rangle$, a
contradiction.

\subsection{Basic Region Graph}
In this subsection, we shall define a special region decomposition
$D^{**}$ of $G$, called the basic region decomposition of $G$,
which is unique and has the property that $G$ is solvable if and
only if the region graph $\text{RG}(D^{**})$ is feasible.

\begin{defn}[Basic Region Decomposition]\label{BRD}
Let $D^{**}$ be a region decomposition of $G$. $D^{**}$ is called
a basic region decomposition of $G$ if the following conditions
hold:
\begin{itemize}
    \item [(1)] For any $R\in D^{**}$ and any $e\in
    R\setminus\{\text{lead}(R)\}$, $\text{In}(e)\subseteq R$;
    \item [(2)] Each non-source region $R$ in $D^{**}$ has at least two
    parents in $\text{RG}(D^{**})$.
\end{itemize}
\end{defn}
Accordingly, the region graph $\text{RG}(D^{**})$ is called a
basic region graph of $G$.

For example, one can check that for the network $G_1$ in Fig.
\ref{fg-ex-reg} (a), the region graph $\text{RG}(D)$ in Fig.
\ref{fg-ex-reg} (b) is a basic region graph of $G_1$. However, the
region graph $\text{RG}(D')$ Fig. \ref{fg-ex-reg} (c) is not a
basic region graph of $G_1$ because $D'$ does not satisfies
condition (1) of Definition \ref{BRD}. In fact, for the link
$20\in T_3\in D'$, $20\neq\text{lead}(T_3)=17$ and
$\text{In}(20)=\{13,17\}\nsubseteq T_3$.

It is known that for networks with two unit-rate multicast
sessions, the basic region decomposition is unique [22, Th. 4.4]
and can be obtained in time $O(|E|)$ [22, Algorithm 5]. Note that
the notions of basic region decomposition and basic region graph
only depend on the topology of $G$ and have no relation with the
specific demand of sinks. Hence Algorithm 5 and Theorem 4.4 of
\cite{Wentu11} can be directly generalized to arbitrary directed
acyclic networks with $k$ sources and $n$ sinks (terminals),
including multicast networks and sum-networks. We list these
results for sum-network as the following theorem.

\begin{thm}\label{b-reg-unq} The $k$s$/n$t sum-network $G$ has a
unique basic region decomposition, hence has a unique basic region
graph. Moreover, the basic region decomposition and the basic
region graph can be obtained in time $O(|E|)$.
\end{thm}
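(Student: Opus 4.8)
The plan is to give a direct, self-contained argument and then observe that it is exactly the content of [22, Th.~4.4, Algorithm~5] specialized to the topology of $G$. The starting point is that Definition~\ref{BRD} refers only to the link set $E$, the incoming-link relation $\text{In}(\cdot)$, and the designation of source/terminal links; it makes no use of $k$, $n$, or the sum-demand. Consequently any argument valid for the two-session case of \cite{Wentu11} transfers verbatim, and the only real work is to exhibit a construction and to prove that the two defining conditions force it uniquely.

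For existence and the complexity bound I would process the links of $G$ in a fixed topological order (possible since $G$ is acyclic), maintaining for each processed link a region label. A source link is opened as the leader of a new region; this is forced anyway, since a source link has no incoming link and hence, by Definition~\ref{Reg}, cannot be a non-leader member of any region (cf. Remark~\ref{num-s-t-reg}). For a non-source link $e$, every incoming link lies earlier in the order and so is already labeled (note $\text{In}(e)\neq\emptyset$, as every link lies on a source-to-terminal path). I would then compare the region labels occurring in $\text{In}(e)$: if they are all equal to a single region $R$, append $e$ to $R$ as a non-leader; otherwise open a new region with $\text{lead}=e$. Condition~(1) of Definition~\ref{BRD} holds because $e$ is appended only when $\text{In}(e)\subseteq R$, and condition~(2) holds because a new non-source region is opened only when $\text{In}(e)$ meets at least two distinct regions, i.e. $e$ acquires at least two parents. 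Scanning $\text{In}(e)$ and comparing labels costs $O(|\text{In}(e)|)$, so the whole sweep runs in $O(\sum_{e}|\text{In}(e)|)=O(|E|)$.

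Uniqueness I would prove by induction along the same topological order, showing that conditions~(1) and~(2) leave no freedom. Source links are forced to be leaders as above. For a non-source link $e$, assume the region membership of all earlier links is already pinned down. If the incoming links of $e$ meet two or more regions, then $e$ cannot be a non-leader, since condition~(1) would force $\text{In}(e)$ into a single region; hence $e$ is a leader and, the earlier links being fixed by the induction hypothesis, opens the same new region in every basic decomposition. If instead $\text{In}(e)$ lies in a single region $R$, then $e$ cannot be a leader, since condition~(2) would demand two parents for the non-source region it would head; hence $e$ is a non-leader, and Definition~\ref{Reg} places some incoming link of $e$ in $e$'s region, which, by disjointness of regions together with $\text{In}(e)\subseteq R$, must be $R$ itself. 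In both cases the region of $e$ is determined, completing the induction and hence the uniqueness of $D^{**}$ and of $\text{RG}(D^{**})$.

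The step I expect to be the crux is the single-region case of the uniqueness argument: it is where conditions~(1) and~(2) must be played against each other, together with the disjointness of regions, to force $e$ into $R$ rather than allowing it to start a spurious region or to drift into a different one. Everything else---the topological sweep, the verification of the two conditions, and the $O(|E|)$ accounting---is routine. Finally, to keep the exposition short in the paper itself, I would remark that this construction and forcing argument are precisely Algorithm~5 and Theorem~4.4 of \cite{Wentu11} read off the topology of $G$, so that the $k=2$ proofs apply without change to arbitrary $k$ and $n$.
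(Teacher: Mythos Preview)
Your proposal is correct and takes essentially the same approach as the paper: the paper itself gives no proof beyond the observation that Definition~\ref{BRD} depends only on the topology of $G$, so Algorithm~5 and Theorem~4.4 of \cite{Wentu11} apply verbatim for arbitrary $k$ and $n$. Your write-up simply unpacks what that algorithm and that uniqueness argument actually are---the topological sweep and the induction forcing each link to be a leader or to join its parents' common region---which is a faithful expansion rather than a different route.
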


\textbf{Convention}: In the following, we will always use $D^{**}$
and $\text{RG}(D^{**})$ to denote the basic region decomposition
and basic region graph of $G$.

Corresponding to Theorem 4.5 of \cite{Wentu11}, we have the
following theorem for sum-network.

\begin{thm}\label{solv-eqvlt}
$G$ is solvable if and only if $\text{RG}(D^{**})$ is feasible,
where $D^{**}$ is the basic region decomposition of $G$.
\end{thm}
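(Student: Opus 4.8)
The plan is to prove the two implications separately. The direction ``$\text{RG}(D^{**})$ feasible $\Rightarrow$ $G$ solvable'' is immediate: a linear solution of $\text{RG}(D^{**})$ over some $\mathbb F$ extends, by Lemma \ref{code-ext}, to a linear solution of $G$, so $G$ is solvable. The whole difficulty is the converse, ``$G$ solvable $\Rightarrow$ $\text{RG}(D^{**})$ feasible,'' which (as the $\text{RG}(D')$ example of Fig.~\ref{fg-ex-reg}(c) shows) fails for a general region decomposition. The point is to exploit the defining property (1) of the basic region decomposition (Definition \ref{BRD}) to convert a solution of $G$ into one of $\text{RG}(D^{**})$.

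The crux is the following fact about any linear code $C=\{d_e\}$ of $G$ (Definition \ref{lnc}): for every region $R\in D^{**}$ with leader $e_l=\text{lead}(R)$, one has $d_e\in\langle d_{e_l}\rangle$ for all $e\in R$; that is, inside a basic region every global encoding vector is a scalar multiple of the leader's. I would prove this by induction along a topological order of $E$ restricted to $R$. By Definition \ref{Reg}, every non-leader $e\in R$ has an incoming link lying in $R$, so $e_l$ is the unique source of the sub-DAG on $R$; and by condition (1) of Definition \ref{BRD}, $\text{In}(e)\subseteq R$ for each such $e$. Hence $d_e\in\langle d_{e'}:e'\in\text{In}(e)\rangle$ with every $e'\in R$ preceding $e$, and the inductive hypothesis gives $d_e\in\langle d_{e_l}\rangle$. (A non-leader edge of $R$ cannot be a source link, since source links have no incoming link.) It is precisely this step that requires the basic structure: without condition (1) an edge of $R$ could draw input from outside $R$, mixing in foreign vectors.

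Given this fact, I would build a solution of $\text{RG}(D^{**})$ from a solution $C$ of $G$ by setting $d_{S_i}=\alpha_i$ for each source region, $d_R=d_{\text{lead}(R)}$ for each coding region, and $d_{T_j}=\bar\alpha$ for each terminal region. Condition (1) of Definition \ref{lnc-reg-g} and the terminal requirement $d_{T_j}=\bar\alpha$ hold by construction. For condition (2), let $e_l=\text{lead}(R)$ be the leader of a non-source region $R$; then $d_{e_l}\in\langle d_{e'}:e'\in\text{In}(e_l)\rangle$, and for each $e'\in\text{In}(e_l)$ the region $R'\ni e'$ is a parent of $R$ by Definition \ref{reg-g}, with $d_{e'}\in\langle d_{\text{lead}(R')}\rangle$ by the crux. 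When $R'$ is a terminal region, the crux applied to its terminal link (whose vector is $\bar\alpha$) forces $d_{\text{lead}(R')}=c\bar\alpha$ with $c\neq0$, so $\langle d_{e'}\rangle\subseteq\langle\bar\alpha\rangle=\langle d_{R'}\rangle$; otherwise $\langle d_{e'}\rangle\subseteq\langle d_{\text{lead}(R')}\rangle=\langle d_{R'}\rangle$. Thus $d_{e_l}\in\langle d_{R'}:R'\in\text{In}(R)\rangle$, which settles condition (2) for coding regions directly, and for a terminal region $T_j$ gives $d_{\text{lead}(T_j)}=c_j\bar\alpha$ with $c_j\neq0$, whence $d_{T_j}=\bar\alpha=(1/c_j)d_{\text{lead}(T_j)}$ also lies in $\langle d_{R'}:R'\in\text{In}(T_j)\rangle$.

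The main obstacle I anticipate is the bookkeeping around terminal regions: the leader of $T_j$ need not carry exactly $\bar\alpha$ but only a nonzero multiple $c_j\bar\alpha$, so I must rescale to $\bar\alpha$, and I must check this rescaling does not invalidate condition (2) at any child of $T_j$. This is harmless because replacing a spanning vector by a nonzero scalar multiple leaves the spanned subspace unchanged, so all the spans $\langle d_{R'}:R'\in\text{In}(R)\rangle$ are unaffected. I would also note that the degenerate case of a region that is simultaneously a source and a terminal region does not arise here, so every terminal region's leader is a non-source link; and that only condition (1) of Definition \ref{BRD}, together with the uniqueness from Theorem \ref{b-reg-unq}, is needed for this equivalence, condition (2) playing no role.
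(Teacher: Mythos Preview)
Your proposal is correct and follows essentially the same approach as the paper: both directions use the same ideas, with the forward direction via Lemma~\ref{code-ext} and the converse via the key fact that, under condition~(1) of Definition~\ref{BRD}, every edge in a basic region has global encoding vector in $\langle d_{\text{lead}(R)}\rangle$, after which one sets $d_R=d_{\text{lead}(R)}$. The paper handles the terminal-region rescaling more tersely---it first assumes $d_e\neq 0$ for all $e$, observes $\langle d_e\rangle=\langle d_{\text{lead}(R)}\rangle$, and then simply writes ``we can further assume $d_e=d_{\text{lead}(R)}$''---whereas you spell out explicitly that $d_{\text{lead}(T_j)}=c_j\bar\alpha$ with $c_j\neq 0$ and that replacing it by $\bar\alpha$ leaves all relevant spans unchanged; your version is more careful but the content is the same.
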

\begin{proof}
Suppose $\text{RG}(D^{**})$ is feasible and $\tilde{C}$ is a
linear solution of $\text{RG}(D^{**})$. By Lemma \ref{code-ext},
we can obtain a linear solution $C$ of $G$. So $G$ is solvable.

Conversely, suppose $G$ is solvable and $C=\{d_e\in\mathbb F^k;
e\in E\}$ is a linear solution of $G$. Without loss of generality,
we can assume that $d_e\neq 0$ for all $e\in E$. By condition (2)
of Definition \ref{lnc} and condition (1) of Definition \ref{BRD},
we can easily see that for any $R\in D^{**}$ and any $e\in R$,
$d_e\in\langle d_{\text{lead}(R)}\rangle$. So we can further
assume that $d_e=d_{\text{lead}(R)}$. For each $R\in D^{**}$, let
$d_R=d_{\text{lead}(R)}$. Then by Definition \ref{reg-g} and
\ref{lnc-reg-g}, we can check that $\tilde{C}=\{d_R; R\in
D^{**}\}$ is a linear solution of $\text{RG}(D^{**})$. So
$\text{RG}(D^{**})$ is feasible.
\end{proof}

%%%%%%%%%%%%%%%%%%%%%%%%%%%%%%%%%%%%%%%%%%%
\renewcommand\figurename{Fig}
\begin{figure}[htbp]
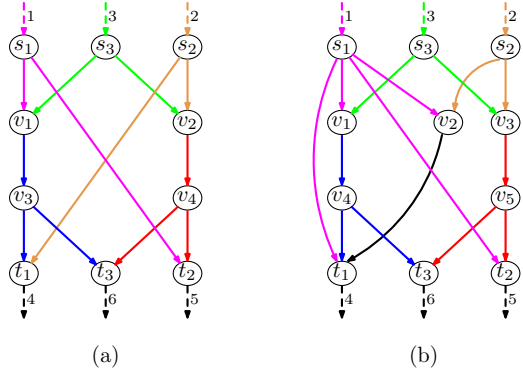

\begin{center}
%\vspace{0.2cm}
\includegraphics[height=4.9cm]{reg-dcm.4}
\hspace{1.3cm}\includegraphics[height=4.9cm]{reg-dcm.5}
\end{center}
\caption{Two examples of unsolvable $3$s$/3$t sum-network, where
$s_1,s_2,s_3$ are three sources and $t_1,t_2,t_3$ are three
terminals. The imaginary links $1,2,3$ are the $X_1,X_2,X_3$
source link, and the imaginary links $4,5,6$ are the terminal
links at terminal $t_1,t_2,t_3$ respectively.
}\label{fg-unsolv-net}
\end{figure}
%%%%%%%%%%%%%%%%%%%%%%%%%%%%%%%%%%%%%%%%%%%%%%

%%%%%%%%%%%%%%%%%%%%%%%%%%%%%%%%%%%%%%%%%%%
\renewcommand\figurename{Fig}
\begin{figure}[htbp]
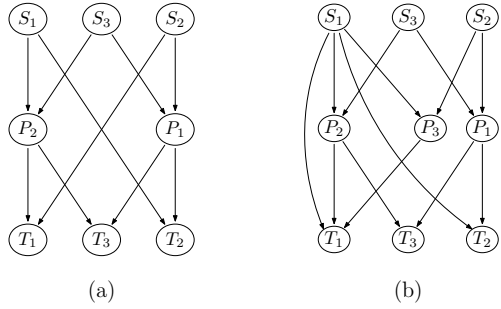

\begin{center}
%\vspace{0.2cm}
\includegraphics[height=4.0cm]{reg-dcm.6}
\hspace{1.3cm}\includegraphics[height=4.0cm]{reg-dcm.7}
\end{center}
\caption{The corresponding basic region graph of the networks in
Fig. \ref{fg-unsolv-net}. }\label{fg-unsolv-reg}
\end{figure}
%%%%%%%%%%%%%%%%%%%%%%%%%%%%%%%%%%%%%%%%%%%%%%

\begin{exam}\label{ex-Shenvi}
We consider two examples of sum-network in Fig.
\ref{fg-unsolv-net}, which can also be found in \cite{Shenvi10}.

In Fig. \ref{fg-unsolv-net} (a), let
$S_1=\{1,(s_1,v_1),(s_1,t_2)\}$, $S_2=\{2$,
$(s_2,v_2),(s_2,t_1)\}$, $S_3=\{3,(s_3,v_1),(s_3,v_2)\}$,
$P_1=\{(v_2,v_4)$, $(v_4,t_3),(v_4,t_2)\}$,
$P_2=\{(v_1,v_3),(v_3,t_1),(v_3,t_3)\}$, $T_1=\{4\}$, $T_2=\{5\}$
and $T_3=\{6\}$. Then $D^{**}=\{S_1,S_2,S_3,P_1$,
$P_2,T_1,T_2,T_3\}$ is its basic region decomposition. The basic
region graph is shown in Fig. \ref{fg-unsolv-reg} (a). If
$C=\{d_e\in\mathbb F^3; e\in E\}$ is a linear solution of $G$, the
global encoding vector of all links in the same region must be the
same. For example, $d_{(s_1,v_1)}=d_{(s_1,t_2)}=d_{1}$,
$d_{(v_3,t_1)}=d_{(v_3,t_3)}=d_{(v_1,v_3)}$, etc. So we can view
each region as a node and consider coding on the basic region
graph $\text{RG}(D^{**})$. We will show in Section
\uppercase\expandafter{\romannumeral 6}. C that
$\text{RG}(D^{**})$ is infeasible. So the original network is
unsolvable.

In Fig. \ref{fg-unsolv-net} (b), let
$S_1=\{1,(s_1,v_1),(s_1,v_2),(s_1,t_1)$, $(s_1,t_2)\}$,
$S_2=\{2,(s_2,v_2),(s_2,v_3)\}$, $S_3=\{3,(s_3,v_1)$,
$(s_3,v_3)\}$, $P_1=\{(v_3,v_5),(v_5,t_3)$, $(v_5,t_2)\}$,
$P_2=\{(v_1,v_4)$, $(v_4,t_1),(v_4,t_3)\}$, $P_3=\{(v_2,t_1)\}$,
$T_1=\{4\}$, $T_2=\{5\}$ and $T_3=\{6\}$. Then
$D^{**}=\{S_1,S_2,S_3,P_1,P_2,P_3,T_1,T_2,T_3\}$ is its basic
region decomposition. The basic region graph is shown in Fig.
\ref{fg-unsolv-reg} (b). We will also show in Section
\uppercase\expandafter{\romannumeral 6}. C that its basic region
graph is infeasible. So this network is unsolvable.
\end{exam}

\begin{lem}\label{P-C-fsb}
Suppose $\Theta\subseteq D^{**}$ and for each $j\in[n]$, there is
a $Q\in\Theta$ such that $Q\rightarrow T_j$. If the sum of source
messages $\sum_{i=1}^kX_i$ can be transmitted from
$\{S_1,\cdots,S_k\}$ to all $Q\in\Theta$ simultaneously, then
$\text{RG}(D^{**})$ is feasible.
\end{lem}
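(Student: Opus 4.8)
The plan is to read the hypothesis as asserting the existence of a linear code $\tilde{C}_0=\{d_R^0:R\in D^{**}\}$ on $\text{RG}(D^{**})$ (satisfying conditions (1)--(2) of Definition \ref{lnc-reg-g}) with $d_Q^0=\bar\alpha$ for every $Q\in\Theta$. From $\tilde C_0$ I would manufacture a genuine solution by ``pushing $\bar\alpha$ downstream'' from $\Theta$ to every terminal region.

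First I would set $A=\{R\in D^{**}: Q\rightarrow R\text{ for some }Q\in\Theta\}$, the set of regions reachable from $\Theta$ in $\text{RG}(D^{**})$. Three facts about $A$ carry the whole argument: (i) every terminal region $T_j$ lies in $A$, by the connectivity hypothesis; (ii) $A$ contains no source region, since the very existence of $\tilde C_0$ forces $\bar\alpha=d_Q^0$ for $Q\in\Theta$ and hence no source region can be in $\Theta$ (as $d_{S_i}=\alpha_i\ne\bar\alpha$ when $k\ge 2$), while a source region, having no parent, is reachable only from itself; and (iii) $A$ is closed under the child relation, i.e.\ if $R\in A$ and $R$ is a parent of $R''$ then $R''\in A$, which is just transitivity of $\rightarrow$. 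Equivalently, every region outside $A$ has all of its parents outside $A$.

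Then I would define $\tilde C=\{d_R:R\in D^{**}\}$ by $d_R=\bar\alpha$ for $R\in A$ and $d_R=d_R^0$ for $R\notin A$, and check Definition \ref{lnc-reg-g}. Condition (1) holds because by (ii) the source regions lie outside $A$ and keep their value $\alpha_i$. For condition (2) I would split on membership in $A$. If $R\notin A$, then by (iii) every parent of $R$ is outside $A$ and so retains its original vector, giving $\langle d_{R'}:R'\in\text{In}(R)\rangle=\langle d_{R'}^0:R'\in\text{In}(R)\rangle\ni d_R^0=d_R$. If $R\in A$, I want $\bar\alpha\in\langle d_{R'}:R'\in\text{In}(R)\rangle$: when $R$ has a parent in $A$ this is immediate, since that parent carries $\bar\alpha$; otherwise every parent of $R$ lies outside $A$, so $R$ is not reachable from $\Theta$ along a positive-length path and hence $R\in\Theta$ itself, in which case all parents keep their $\tilde C_0$-vectors and $\bar\alpha=d_R^0\in\langle d_{R'}^0:R'\in\text{In}(R)\rangle=\langle d_{R'}:R'\in\text{In}(R)\rangle$. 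Since $d_{T_j}=\bar\alpha$ for all $j$ by (i), $\tilde C$ is a linear solution and $\text{RG}(D^{**})$ is feasible.

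The span inclusions are routine; the one place demanding care --- and the real content of the lemma --- is the consistency check at the boundary of $A$. Overwriting vectors by $\bar\alpha$ inside $A$ could in principle violate condition (2) either at a region just outside $A$ (whose incoming span would shift) or at a region of $A$ having no parent in $A$ (which has no $A$-neighbour to supply $\bar\alpha$). Fact (iii) rules out the first danger, and identifying the parentless-in-$A$ regions with members of $\Theta$, together with the defining property of $\tilde C_0$, rules out the second. I expect isolating and proving these two closure properties of $A$ to be the main obstacle.
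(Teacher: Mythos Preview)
Your proposal is correct and is precisely a rigorous formalization of the paper's own argument, which consists of the single sentence ``This lemma is obvious because if a region can receive the sum, then all its down-link regions can receive the sum.'' Your construction---overwriting by $\bar\alpha$ on the downstream set $A$ of $\Theta$ and verifying the boundary cases---is exactly what that sentence is gesturing at, so there is no substantive difference in approach.
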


This lemma is obvious because if a region can receive the sum,
then all its down-link regions can receive the sum.

\begin{exam}\label{ex-P-C-fsb}
Consider the region graph $\text{RG}(D^{**})$ in Fig. \ref{fg-t-s}
(a). Let $\Theta=\{Q,T_1\}$. Then the sum $X_1+X_2+X_3$ can be
transmitted from $\{S_1,S_2,S_3\}$ to $\{Q,T_1\}$ by letting
$d_{R_1}=d_{R_2}=\alpha_1, d_{R_3}=\alpha_2+\alpha_3$ and
$d_{Q}=\alpha_1+\alpha_2+\alpha_3$. Moreover, by letting
$d_{T_2}=d_{T_3}=\alpha_1+\alpha_2+\alpha_3$, we can obtain a
linear solution of $\text{RG}(D^{**})$. We remind the reader that
for $k=3$, the vectors $\alpha_1=(1,0,0), \alpha_2=(0,1,0),
\alpha_3=(0,0,1)$ are the global encoding vectors of the source
messages $X_1, X_2, X_3$ respectively, and
$\bar{\alpha}=\alpha_1+\alpha_2+\alpha_3=(1,1,1)$ is the global
encoding vector of the sum $\sum_{i=1}^3X_i$.
\end{exam}

%%%%%%%%%%%%%%%%%%%%%%%%%%%%%%%%%%%%%%%%%%%
\renewcommand\figurename{Fig}
\begin{figure}[htbp]
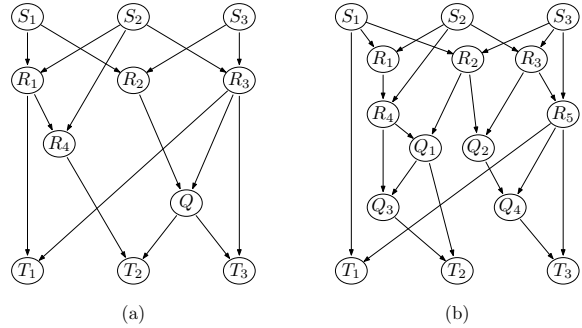

\begin{center}
\vspace{0.2cm}\includegraphics[height=4.3cm]{ex-t-s.1}
\hspace{0.9cm}\includegraphics[height=4.3cm]{ex-t-s.2}
\end{center}
\vspace{-0.2cm}\caption{Two examples of region
graph.}\label{fg-t-s}
\end{figure}
%%%%%%%%%%%%%%%%%%%%%%%%%%%%%%%%%%%%%%%%%%%%%%

If $T_j\rightarrow T_{j'}$ for some $\{j,j'\}\subseteq[n]$, then
we can reduce the number of terminal regions. In fact, without
loss of generality, assume $T_j\rightarrow T_{n}$ for some
$j\in\{1,\cdots,n-1\}$. Let $\Theta=\{T_1,T_2,\cdots,T_{n-1}\}$.
If $\text{RG}(D^{**})$ is feasible, then naturally, the sum
$\sum_{i=1}^kX_i$ can be transmitted to all $T_{j}\in\Theta$.
Conversely, if the sum can be transmitted to all $T_{j}\in\Theta$,
then by Lemma \ref{P-C-fsb}, $\text{RG}(D^{**})$ is feasible. So
we can reduce the number of terminal regions to $n-1$. For this
reason, we can assume that $T_j\nrightarrow T_{j'}$ for all pair
$\{j,j'\}\subseteq[n]$.

Since each non-source non-terminal link $e$ of $G$ is on a path
from some source to some terminal, then by Definition \ref{Reg}
and \ref{reg-g}, each region $R\in D^{**}$ is on a path from some
source region to some terminal region in $\text{RG}(D^{**})$.
Moreover, each terminal region $T_j$ has no child. Otherwise,
since $\text{RG}(D^{**})$ is acyclic, then by tracing child from
$T_j$, we can always find a path from $T_j$ to some other terminal
region $T_{j'}$, which contradicts to the assumption that
$T_j\nrightarrow T_{j'}$ for all $\{j,j'\}\subseteq[n]$. Thus, the
following assumption is reasonable.

\textbf{Assumption 1:} The terminal regions have no child and for
each region $R\in D^{**}$, $R\rightarrow T_j$ for some terminal
region $T_j$.

%\vspace{-0.4cm}
\subsection{Super Region}
As in \cite{Wentu12}, we can define super region of $G$ for the
region graph $\text{RG}(D^{**})$.

\begin{defn}[Super Region \cite{Wentu12}]\label{g-reg}
Suppose $\emptyset\neq\Theta\subseteq D^{**}$. The super region
generated by $\Theta$, denoted by $\text{reg}(\Theta)$, is a
subset of $D^{**}$ which is defined recursively as follows:
\begin{itemize}
    \item[(1)] If $R\in\Theta$, then $R\in\text{reg}(\Theta)$;
    \item[(2)] If $R\in D^{**}$ and $\text{In}(R)
    \subseteq\text{reg}(\Theta)$, then $R\in\text{reg}(\Theta)$.
\end{itemize}
\end{defn}
We define
$\text{reg}^\circ(\Theta)=\text{reg}(\Theta)\setminus\Theta$.
Moreover, if the subset $\Theta=\{R_1,\cdots,R_k\}$, then we
denote $\text{reg}(\Theta)=\text{reg}(R_1,\cdots,R_k)$.

Consider the region graph in Fig. \ref{fg-t-s} (a). We can find
the super region $\text{reg}(R_2,R_3,R_4)$ as follows. First, we
list all regions in a way that each region is before all of its
children. For example,
$\{S_1,S_2,S_3,R_1,R_2,R_3,R_4,Q,T_1,T_2,T_3\}$. Then we can check
all regions one by one to obtain $\text{reg}(R_2,R_3,R_4)$. In
fact, by Definition \ref{g-reg}, we have
$S_1,S_2,S_3,R_1\notin\text{reg}(R_2,R_3,R_4)$ and
$R_2,R_3,R_4\in\text{reg}(R_2,R_3,R_4).$ Note that
$\text{In}(Q)=\{R_2,R_3\}\subseteq\text{reg}(R_2,R_3,R_4)$. Then
by (2) of Definition \ref{g-reg}, we have
$Q\in\text{reg}(R_2,R_3,R_4).$ Since $T_1$ has a parent
$R_1\notin\text{reg}(R_2,R_3,R_4)$, then by Definition
\ref{g-reg}, $T_1\notin\text{reg}(R_2,R_3,R_4).$ Similarly,
$\{T_2,T_3\}\subseteq\text{reg}(R_2,R_3,R_4).$ Thus, we have
$\text{reg}(R_2,R_3,R_4)=\{R_2,R_3,R_4,Q,T_2,T_3\}$.

Consider the region graph in Fig. \ref{fg-t-s} (b). By a similar
discussion, we can check that
$\text{reg}(R_2,R_4)=\{R_2,R_4,Q_1,Q_3,T_2\}$ and
$\text{reg}(R_2,R_3,R_5)=\{R_2,R_3,R_5,Q_2,Q_4,T_3\}$.
%\end{exam}

In general, since $\text{RG}(D^{**})$ is acyclic, regions in
$D^{**}$ can be sequentially indexed as
$D^{**}=\{R_1,R_2,R_3,\cdots,R_N\}$ such that $R_i=S_i$ for
$i\in\{1,2,3\}$ and $\ell <\ell'$ if $R_{\ell}$ is a parent of
$R_{\ell'}$. By Definition \ref{g-reg}, it is easy to see that the
following Algorithm 1 output the super region $\text{reg}(\Theta)$
in time $O(|D^{**}|)$.

%\vspace{0.5cm}
\begin{center}
\setlength{\unitlength}{1mm}
\begin{picture}(90,30)(-2,-2)
\put(-1,-3){\line(1,0){87}} \put(-1,-3){\line(0,1){33}}

\put(86,30){\line(-1,0){87}} \put(86,30){\line(0,-1){33}}

\put(0,25){\textbf{Algorithm 1}: Super-Region
$(\text{RG}(D^{**}),\Theta)$:}

\put(5,20){$\mathcal R=\Theta$;}

\put(5,15){$\ell\leftarrow$ from $1$ to $N$}

\put(10,10){\textbf{If} $\text{In}(R)\subseteq\text{reg}(\Theta)$
\textbf{then}}

\put(15,5){$\mathcal R=\mathcal R\cup\{R_\ell\}$;}

\put(5,0){\textbf{Return} $\text{reg}(\Theta)=\mathcal R$;}
\end{picture}
\end{center}

\begin{rem}\label{rem-g-reg-code}
From condition (2) of Definition \ref{lnc-reg-g} and \ref{g-reg},
we can easily prove, using induction, that if
$\tilde{C}=\{d_{R}\in\mathbb F^k; R\in D^{**}\}$ is a linear code
of $\text{RG}(D^{**})$ and $\emptyset\neq\Theta\subseteq D^{**}$,
then $d_R\in\langle d_{R'};R'\in\Theta\rangle$ for all
$R\in\text{reg}(\Theta)$.
\end{rem}

Also by Definition \ref{g-reg}, it is easy to see that for any
subsets $\Theta_1,\Theta_2$ of $D^{**}$, if
$\Theta_1\subseteq\text{reg}(\Theta_2)$, then
$\text{reg}(\Theta_1)\subseteq\text{reg}(\Theta_2)$. In this
paper, we will always hold this fact as self-evident.

The following two lemmas are some other properties of super
region.

\begin{lem}\label{g-reg-cap}
Suppose $\Theta_1$ and $\Theta_2$ are two subsets of $D^{**}$.
Then
$\text{reg}(\Theta_1)\cap\text{reg}(\Theta_2)=\text{reg}(\Theta)$,
where
\begin{align}\label{eq-g-reg-cap-1}
\Theta=(\text{reg}(\Theta_1)\cap\Theta_2)
\cup(\text{reg}(\Theta_2)\cap\Theta_1).%\label{eq-g-reg-cap-1}
\end{align}
\end{lem}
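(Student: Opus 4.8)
The plan is to prove the two inclusions separately. Throughout, I abbreviate $A=\text{reg}(\Theta_1)$ and $B=\text{reg}(\Theta_2)$, so that $\Theta=(A\cap\Theta_2)\cup(B\cap\Theta_1)$ and the goal is to show $A\cap B=\text{reg}(\Theta)$. For the easy inclusion $\text{reg}(\Theta)\subseteq A\cap B$, I would first observe that $\Theta\subseteq A$ and $\Theta\subseteq B$: indeed $A\cap\Theta_2\subseteq A$ trivially, while $A\cap\Theta_2\subseteq\Theta_2\subseteq\text{reg}(\Theta_2)=B$; symmetrically $B\cap\Theta_1\subseteq\Theta_1\subseteq A$ and $B\cap\Theta_1\subseteq B$. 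Hence $\Theta\subseteq A=\text{reg}(\Theta_1)$ and $\Theta\subseteq B=\text{reg}(\Theta_2)$. Applying the monotonicity fact recorded just before the lemma (if $X\subseteq\text{reg}(Y)$ then $\text{reg}(X)\subseteq\text{reg}(Y)$) twice yields $\text{reg}(\Theta)\subseteq\text{reg}(\Theta_1)=A$ and $\text{reg}(\Theta)\subseteq\text{reg}(\Theta_2)=B$, so $\text{reg}(\Theta)\subseteq A\cap B$. This half is routine and uses nothing beyond the stated monotonicity.

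The substance is the reverse inclusion $A\cap B\subseteq\text{reg}(\Theta)$, which I would prove by strong induction on the topological index of regions. Recall that, since $\text{RG}(D^{**})$ is acyclic, the regions may be indexed $R_1,\dots,R_N$ so that every parent has a strictly smaller index than its child; this is exactly the ordering in which Algorithm 1 builds $\text{reg}(\cdot)$. The crucial tool I would extract from that construction is the membership characterization: for $R\notin\Theta$ one has $R\in\text{reg}(\Theta)$ \emph{if and only if} $\text{In}(R)\subseteq\text{reg}(\Theta)$ (and in that case $R$ is necessarily a non-source region, since source regions have empty $\text{In}(R)$ and enter a super region only through $\Theta$). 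The ``if'' direction is simply condition (2) of Definition \ref{g-reg}; the ``only if'' direction follows because, processing in topological order, $R$ can be added after $\Theta$ only when all its parents — all of smaller index, hence already finalized — are present.

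With this in hand, I would take $R\in A\cap B$ and assume inductively that every $R'\in A\cap B$ of smaller index lies in $\text{reg}(\Theta)$. I split into three exhaustive cases. If $R\in\Theta_1$, then since also $R\in B$ we get $R\in B\cap\Theta_1\subseteq\Theta\subseteq\text{reg}(\Theta)$; if $R\in\Theta_2$, then since $R\in A$ we get $R\in A\cap\Theta_2\subseteq\Theta\subseteq\text{reg}(\Theta)$. In the remaining case $R\notin\Theta_1\cup\Theta_2$, the membership characterization applied to $R\in A=\text{reg}(\Theta_1)$ gives $\text{In}(R)\subseteq A$, and applied to $R\in B=\text{reg}(\Theta_2)$ gives $\text{In}(R)\subseteq B$; hence $\emptyset\neq\text{In}(R)\subseteq A\cap B$. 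Each parent of $R$ has smaller index, so by the inductive hypothesis $\text{In}(R)\subseteq\text{reg}(\Theta)$, and condition (2) of Definition \ref{g-reg} forces $R\in\text{reg}(\Theta)$. This completes the induction and hence the inclusion, giving $A\cap B=\text{reg}(\Theta)$.

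The main obstacle is the reverse inclusion, and within it the ``only if'' half of the membership characterization: one must justify rigorously that a region lying in $\text{reg}(\Theta_1)$ but outside $\Theta_1$ has \emph{all} of its parents already in $\text{reg}(\Theta_1)$. This is precisely where acyclicity of $\text{RG}(D^{**})$ and the topological ordering underlying Algorithm 1 are needed to make the induction well-founded; the case analysis itself is then a mechanical bookkeeping of which of $\Theta_1,\Theta_2$ the region belongs to.
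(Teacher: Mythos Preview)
Your proof is correct and follows essentially the same route as the paper's: the easy inclusion via monotonicity, then the reverse inclusion by topological induction using that $R\in\text{reg}(\Theta_i)\setminus\Theta_i$ forces $\text{In}(R)\subseteq\text{reg}(\Theta_i)$. The paper indexes only the regions in $\text{reg}(\Theta_1)\cap\text{reg}(\Theta_2)$ rather than all of $D^{**}$ and invokes the ``only if'' direction of the membership characterization without comment (it is immediate from the recursive nature of Definition~\ref{g-reg}), but the argument is otherwise identical.
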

\begin{proof}
By (\ref{eq-g-reg-cap-1}), $\Theta\subseteq\text{reg}(\Theta_1)$
and $\Theta\subseteq\text{reg}(\Theta_2)$. So by Definition
\ref{g-reg}, $\text{reg}(\Theta)\subseteq\text{reg}(\Theta_1)
\cap\text{reg}(\Theta_2)$.

We still need to prove that
$\text{reg}(\Theta_1)\cap\text{reg}(\Theta_2)
\subseteq\text{reg}(\Theta)$. Since $\text{RG}(D^{**})$ is
acyclic, then regions in
$\text{reg}(\Theta_1)\cap\text{reg}(\Theta_2)$ can be sequentially
indexed as $\{R_1,R_2,\cdots,R_N\}$ such that $\ell <\ell'$ if
$R_{\ell}$ is a parent of $R_{\ell'}$. For each $R_i$, if
$R_i\notin\Theta$, then by (\ref{eq-g-reg-cap-1}),
$R_i\notin\Theta_1\cup\Theta_2$. So
$R_i\in\text{reg}^\circ(\Theta_1)\cap\text{reg}^\circ(\Theta_2)$,
and by Definition \ref{g-reg},
$\text{In}(R_i)\subseteq\text{reg}(\Theta_1)\cap\text{reg}(\Theta_2)$,
which implies that
$\text{In}(R_i)\subseteq\{R_1,\cdots,R_{i-1}\}$. Thus,
$R_1\in\Theta\subseteq\text{reg}(\Theta)$ and by Definition
\ref{g-reg}, $R_i\in\text{reg}(\Theta)$ for $i\in\{2,\cdots,N\}$.
So we have $\text{reg}(\Theta_1)\cap\text{reg}(\Theta_2)
\subseteq\text{reg}^\circ(\Theta)$.

By above discussion,
$\text{reg}(\Theta_1)\cap\text{reg}(\Theta_2)=\text{reg}(\Theta)$.
\end{proof}

For any $\{Q_1,\cdots,Q_\ell,Q\}\subseteq D$, if each $Q_i$ has a
message $Y_i\in\mathbb F$ and $Q_i\rightarrow Q$, then any linear
combination $\sum_{i=1}^\ell\lambda_iY_i$ can be transmitted to
$Q$ by a linear network code. Formally, we have the following
lemma.
\begin{lem}\label{sub-g-reg-code}
Suppose $\Lambda=\{Q_1,\cdots,Q_\ell\}\subseteq D^{**}$ and $Q\in
D^{**}\backslash\Lambda$ such that for each $Q_i\in\Lambda$, there
is a path $\mathcal P_i$ from $Q_i$ to $Q$. Let $\Omega\subseteq
D^{**}$ be such that $\mathcal
P_i\backslash\Lambda\subseteq\Omega, \forall
i\in\{1,\cdots,\ell\}$. Then for any
$\{d_{Q_1},\cdots,d_{Q_\ell}\}\subseteq\mathbb F^k$ and any
$d_0\in\langle d_{Q_1},\cdots,d_{Q_\ell}\rangle$, there is a code
$\tilde{C}_{\Omega}=\{d_R\in\mathbb F^k;R\in\Omega\}$ such that
$d_Q=d_0$ and $d_R\in\langle d_{R'};R'\in\text{In}(R)\rangle$ for
all $R\in\Omega$.
\end{lem}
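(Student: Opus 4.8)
The plan is to construct the code explicitly by routing the source vectors $d_{Q_1},\dots,d_{Q_\ell}$ toward $Q$ along a tree-shaped subgraph, so that every intermediate region carries a single well-defined combination and the combination accumulated at $Q$ is exactly $d_0$. First I would fix an expression $d_0=\sum_{i=1}^\ell\lambda_i d_{Q_i}$, which exists since $d_0\in\langle d_{Q_1},\dots,d_{Q_\ell}\rangle$, and set $H=\bigcup_{i=1}^\ell\mathcal P_i$, the union of the given paths viewed as a subgraph of $\text{RG}(D^{**})$. Note $H\subseteq\Omega\cup\Lambda$ because each $\mathcal P_i\setminus\Lambda\subseteq\Omega$, that $Q\in\Omega$, and that $Q$ is the unique sink of $H$: every region of $H$ sits at a non-final position of some path and hence has a successor in $H$ that again reaches $Q$. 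Crucially, by Remark \ref{snp-arg} the source regions have no parent, so no $Q_j$ can occur as an interior region of any path; the sources $Q_i$ are exactly the ``source'' vertices of $H$.

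Next I would collapse the possibly overlapping paths into an in-tree. Processing the regions of $H$ in reverse topological order, I assign to each $R\in H\setminus\{Q\}$ a single successor $\text{nxt}(R)$ chosen among its out-neighbours in $H$; since every such out-neighbour again reaches $Q$, iterating $\text{nxt}$ always terminates at $Q$, and the chosen edges form an arborescence directed toward $Q$ in which each $Q_i$ is a leaf. For $R\in H\cap\Omega$ let $\Sigma(R)$ be the set of indices $i$ such that $Q_i$ lies in the in-subtree rooted at $R$, and define $d_R=\sum_{i\in\Sigma(R)}\lambda_i d_{Q_i}$; for $R\in\Omega\setminus H$ set $d_R=0$. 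Since every path ends at $Q$, we have $\Sigma(Q)=\{1,\dots,\ell\}$ and hence $d_Q=d_0$.

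The verification of local validity is where the tree structure pays off. For $R\in H\cap\Omega$, its in-tree children $C(R)=\{R':\text{nxt}(R')=R\}$ are all parents of $R$ (each yields an edge $R'\to R$, so $C(R)\subseteq\text{In}(R)$), and because the arborescence gives each source a unique route to $Q$, the sets $\Sigma(R')$ for $R'\in C(R)$ partition $\Sigma(R)$. Every $Q_j\in C(R)$ is a leaf, so $\Sigma(Q_j)=\{j\}$ contributes $\lambda_j d_{Q_j}\in\langle d_{Q_j}\rangle$, while every $R'\in C(R)\cap\Omega$ contributes exactly $d_{R'}$; thus $d_R$ is a linear combination of $\{d_{R'}:R'\in\text{In}(R)\}$, as required, and each $R\in\Omega\setminus H$ trivially satisfies $0\in\langle d_{R'}:R'\in\text{In}(R)\rangle$. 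Only parents lying in $\Omega\cup\Lambda$ are ever used, so every vector appearing in these combinations is defined.

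The main obstacle is the overlap of the paths $\mathcal P_i$: if two paths merge and later diverge, a naive per-path routing would force a shared region to carry two incompatible combinations. My plan defuses this by never using more than the in-tree; since we deliver a single combination to the single target $Q$, funnelling all flow through $\text{nxt}$ makes each region's carried combination unambiguous. The one point that must be checked carefully to make this legitimate is that no source region lies in the interior of a path, so that the sources are genuine leaves and their forced vectors $d_{Q_j}$ are only ever used up to scalar multiples; this is exactly the parent-freeness of source regions recorded in Remark \ref{snp-arg}.
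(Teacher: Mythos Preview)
The paper omits the proof entirely (``This lemma is obvious and we omit its proof''), so there is no detailed argument to compare against; your in-tree routing is exactly the intended idea. The construction is sound except at one point: the justification that every $Q_j$ is a leaf of the arborescence. You invoke Remark~\ref{snp-arg}, but that remark concerns the \emph{source regions} $S_1,\dots,S_k$ of $\text{RG}(D^{**})$, whereas the $Q_j$ here are arbitrary regions of $D^{**}$ and need not be source regions at all. Nothing in the hypotheses rules out some $Q_j$ appearing in the interior of a path $\mathcal P_i$ with $i\neq j$; if it does, $Q_j$ acquires an in-neighbour in $H$ and may be chosen as $\text{nxt}(R')$ for some $R'$, giving $\Sigma(Q_j)\supsetneq\{j\}$. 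Your accounting at the tree-parent of $Q_j$ then requires the combination $\sum_{i\in\Sigma(Q_j)}\lambda_i d_{Q_i}$ to be a scalar multiple of the fixed value $d_{Q_j}$, which is false in general, and the local-validity check breaks.

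This is a real gap, but it is harmless for the paper's purposes. In the sole application (the proof of Theorem~\ref{lmd-solv}) the paths are supplied by Lemma~\ref{in-reg-lmd-t-s}\,3), which gives $\mathcal P=\{Q,R_1,\dots,R_K\}$ with $\{R_1,\dots,R_K\}\subseteq\Omega_j$; since $\Lambda_j\subseteq\Pi$ is disjoint from $\Omega_j$, the only $\Lambda$-vertex on each path is its starting point, and your leaf claim holds there by construction rather than by Remark~\ref{snp-arg}. To make the present lemma self-contained, simply add the hypothesis $\mathcal P_i\cap\Lambda=\{Q_i\}$ for each $i$; with that, your argument goes through verbatim.
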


This lemma is obvious and we omit its proof.

\subsection{Weak Decentralized Code On Super Region}
For any set $A$, a collection $\mathcal I=\{\Delta_1, \cdots,
\Delta_K\}$ of subsets of $A$ is called a partition of $A$ if
$\Delta_1, \cdots, \Delta_K$ are mutually disjoint and
$\bigcup_{i=1}^K\Delta_i=A$. Each subset $\Delta_i$ is called an
\emph{equivalent class} of $A$. If $a\in\Delta_i$, we denote
$\Delta_i=[a]$. Thus, for each $\Delta_i$, we can pick an
arbitrary $a_i\in\Delta_i$ and denote $\mathcal I=\{\Delta_1,
\cdots, \Delta_K\}=\{[a_1],\cdots,[a_K]\}$. The element $a_i$ is
called a representative element of $\Delta_i$. Note that if
$a_i,b_i\in\Delta_i$, then we have $\Delta_i=[a_i]=[b_i]$.

For any subset $\Omega\subseteq D^{**}$ and any collection
$\tilde{C}_\Omega=\{d_R\in\mathbb F^k; R\in\Omega\}$, we call
$\tilde{C}_\Omega$ a code on $\Omega$. In the following, we give
an approach to construct a code on a super region. Such a code has
some interesting property and is the basis of our code
construction for three-source sum-network.

\begin{defn}\label{R-closed}
Suppose $\{Q_1,Q_2\}\subseteq D^{**}$. A partition $\mathcal
I=\{\Delta_1,\Delta_2,\cdots,\Delta_K\}$ of
$\text{reg}(Q_{1},Q_{2})$ is said to be R-\emph{closed} if $K\geq
2$ and $\text{reg}(\Delta_j)=\Delta_j$ for all
$j\in\{1,\cdots,K\}$.
\end{defn}

In Definition \ref{R-closed}, it must be that $Q_1,Q_2$ belongs to
different equivalent classes. This is because if
$Q_1,Q_2\in\Delta_i$, then
$\text{reg}(Q_{1},Q_{2})\subseteq\text{reg}(\Delta_i)=\Delta_i$
and we have $K=1$, which contradicts to the condition that $K\geq
2$. Thus, by proper naming, we can always assume that
$Q_1\in\Delta_1$ and $Q_2\in\Delta_2$.
\begin{defn}[Weak Decentralized Code on $\text{reg}(Q_{1},Q_{2})$]
\label{dct-code} Suppose $\mathcal
I=\{\Delta_1,\Delta_2,\cdots,\Delta_K\}$ is an R-closed partition
of $\text{reg}(Q_{1},Q_{2})$ and $d_1,d_2,\cdots,d_K\in\mathbb
F^k$ are mutually linearly independent such that $d_j\in\langle
d_1,d_2\rangle$ for all $j\geq 3$. For each $j\in\{1,2,\cdots,K\}$
and $R\in\Delta_j$, let $d_R=d_j$. The code
$\tilde{C}_{Q_1,Q_2}=\{d_R; R\in\text{reg}(Q_{1},Q_{2})\}$ is
called an $\mathcal I$-\emph{weak decentralized code} on
$\text{reg}(Q_{1},Q_{2})$.
\end{defn}

As a simple result of linear algebra, if $|\mathbb F|\geq K-1$,
then we can always find a set of vectors
$\{d_1,d_2,\cdots,d_K\}\subseteq\mathbb F^k$ satisfying the
condition of Definition \ref{dct-code}. Thus, if $|\mathbb F|\geq
K-1$, then we can always construct an $\mathcal I$-weak
decentralized code on $\text{reg}(Q_{1},Q_{2})$.

%%%%%%%%%%%%%%%%%%%%%%%%%%%%%%%%%%%%%%%%%%%
\renewcommand\figurename{Fig}
\begin{figure}[htbp]
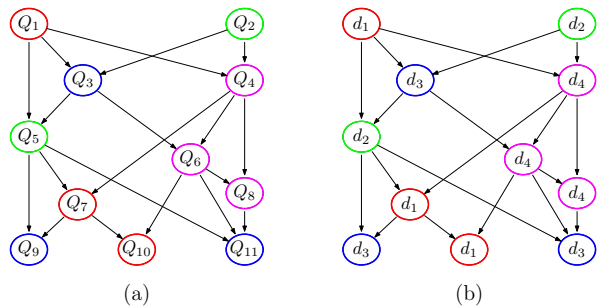

\begin{center}
\hspace{0.0cm}\includegraphics[height=4.0cm]{dct-code.1}
\hspace{0.9cm}\includegraphics[height=4.0cm]{dct-code.2}
\end{center}
\caption{An example of weak decentralized code: (a) depicts a
super region $\text{reg}(Q_{1},Q_{2})$ and (b) depicts a weak
decentralized code on $\text{reg}(Q_{1},Q_{2})$. In (a), let
$\Delta_1=\{Q_1,Q_7,Q_{10}\}, \Delta_2=\{Q_2,Q_5\},
\Delta_3=\{Q_3,Q_9,Q_{11}\}$ and $\Delta_4=\{Q_4,Q_6,Q_8\}$. Then
$\text{reg}(\Delta_j)=\Delta_j$ for all $j\in\{1,2,3,4\}$. Let
$d_1,d_2,d_3,d_4\in\mathbb F^k$ be mutually linearly independent
and $d_3,d_4\in\langle d_1,d_2\rangle$. Then the code illustrated
in Fig. \ref{fg-dct-code} (b) is a decentralized code on
$\text{reg}(Q_{1},Q_{2})$. }\label{fg-dct-code}
\end{figure}
%%%%%%%%%%%%%%%%%%%%%%%%%%%%%%%%%%%%%%%%%%%%%%

An example of weak decentralized code is given in Fig.
\ref{fg-dct-code}. Moreover, weak decentralized code has the
following property.
\begin{lem}\label{lem-weak-dec-code}
Let $\tilde{C}_{Q_1,Q_2}$ be as in Definition \ref{dct-code}. Then
$d_R\in\langle d_{R'}; R'\in\text{In}(R)\rangle$ for all
$R\in\text{reg}^\circ(Q_{1},Q_{2})$.
\end{lem}
\begin{proof}
Suppose $R\in\text{reg}^\circ(Q_{1},Q_{2})$. Then by Definition
\ref{g-reg}, $\text{In}(R)\subseteq\text{reg}(Q_{1},Q_{2})$. We
have the following two cases:

Case 1: $\text{In}(R)\subseteq\Delta_j$ for some
$j\in\{1,2,\cdots,K\}$. Then by Definition \ref{g-reg} and
\ref{dct-code}, we have $R\in\text{reg}(\Delta_j)=\Delta_j$. Again
by Definition \ref{dct-code}, we have $d_R=d_{R'}=d_j$ for all
$R'\in\text{In}(R)$. So $d_R\in\langle d_{R'};
R'\in\text{In}(R)\rangle$.

Case 2: $\text{In}(R)\nsubseteq\Delta_j$ for all
$j\in\{1,2,\cdots,K\}$. Since $R$ has at least two parents
(Definition \ref{BRD}), then we can assume
$R'_1\in\text{In}(R)\cap\Delta_{\ell_1}$ and
$R'_2\in\text{In}(R)\cap\Delta_{\ell_2}$ for some
$\{\ell_1,\ell_2\}\subseteq\{1,2,\cdots,K\}$. By Definition
\ref{dct-code}, $d_{R'_1}=d_{\ell_1}$, $d_{R'_2}=d_{\ell_2}$ and
$d_R=d_\ell$ for some $\ell\in\{1,2,\cdots,K\}$. Also by
Definition \ref{dct-code}, $d_\ell, d_{\ell_1},
d_{\ell_2}\in\langle d_1,d_2\rangle$ and $d_{\ell_1}, d_{\ell_2}$
are linearly independent. So $d_R\in\langle d_{1},
d_{2}\rangle=\langle d_{\ell_1}, d_{\ell_2}\rangle=\langle
d_{R'_1},d_{R'_2}\rangle\subseteq\text{In}(R)\rangle$.
\end{proof}

A special case of weak decentralized code is that each $\Delta_j$
contains a single element, i.e., $\text{reg}(Q_{1},Q_{2})=\{Q_1$,
$Q_2,\cdots,Q_K\}$ and $\Delta_j=\{Q_j\}, \forall
j\in\{1,2,\cdots,K\}$. In this case, $d_R$ and $d_{R'}$ are
linearly independent for all $\{R,
R'\}\subseteq\text{reg}(Q_{1},Q_{2})$. Such code is called
decentralized code \cite{Fragouli06}.

\section{Decomposition of the Basic Region Graph}
Throughout this section, we assume $G$ is a $3$s$/n$t sum-network.
By Theorem \ref{solv-eqvlt}, $G$ is solvable if and only if the
the basic region graph $\text{RG}(D^{**})$ is feasible. Thus, to
study the network coding problem of $G$, it is sufficient to
consider coding on $\text{RG}(D^{**})$. By Remark
\ref{num-s-t-reg}, $\text{RG}(D^{**})$ has exactly three source
regions and at most $n$ terminal regions. Without loss of
generality, we assume $\text{RG}(D^{**})$ has exactly $n$ terminal
regions. Recall that $S_i~(i\in\{1,2,3\})$ denote the $X_i$ source
region and $\{T_j; j\in[n]\}$ denote the set of $n$ terminal
regions.

For each $i\in\{1,2,3\}$, by Lemma \ref{g-reg-cap}, we have
\begin{align}
\text{reg}(S_{i},S_{i_1})\cap\text{reg}(S_{i},S_{i_2})=\{S_i\}
\label{eq-nt-1}
\end{align}
where $\{i_1,i_2\}=\{1,2,3\}\backslash\{i\}$. So by Definition
\ref{g-reg}, we have
\begin{align}
\text{reg}^\circ(S_{i},S_{i_1})\cap\text{reg}^\circ(S_{i},S_{i_2})
=\emptyset.\label{eq-nt-2}
\end{align}
Thus, the three subsets $\text{reg}^\circ(S_{1},S_{2})$,
$\text{reg}^\circ(S_{1},S_{3})$ and
$\text{reg}^\circ(S_{2},S_{3})$ are mutually disjoint.

To design codes on $\text{RG}(D^{**})$, we find it convenient to
decompose $\text{RG}(D^{**})$ into mutually disjoint parts
according to the connection condition of the source-terminal
pairs. In Subsection A, we will give a method to decompose
$\text{RG}(D^{**})$ and show some useful properties of such
decomposition.

\subsection{Decomposition of $\text{RG}(D^{**})$}
We first specify some subsets of $D^{**}$, which leads to a
decomposition of $\text{RG}(D^{**})$ and will play an import role
in our study.
\begin{defn}\label{lmd-omd}
We specify some subsets of $D^{**}$ as follows.
\begin{itemize}
    \item[\textbf{(1)}] $\Pi\triangleq\text{reg}(S_1,S_2)
    \cup\text{reg}(S_1,S_3)\cup\text{reg}(S_2,S_3)$.
    \item[\textbf{(2)}] For $I\subseteq[n]$,
    $\Omega_I$ is the set of all $R\in D^{**}\backslash\Pi$ such
    that $R\rightarrow T_{j}$ for all $j\in I$ and
    $R\nrightarrow T_{j'}$ for all $j'\in[n]\setminus I$.
    \item[\textbf{(3)}] $\Lambda_I$ is the set of all $Q\in\Pi$
    such that $Q$ has a child $R\in\Omega_I$.
\end{itemize}
\end{defn}
If the subset $I=\{i_1,\cdots, i_\ell\}$, we also denote
$\Omega_I=\Omega_{i_1,\cdots,i_\ell}$ and $\Lambda_I=\Lambda_{i_1,
\cdots, i_\ell}$.

\vspace{0.05in} \begin{exam}\label{eg-lmd-omd} We show some
examples of the subsets in Definition \ref{lmd-omd} for the region
graphs in Fig. \ref{fg-t-s}.

For the region graph in Fig. \ref{fg-t-s} (a). By Definition
\ref{g-reg}, we have $\text{reg}(S_1,S_2)=\{S_1,S_2,R_1,R_4\}$,
$\text{reg}(S_1,S_3)=\{S_1,S_3,R_2\}$ and
$\text{reg}(S_2,S_3)=\{S_2,S_3,R_3\}$. So by (1) of Definition
\ref{lmd-omd}, $\Pi=\{S_1,S_2,S_3,R_1,R_2,R_3,R_4\}$. By (2) of
Definition \ref{lmd-omd}, $\Omega_i=\{T_i\}$ for $i\in\{1,2,3\}$,
$\Omega_{2,3}=\{Q\}$ and
$\Omega_{1,2}=\Omega_{1,3}=\Omega_{1,2,3}=\emptyset$. By (3) of
Definition \ref{lmd-omd}, we have $\Lambda_1=\{R_1,R_3\}$,
$\Lambda_2=\{R_4\}$, $\Lambda_3=\{R_3\}$,
$\Lambda_{2,3}=\{R_2,R_3\}$.

For the region graph in Fig. \ref{fg-t-s} (b), by Definition
\ref{g-reg}, $\text{reg}(S_1,S_2)=\{S_1,S_2,R_1,R_4\}$,
$\text{reg}(S_1,S_3)=\{S_1,S_3,R_2\}$ and
$\text{reg}(S_2,S_3)=\{S_2,S_3,R_3, R_5\}$. So
$\Pi=\{S_1,S_2,S_3,R_1$, $R_2,R_3,R_4,R_5\}$. We can further check
that $\Omega_1=\{T_1\}$, $\Omega_2=\{Q_1, Q_3, T_2\}$,
$\Omega_3=\{Q_2, Q_4, T_3\}$ and
$\Omega_{1,2}=\Omega_{1,3}=\Omega_{2,3}=\Omega_{1,2,3}=\emptyset$.
By (3) of Definition \ref{lmd-omd}, we have
$\Lambda_1=\{S_1,R_5\}$, $\Lambda_2=\{R_2, R_4\}$ and
$\Lambda_3=\{R_2, R_3, R_5\}$.
\end{exam}

\begin{thm}\label{omg-intc}
The collection $\{\Pi\}\cup\{\Omega_I, I\subseteq[n]\}$ is a
partition of $D^{**}$ and it can be obtained in time
$O(|D^{**}|)$.
\end{thm}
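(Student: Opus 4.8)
first, that the collection $\{\Pi\}\cup\{\Omega_I; I\subseteq[n]\}$ actually partitions $D^{**}$ (disjointness plus covering), and second, that this partition is computable in time $O(|D^{**}|)$. These are essentially independent concerns, so I would treat them in that order.

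For the \textbf{partition claim}, the covering part is the easier half. Take any $R\in D^{**}$. Either $R\in\Pi$, in which case it is accounted for, or $R\in D^{**}\setminus\Pi$. In the latter case, by Assumption 1 every region reaches some terminal region, so the set $I=\{j\in[n]: R\rightarrow T_j\}$ is nonempty and well-defined, and by the very Definition~\ref{lmd-omd}(2) of $\Omega_I$ we have $R\in\Omega_I$. Hence every region lies in $\Pi$ or in exactly one of the $\Omega_I$. For \textbf{disjointness}, the sets $\Omega_I$ are disjoint from $\Pi$ by construction (each $\Omega_I\subseteq D^{**}\setminus\Pi$), and the $\Omega_I$ are pairwise disjoint because the index set $I$ attached to a region $R\in D^{**}\setminus\Pi$ is uniquely determined: it is precisely the set of terminal indices reachable from $R$. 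If $R\in\Omega_I\cap\Omega_{I'}$ with $I\neq I'$, then without loss of generality some $j\in I\setminus I'$ would force both $R\rightarrow T_j$ (from membership in $\Omega_I$) and $R\nrightarrow T_j$ (from membership in $\Omega_{I'}$), a contradiction. This is the crux of the argument, and I expect it to go through cleanly once the reachability characterization of $I$ is made explicit.

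For the \textbf{complexity claim}, I would describe a concrete computation. First compute $\Pi$ by invoking Algorithm~1 three times, once for each of $\text{reg}(S_1,S_2)$, $\text{reg}(S_1,S_3)$, $\text{reg}(S_2,S_3)$, and taking the union; each call runs in time $O(|D^{**}|)$ by the remark following Algorithm~1, so $\Pi$ is obtained in $O(|D^{**}|)$ time. Then, for each $R\in D^{**}\setminus\Pi$, I need the reachability set $I(R)=\{j: R\rightarrow T_j\}$. Since $\text{RG}(D^{**})$ is acyclic and the terminal regions have no children (Assumption 1), I can process regions in reverse topological order: each $T_j$ contributes index $j$ to itself, and each non-terminal region inherits $I(R)=\bigcup_{R'\text{ child of }R}I(R')$. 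A single reverse sweep then assigns to every region its index set, after which each $R\in D^{**}\setminus\Pi$ is placed into the bucket $\Omega_{I(R)}$.

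The \textbf{main obstacle} is the complexity bound rather than the partition statement: a naive propagation of index sets $I(R)$ could cost $O(|D^{**}|^2)$ or worse if each union is recomputed from scratch, since the sets can have size up to $n$. To achieve genuinely linear time $O(|D^{**}|)$ one must argue that the total work is dominated by the number of edges of $\text{RG}(D^{**})$ rather than by manipulating the index sets as explicit objects — for instance, by representing reachability compactly and amortizing the propagation over the edge set, or by observing that the relevant bookkeeping touches each parent-child edge a constant number of times. I would flag this as the step requiring care, and note that the stated $O(|D^{**}|)$ bound implicitly treats the graph $\text{RG}(D^{**})$ as sparse, so that $|\mathcal E_{D^{**}}|=O(|D^{**}|)$, which is what makes the single topological sweep run in the claimed time.
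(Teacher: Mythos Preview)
Your proposal is correct and follows essentially the same approach as the paper: the partition claim is handled by noting that the reachability set $I(R)=\{j:R\rightarrow T_j\}$ is uniquely determined for each $R\in D^{**}\setminus\Pi$, and the complexity claim is handled by computing $\Pi$ via three calls to Algorithm~1 followed by a reverse topological sweep that propagates terminal labels from children to parents (the paper packages this sweep as its Algorithm~2). Your caveat about the complexity bound is well taken---the paper's Algorithm~2 contains an inner loop over $j\in[n]$ and its stated $O(|D^{**}|)$ bound likewise silently treats $n$ as constant (or the edge set as sparse), so you are being more careful here than the paper itself.
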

\begin{proof}
By Definition \ref{lmd-omd}, it is easy to see that $D^{**}=
\Pi\bigcup(\bigcup_{I\subseteq[n]}\Omega_I)$ and the subsets $\Pi$
and $\Omega_I, I\subseteq[n],$ are mutually disjoint. So
$\{\Pi\}\cup\{\Omega_I, I\subseteq[n]\}$ is a partition of
$D^{**}$. We shall prove that it can be obtained in time
$O(|D^{**}|)$.

Since for each $\{i_1,i_2\}\subseteq\{1,2,3\}$, the super region
$\text{reg}(S_{i_1},S_{i_2})$ can be found by Algorithm 1, so the
subset $\Pi$ can be found by Algorithm 1 in time $O(|D^{**}|)$.

Note that $\text{RG}(D^{**})$ is acyclic and the terminal regions
have no child (Assumption 1), then regions in
$D^{**}\backslash\Pi$ can be sequentially indexed as
$D^{**}\backslash\Pi=\{R_1,R_2,R_3,\cdots,R_N\}$ such that
$R_{N-n+j}=T_j$ for $j\in[n]$ and $\ell <\ell'$ if $R_{\ell}$ is a
parent of $R_{\ell'}$. Then the collection $\{\Omega_I,
I\subseteq[n]\}$ can be found by tracing the parents back for all
terminals. See the following Algorithm 2:

\vspace{0.2cm}
\begin{center}
\setlength{\unitlength}{1mm}
\begin{picture}(100,40)(-2,-2)
\put(-1,-3){\line(1,0){97}} \put(-1,-3){\line(0,1){43}}

\put(96,40){\line(-1,0){97}} \put(96,40){\line(0,-1){43}}

\put(0,35){\textbf{Algorithm 2}: Labelling Algorithm
$(\text{RG}(D^{**}))$:}

\put(5,30){$j\leftarrow$ from $1$ to $n$}

\put(10,25){Label $R_{N-n+j}$ with $j$;}

\put(5,20){$\ell\leftarrow$ from $N-n$ to $1$}

\put(10,15){$j\leftarrow$ from $1$ to $n$}

\put(10,10){\textbf{if} $R_\ell$ has a child $R_{\ell'}$ such that
$R_{\ell'}$ is labelled}

\put(10,5){with $j$ \textbf{then}}

\put(15,0){Label $R_{\ell}$ with $j$;}
\end{picture}
\end{center}

Note that $R\rightarrow T_j$ if and only if $R$ is labelled with
$j$ by Algorithm 2.For each $R\in D^{**}\backslash\Pi$, let
$I_R=\{j\in[n]; R \text{~is labelled with~} j\}$. Then for each
$I\subseteq[n]$, we have $\Omega_I=\{R; I_R=I\}$. Clearly, the
time complexity of Algorithm 2 is $O(|D^{**}|)$.
\end{proof}

Decomposing $D^{**}$ into the subsets $\Pi$ and $\Omega_I,
I\subseteq[n],$ will be used to construct linear solution of
$\text{RG}(D^{**})$: We construct a code $\tilde{C}_\Pi=\{d_R;
R\in\Pi\}$ on $\Pi$ and a code $\tilde{C}_{\Omega_I}=\{d_R;
R\in\Omega_I\}$ on $\Omega_I$ for each $I\subseteq[n]$ such that
$\Omega_I\neq\emptyset$. Then we can potentially obtain a linear
solution $\tilde{C}$ of $\text{RG}(D^{**})$ by letting $\tilde{C}=
\tilde{C}_\Pi\textstyle\bigcup\left(\bigcup_{\emptyset\neq
I\subseteq[n]} \tilde{C}_{\Omega_I}\right)$. By Theorem
\ref{omg-intc}, $\Pi$ and all subsets $\Omega_I, I\subseteq[n],$
are mutually disjoint. So the code $\tilde{C}$ is well defined.
This method will be used to prove Theorem \ref{lmd-solv}.

\begin{lem}\label{in-omg-intc}
Let $R\in D^{**}$. Then $R\in D^{**}\backslash\Pi$ if and only if
$S_{i}\rightarrow R$ for all $i\in\{1,2,3\}$.
\end{lem}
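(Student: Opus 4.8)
The plan is to prove the logically equivalent statement about the complement: $R\in\Pi$ if and only if $S_k\nrightarrow R$ for some $k\in\{1,2,3\}$. Since $\Pi=\text{reg}(S_1,S_2)\cup\text{reg}(S_1,S_3)\cup\text{reg}(S_2,S_3)$, it suffices to establish the set identity
\begin{align}
\text{reg}(S_i,S_j)=\{R\in D^{**}: S_k\nrightarrow R\}\label{eq-set-id}
\end{align}
for each choice of $\{i,j,k\}=\{1,2,3\}$. Taking the union over the three pairs $\{i,j\}$ (whose complementary index $k$ then ranges over all of $\{1,2,3\}$) shows that $R\in\Pi$ exactly when at least one source fails to reach $R$, and the negation of this is precisely the asserted condition $S_i\rightarrow R$ for all $i$. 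Throughout, the main tool is the following propagation fact: if $R$ is a non-source region, then $S_\ell\rightarrow R$ if and only if $S_\ell\rightarrow R'$ for some parent $R'\in\text{In}(R)$, since any path arriving at the non-source region $R$ must enter it through one of its parents.

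For the inclusion $\text{reg}(S_i,S_j)\subseteq\{R:S_k\nrightarrow R\}$ I would induct on a fixed topological index of $\text{RG}(D^{**})$, which exists since the graph is acyclic and in which every parent precedes its child. A source region lies in $\text{reg}(S_i,S_j)$ only via clause (1) of Definition \ref{g-reg}, so it must be $S_i$ or $S_j$; as a source region has no parent, the only region reaching it is itself, and since $k\neq i,j$ we get $S_k\nrightarrow R$. A non-source region $R\in\text{reg}(S_i,S_j)$ is placed there by clause (2), so $\text{In}(R)\subseteq\text{reg}(S_i,S_j)$; each parent precedes $R$, so by the induction hypothesis $S_k\nrightarrow R'$ for every $R'\in\text{In}(R)$, and the propagation fact then forces $S_k\nrightarrow R$.

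For the reverse inclusion $\{R:S_k\nrightarrow R\}\subseteq\text{reg}(S_i,S_j)$ I again induct on the topological index. Suppose $S_k\nrightarrow R$. If $R$ is a source region $S_\ell$, then $\ell\neq k$ (otherwise $S_k\rightarrow S_k=R$), so $R\in\{S_i,S_j\}\subseteq\text{reg}(S_i,S_j)$. If $R$ is a non-source region, then no parent $R'$ can satisfy $S_k\rightarrow R'$, for that would give $S_k\rightarrow R'\rightarrow R$; hence $S_k\nrightarrow R'$ for every $R'\in\text{In}(R)$, and since each parent precedes $R$, the induction hypothesis yields $\text{In}(R)\subseteq\text{reg}(S_i,S_j)$, so clause (2) of Definition \ref{g-reg} places $R$ in $\text{reg}(S_i,S_j)$. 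Together with the first inclusion this proves (\ref{eq-set-id}), and hence the lemma.

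The step demanding the most care is the treatment of source regions in the two base cases, and this is the only genuinely delicate point. Clause (2) of Definition \ref{g-reg} must be read as applying only to non-source regions, so that a source region belongs to $\text{reg}(\Theta)$ precisely when it lies in $\Theta$; otherwise the vacuously true inclusion $\text{In}(S_k)=\emptyset\subseteq\text{reg}(S_i,S_j)$ would wrongly insert the third source $S_k$ into every super region, contradicting the identity $\text{reg}(S_i,S_{i_1})\cap\text{reg}(S_i,S_{i_2})=\{S_i\}$ recorded in (\ref{eq-nt-1}). Making this convention explicit is exactly what guarantees $S_k\nrightarrow R$ in the base case of the forward inclusion; once it is in place, both inductive steps reduce to routine applications of the propagation fact.
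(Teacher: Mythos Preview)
Your proof is correct. You prove the sharper set identity $\text{reg}(S_i,S_j)=\{R\in D^{**}:S_k\nrightarrow R\}$ for each $\{i,j,k\}=\{1,2,3\}$ by two-sided induction on a topological order, and then take unions. The paper instead argues the two directions of the lemma directly: for $R\notin\Pi$ it repeatedly picks a parent outside $\text{reg}(S_{i_1},S_{i_2})$ (which exists because $R\notin\text{reg}(S_{i_1},S_{i_2})$) and backtracks until hitting $S_i$; for the converse it walks forward along a given path from $S_1$ and shows inductively that no region on that path can lie in $\text{reg}(S_2,S_3)$. The underlying mechanism is the same---reachability from $S_k$ propagates through parents, and membership in $\text{reg}(\Theta)$ is governed by the parent-closure rule of Definition~\ref{g-reg}---but your formulation isolates a clean characterization of each individual super region rather than only of their union $\Pi$, and your explicit discussion of why clause~(2) should not apply vacuously to source regions is more careful than the paper's one-line appeal to Remark~\ref{snp-arg}.
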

\begin{proof}
Suppose $R\in D^{**}\backslash\Pi$. By (1) of Definition
\ref{lmd-omd}, $R\neq S_i\in\Pi$ for all $i\in\{1,2,3\}$. Let
$\{i_1,i_2\}=\{1,2,3\}\backslash\{i\}$. Again by (1) of Definition
\ref{lmd-omd}, $R\notin\text{reg}(S_{i_1},S_{i_2})\subseteq\Pi$.
By Definition \ref{g-reg}, $R$ has a parent, say $R_1$, such that
$R_1\notin\text{reg}(S_{i_1},S_{i_2})$. If $R_1\neq S_{i}$, then
similarly, $R_1$ has a parent $R_2$ such that
$R_2\notin\text{reg}(S_{i_1},S_{i_2})$. Since $\text{RG}(D^{**})$
is a finite graph, we can always find a path
$\{R_K,R_{K-1},\cdots,R_1,R\}$ such that $R_K=S_{i}$. Thus, we
have $S_{i}\rightarrow R$ for all $i\in\{1,2,3\}$.

Conversely, suppose $S_{i}\rightarrow R$ for all $i\in\{1,2,3\}$.
Then there is a path $\{S_{1},R_1,\cdots,R_{K-1},R_K=R\}$. Since
$S_1$ has no parent (Remark \ref{snp-arg}), then by Definition
\ref{g-reg}, $S_{1}\notin\text{reg}(S_{2},S_{3})$. By Definition
\ref{g-reg} and induction, $R_i\notin\text{reg}(S_{2},S_{3})$,
$i=1,\cdots,K-1,K$. So $R=R_K\notin\text{reg}(S_{2},S_{3})$.
Similarly, we can prove $R\notin\text{reg}(S_{1},S_{2})$ and
$R\notin\text{reg}(S_{1},S_{3})$. So by (1) of Definition
\ref{lmd-omd}, $R\notin\Pi$. Thus, $R\in D^{**}\backslash\Pi$.
\end{proof}

Clearly, if there is an $S_i$ and a $T_j$ such that
$S_{i}\nrightarrow T_j$, then the sum can't be transmitted to
$T_j$ and $\text{RG}(D^{**})$ is unsolvable. So we assume that
$S_{i}\rightarrow T_j$ for all $i\in\{1,2,3\}$ and $j\in[n]$. Then
by Lemma \ref{in-omg-intc}, $T_j\in D^{**}\backslash\Pi$.
Moreover, by Assumption 1, $T_j$ has no child. So $T_j\nrightarrow
T_{j'}$ for all $j'\neq j$. Thus, by (2) of Definition
\ref{lmd-omd}, we have the following remark.
\begin{rem}\label{T-in-omg}
For each $j\in[n]$, we have $T_j\in\Omega_j$.
\end{rem}

\subsection{Terminal-separable Region Graph}
In this subsection, we define a class of special region graph
called \emph{terminal-separable region graph} and prove that for
such region graph, the feasibility is determined by code on $\Pi$.

\begin{defn}[Terminal-separable Region Graph]\label{t-sep} The
region graph $\text{RG}(D^{**})$ is said to be
\emph{terminal-separable} if $\Omega_{I}=\emptyset$ for all
$I\subseteq[n]$ of size $|I|>1$.
\end{defn}

By Theorem \ref{omg-intc}, it is $O(|D^{**}|)$ time complexity to
determine whether $\text{RG}(D^{**})$ is terminal-separable.

According to Example \ref{eg-lmd-omd}, the region graph in Fig.
\ref{fg-t-s} (b) is terminal-separable. However, the region graph
in Fig. \ref{fg-t-s} (a) is not terminal-separable because
$\Omega_{2,3}=\{Q\}\neq\emptyset$.

In general, if a region graph is not terminal-separable, then it
can be viewed as a terminal-separable region graph with fewer
terminal regions. If the new one is feasible then the original one
is feasible. However, if the new one is infeasible then the
original one is not necessarily infeasible. For example, for the
graph in Fig. \ref{fg-t-s} (a), we can view $T_1$ and $Q$ as two
terminal regions and construct a linear code to transmit the sum
$\sum_{i=1}^3X_i$ to $T_1$ and $Q$. Then by Lemma \ref{P-C-fsb},
the sum can be transmitted from $Q$ to $T_2$ and $T_3$. (See
Example \ref{ex-P-C-fsb}.)

\begin{lem}\label{in-reg-lmd-t-s}
Suppose $\text{RG}(D^{**})$ is terminal-separable. Then for each
$j\in[n]$, the following hold.
\begin{itemize}
 \item [1)] $T_j\in\Omega_j\subseteq\text{reg}^\circ(\Lambda_j)$.
 \item [2)] $|\Lambda_j|\geq 2$ and $\Lambda_j\nsubseteq
 \text{reg}(S_{i_1},S_{i_2}), \forall\{i_1,i_2\}\subseteq\{1,2,3\}$.
 \item [3)] For each $Q\in\Lambda_j$, there is a path
 $\{Q, R_1, \cdots,R_K\}$ such that $\{R_1,\cdots,R_K\}\subseteq
 \Omega_j$ and $R_K=T_j$.
\end{itemize}
\end{lem}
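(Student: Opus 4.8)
The plan is to prove the three claims in order, exploiting the assumptions that $\text{RG}(D^{**})$ is terminal-separable (so $\Omega_I=\emptyset$ whenever $|I|>1$) together with Assumption~1 and the standing assumption that $S_i\rightarrow T_j$ for all $i\in\{1,2,3\}$ and $j\in[n]$. For claim~1), I would first recall from Remark~\ref{T-in-omg} that $T_j\in\Omega_j$, which gives the first inclusion. For the inclusion $\Omega_j\subseteq\text{reg}^\circ(\Lambda_j)$, the key observation is that by (3) of Definition~\ref{lmd-omd}, $\Lambda_j$ collects exactly those $Q\in\Pi$ having a child in $\Omega_j$; since every $R\in\Omega_j\subseteq D^{**}\setminus\Pi$ satisfies $S_i\rightarrow R$ for all $i$ (Lemma~\ref{in-omg-intc}), each such $R$ can be reached from the source regions by paths whose last $\Pi$-region is, by definition, a member of $\Lambda_j$. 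I would trace such a path backward from $R$: walking up through parents, the first time we leave $\Omega_j$ we enter $\Pi$ at a region that is a parent of an $\Omega_j$-region, hence lies in $\Lambda_j$. An induction on the topological ordering of $\Omega_j$ then shows $\text{In}(R)\subseteq\text{reg}(\Lambda_j)$ for each $R\in\Omega_j$, so $R\in\text{reg}(\Lambda_j)$ by Definition~\ref{g-reg}; since $R\notin\Lambda_j$ (as $\Lambda_j\subseteq\Pi$ and $R\notin\Pi$), in fact $R\in\text{reg}^\circ(\Lambda_j)$.

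For claim~3), which I would actually prove before 2) since 2) leans on it, I would fix $Q\in\Lambda_j$ and use that $Q$ has a child $R_1\in\Omega_j$. The point is that from $R_1$ I can continue inside $\Omega_j$ all the way to $T_j$. Here terminal-separability is essential: since $R_1\rightarrow T_j$ and $R_1\nrightarrow T_{j'}$ for $j'\neq j$ (that is precisely what membership in $\Omega_j$, a singleton-index set, means), any path from $R_1$ toward $T_j$ stays within regions labelled exactly $\{j\}$ — a region reachable from $R_1$ that also reached some $T_{j'}$ with $j'\neq j$ would have to sit in some $\Omega_I$ with $|I|\geq 2$, forcing $\Omega_I\neq\emptyset$, contradicting Definition~\ref{t-sep}. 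I would make this precise by choosing a path $\{R_1,\dots,R_K=T_j\}$ realizing $R_1\rightarrow T_j$ and verifying each intermediate region lies in $\Omega_j$; prepending $Q$ yields the desired path $\{Q,R_1,\dots,R_K\}$.

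For claim~2), the bound $|\Lambda_j|\geq 2$ and the non-containment $\Lambda_j\nsubseteq\text{reg}(S_{i_1},S_{i_2})$ I would derive together. Suppose for contradiction that $\Lambda_j\subseteq\text{reg}(S_{i_1},S_{i_2})$ for some pair; then by the monotonicity fact $\text{reg}(\Lambda_j)\subseteq\text{reg}(S_{i_1},S_{i_2})\subseteq\Pi$, whence by claim~1) $T_j\in\text{reg}^\circ(\Lambda_j)\subseteq\Pi$, contradicting $T_j\in\Omega_j\subseteq D^{**}\setminus\Pi$. This simultaneously rules out $|\Lambda_j|=1$: a single region $Q$ is always contained in some $\text{reg}(S_{i_1},S_{i_2})$ (indeed each $Q\in\Pi$ lies in at least one of the three source super-regions by definition of $\Pi$), and $\text{reg}(\{Q\})\subseteq\text{reg}(S_{i_1},S_{i_2})$ would again trap $T_j$ inside $\Pi$. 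So $|\Lambda_j|\geq 2$ and no single source-pair super-region can swallow $\Lambda_j$.

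I expect the main obstacle to be the backward-tracing argument in claim~1), namely establishing rigorously that every region in $\Omega_j$ lies in $\text{reg}^\circ(\Lambda_j)$ rather than merely being connected from $\Lambda_j$. The subtlety is that $\text{reg}(\cdot)$ is the \emph{super}-region closure (Definition~\ref{g-reg}), which requires \emph{all} parents of a region to already be in the set, not just one — so a loose connectivity argument is insufficient, and I must run the induction carefully along the topological order, showing at each step that a region $R\in\Omega_j$ has its \emph{entire} parent set $\text{In}(R)$ either in $\Lambda_j$ or already in the closure. The clean way is to prove that every parent of an $\Omega_j$-region lies in $\Lambda_j\cup\Omega_j$ (a parent in $\Pi$ is by definition in $\Lambda_j$; a parent outside $\Pi$ must itself reach $T_j$ and, by terminal-separability, reach only $T_j$, so it is in $\Omega_j$), which makes the inductive closure argument go through cleanly.
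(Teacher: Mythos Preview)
Your proposal is correct and follows essentially the same route as the paper: for 1) you reduce to showing $\text{In}(R)\subseteq\Lambda_j\cup\Omega_j$ for every $R\in\Omega_j$ and then close up inductively along a topological order, for 3) you pick a child $R_1\in\Omega_j$ of $Q$ and argue that any $R_1$--$T_j$ path stays inside $\Omega_j$ by terminal-separability, and for 2) you derive a contradiction from $\text{reg}(\Lambda_j)\subseteq\text{reg}(S_{i_1},S_{i_2})\subseteq\Pi$ via 1). One small remark: your aside that ``2) leans on 3)'' is not actually needed---2) depends only on 1), as your own argument shows---so the reordering is harmless but unnecessary.
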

\begin{proof}
1) By Remark \ref{T-in-omg}, we have $T_j\in\Omega_j$.

Since $\text{RG}(D^{**})$ is terminal-separable, then for any
$R\in\Omega_j$, by Definition \ref{t-sep} and (2), (3) of
Definition \ref{lmd-omd}, we have
$\text{In}(R)\subseteq\Lambda_j\cup\Omega_j$. Denote
$\Omega_j=\{R_1,R_2,\cdots,R_N\}$ such that $\ell <\ell'$ if
$R_{\ell}$ is a parent of $R_{\ell'}$. Then we have
$\text{In}(R_1)\subseteq\Lambda_j$. By Definition \ref{g-reg},
$R_i\in\text{reg}^\circ(\Lambda_j)$. Recursively, we have
$\text{In}(R_i)\subseteq\Lambda_j\cup\{R_1,\cdots,R_{i-1}\}$ and
$R_i\in\text{reg}^\circ(\Lambda_j), i=2,\cdots,N$. So
$\Omega_j\subseteq\text{reg}^\circ(\Lambda_j)$.

2) If $\Lambda_j\subseteq\text{reg}(S_{i_1},S_{i_2})$, then by 1)
and Definition \ref{g-reg},
$\Omega_j\subseteq\text{reg}^\circ(\Lambda_j)\subseteq
\text{reg}(S_{i_1},S_{i_2})$, which contradicts to (2) of
Definition \ref{lmd-omd}. So
$\Lambda_j\nsubseteq\text{reg}(S_{i_1},S_{i_2}),\forall
\{i_1,i_2\}\subseteq\{1,2,3\}$.

If $|\Lambda_j|=1$, say $\Lambda_j=\{Q\}$, then by (1), (3) of
Definition \ref{lmd-omd}, we have
$Q\in\text{reg}(S_{i_1},S_{i_2})$ for some
$\{i_1,i_2\}\subseteq\{1,2,3\}$. Thus,
$\Lambda_j=\{Q\}\subseteq\text{reg}(S_{i_1},S_{i_2})$, which
contradicts to the proved result that
$\Lambda_j\nsubseteq\text{reg}(S_{i_1},S_{i_2})$. So
$|\Lambda_j|\geq 2$.

3) For each $Q\in\Lambda_j$, by (3) of Definition \ref{lmd-omd},
$Q$ has a child, say $R_1$, such that $R_1\in\Omega_j$. By (2) of
Definition \ref{lmd-omd}, $R_1\rightarrow T_{j}$. Let
$\{R_1,\cdots,R_K\}$ be a path from $R_1$ to $T_j$ such that
$R_K=T_j$. Since $\text{RG}(D^{**})$ is terminal-separable, then
by Definition \ref{t-sep}, $\Omega_{I}=\emptyset$ for all
$I\subseteq[n]$ of size $|I|>1$. So it must be that
$\{R_1,\cdots,R_K\}\subseteq \Omega_j$.
\end{proof}

%%%%%%%%%%%%%%%%%%%%%%%%%%%%%%%%%%%%%%%%%%%
\renewcommand\figurename{Fig}
\begin{figure*}[htbp]
\begin{center}
\vspace{0.1cm}\includegraphics[height=2.8cm]{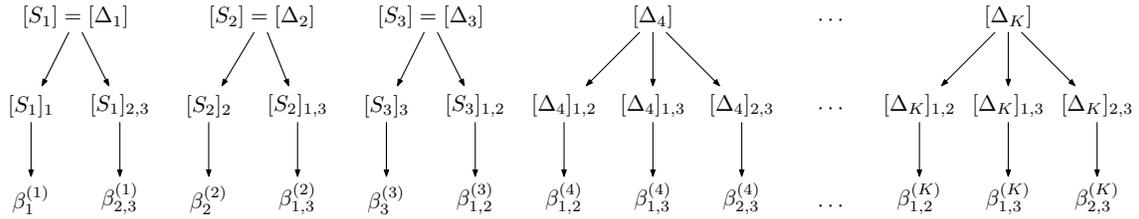}
\end{center}
\caption{Correspondence of equivalent classes, subclasses and
coding vectors: Given a partition $\mathcal
I=\{\Delta_1,\Delta_2,\Delta_3,\cdots,\Delta_K\}$ of $\Pi$ such
that $[S_i]=\Delta_i$ for $i\in\{1,2,3\}$, each equivalent class
is divided into some subclasses and each subclass corresponds to a
unique coding vector.}\label{fg-partition}
\end{figure*}
%%%%%%%%%%%%%%%%%%%%%%%%%%%%%%%%%%%%%%%%%%%%%%

For further discussion, we need the following definition.
\begin{defn}\label{sgl-sets}
A collection of vectors $\tilde{C}_\Pi=\{d_R\in\mathbb F^3;
R\in\Pi\}$ is said to be a feasible code on $\Pi$ if it satisfies
the following three conditions:
\begin{itemize}
  \item [(1)] $d_{S_i}=\alpha_i$ for each $i\in\{1,2,3\}$;
  \item [(2)] $d_R\in\langle d_{R'}; R'\in\text{In}(R)\rangle$
  for all $R\in\Pi\setminus\{S_1,S_2,S_3\}$;
  \item [(3)] $\bar{\alpha}\in\langle d_R; R\in\Lambda_j\rangle$
  for all $j\in[n]$.
\end{itemize}
\end{defn}

The following theorem shows that to determine feasibility of
$\text{RG}(D^{**})$, it is sufficient to determine existence of a
feasible code on $\Pi$.
\begin{thm}\label{lmd-solv}
Suppose $\text{RG}(D^{**})$ is terminal-separable. Then
$\text{RG}(D^{**})$ is feasible if and only if there exists a
feasible code on $\Pi$.
\end{thm}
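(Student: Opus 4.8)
The plan is to prove the two directions separately, exploiting the partition $\{\Pi\}\cup\{\Omega_I; I\subseteq[n]\}$ of $D^{**}$ (Theorem \ref{omg-intc}) together with the structural facts collected in Lemma \ref{in-reg-lmd-t-s}. The forward direction restricts a global solution to $\Pi$; the backward direction glues a feasible code on $\Pi$ with codes built on each $\Omega_j$ by routing, using Lemma \ref{sub-g-reg-code}.

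For the ``only if'' direction, suppose $\tilde C=\{d_R;R\in D^{**}\}$ is a linear solution of $\text{RG}(D^{**})$ and set $\tilde C_\Pi=\{d_R;R\in\Pi\}$. Condition (1) of Definition \ref{sgl-sets} is immediate. For condition (2), any non-source $R\in\Pi$ lies in some $\text{reg}(S_{i_1},S_{i_2})$ and, being neither $S_{i_1}$ nor $S_{i_2}$, must have entered by rule (2) of Definition \ref{g-reg}; hence $\text{In}(R)\subseteq\text{reg}(S_{i_1},S_{i_2})\subseteq\Pi$, so the relation $d_R\in\langle d_{R'};R'\in\text{In}(R)\rangle$ survives the restriction. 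For condition (3), fix $j\in[n]$. By part 1) of Lemma \ref{in-reg-lmd-t-s}, $T_j\in\text{reg}^\circ(\Lambda_j)\subseteq\text{reg}(\Lambda_j)$, so Remark \ref{rem-g-reg-code} gives $d_{T_j}\in\langle d_Q;Q\in\Lambda_j\rangle$; since $d_{T_j}=\bar{\alpha}$, we obtain $\bar{\alpha}\in\langle d_Q;Q\in\Lambda_j\rangle$. Thus $\tilde C_\Pi$ is a feasible code on $\Pi$.

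For the ``if'' direction, let $\tilde C_\Pi$ be a feasible code on $\Pi$. Because $\text{RG}(D^{**})$ is terminal-separable (Definition \ref{t-sep}) and every region reaches some terminal (Assumption 1), we have $D^{**}\setminus\Pi=\bigcup_{j\in[n]}\Omega_j$, a disjoint union. I would define the code on each $\Omega_j$ separately. Fix $j$. By condition (3) of Definition \ref{sgl-sets}, $\bar{\alpha}\in\langle d_Q;Q\in\Lambda_j\rangle$, and by part 3) of Lemma \ref{in-reg-lmd-t-s} each $Q\in\Lambda_j$ admits a path $\{Q,R_1,\cdots,R_K=T_j\}$ with $\{R_1,\cdots,R_K\}\subseteq\Omega_j$. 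Since $\Lambda_j\subseteq\Pi$ is disjoint from $\Omega_j$, removing $\Lambda_j$ from such a path leaves a subset of $\Omega_j$, so Lemma \ref{sub-g-reg-code} (with $\Lambda=\Lambda_j$, $Q=T_j$, $\Omega=\Omega_j$, $d_0=\bar{\alpha}$) yields $\tilde C_{\Omega_j}=\{d_R;R\in\Omega_j\}$ with $d_{T_j}=\bar{\alpha}$ and $d_R\in\langle d_{R'};R'\in\text{In}(R)\rangle$ for all $R\in\Omega_j$. Setting $\tilde C=\tilde C_\Pi\cup\bigcup_{j}\tilde C_{\Omega_j}$, which is well defined since the pieces have disjoint supports, I would verify conditions (1) and (2) of Definition \ref{lnc-reg-g} and the terminal requirement $d_{T_j}=\bar{\alpha}$, concluding that $\text{RG}(D^{**})$ is feasible.

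The main obstacle is the consistency of the glued code in the backward direction: I must ensure that for every $R\in\Omega_j$ the parents appearing in $\text{In}(R)$ already carry their intended vectors, and that no region outside $\Lambda_j\cup\Omega_j$ is a parent of a region in $\Omega_j$. Both hinge on terminal-separability: the argument in the proof of part 1) of Lemma \ref{in-reg-lmd-t-s} shows $\text{In}(R)\subseteq\Lambda_j\cup\Omega_j$ for each $R\in\Omega_j$, so the boundary vectors fed to Lemma \ref{sub-g-reg-code} are exactly the fixed values from $\tilde C_\Pi$, while the disjointness of the $\Omega_j$ prevents any cross-interference between the per-terminal constructions.
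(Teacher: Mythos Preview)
Your proof is correct and follows essentially the same approach as the paper: restrict a global solution to $\Pi$ for the forward direction (using Lemma \ref{in-reg-lmd-t-s}(1) and Remark \ref{rem-g-reg-code}), and for the converse, build $\tilde C_{\Omega_j}$ via Lemma \ref{sub-g-reg-code} from the paths supplied by Lemma \ref{in-reg-lmd-t-s}(3), then glue using the disjointness of the $\Omega_j$. Your additional care in noting that $\text{In}(R)\subseteq\Pi$ for non-source $R\in\Pi$ and that $\text{In}(R)\subseteq\Lambda_j\cup\Omega_j$ for $R\in\Omega_j$ makes explicit what the paper leaves implicit, but the structure of the argument is identical.
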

\begin{proof}
$''\Rightarrow''$. Let $\tilde{C}=\{d_R\in\mathbb F^3; R\in
D^{**}\}$ be a linear solution of $\text{RG}(D^{**})$ and
$\tilde{C}_\Pi=\{d_R;R\in\Pi\}$ be the constraint of $\tilde{C}$
on $\Pi$. Then by Definition \ref{lnc-reg-g}, $\tilde{C}_\Pi$
satisfies conditions (1), (2) of Definition \ref{sgl-sets}.
Moreover, by 1) of Lemma \ref{in-reg-lmd-t-s},
$T_j\in\Omega_j\subseteq\text{reg}^\circ(\Lambda_j)$ for all $j\in
[n]$. Then by Remark \ref{rem-g-reg-code},
$\bar{\alpha}=d_{T_j}\in\langle d_R;R\in\Lambda_j\rangle$. So
$\tilde{C}_\Pi$ satisfies condition (3) of Definition
\ref{sgl-sets}. Thus, $\tilde{C}_\Pi$ is a feasible code on $\Pi$.

$''\Leftarrow''$. Suppose $\tilde{C}_\Pi$ is a feasible code on
$\Pi$. Then for each $j\in[n]$ by condition (3) of Definition
\ref{sgl-sets}, $\bar{\alpha}\in\langle d_R;
R\in\Lambda_j\rangle$. Moreover, for each $Q\in\Lambda_j$, by 3)
of Lemma \ref{in-reg-lmd-t-s}, there is a path $\mathcal
P_Q=\{Q,R_1,\cdots,R_K=T_j\}$ such that
$\{R_1,\cdots,R_K\}\subseteq\Omega_j$. Then by Lemma
\ref{sub-g-reg-code}, we can construct a code
$\tilde{C}_{\Omega_j}=\{d_R; R\in\Omega_j\}$ such that
$d_{T_j}=\bar{\alpha}$ and $d_R\in\langle d_{R'};
R'\in\text{In}(R)\rangle$ for all $R\in\Omega_j$. Note that
$\Omega_1, \cdots, \Omega_n$ are mutually disjoint (Theorem
\ref{omg-intc}). Then $\tilde{C}=\tilde{C}_\Pi\cup
\tilde{C}_{1}\cup\cdots \cup \tilde{C}_{n}$ is a linear solution
of $\text{RG}(D^{**})$ and $\text{RG}(D^{**})$ is feasible.
\end{proof}

Let $\tilde{C}_\Pi=\{d_R; R\in\Pi\}$ be a feasible code on $\Pi$.
Since $\text{RG}(D^{**})$ is acyclic and $d_{S_i}=\alpha_i\neq 0$
for $i\in\{1,2,3\}$. If there is an $R\in\Pi$ such that $d_R=0$,
then by tracing the parents of all $R$ such that $d_R=0$, we can
always find an $R_0\in\Pi$ such that $d_{R_0}=0$ and $d_{R'}\neq
0$ for some $R'\in\text{In}(R_0)$. We redefine $d_{R_0}$ by
letting $d_{R_0}=d_{R'}$. Then $d_{R_0}\neq 0$ and it is easy to
see that the resulted code is still a feasible code on $\Pi$. We
can perform this operation continuously until $d_R\neq 0$ for all
$R\in\Pi$ and the resulted code is still a feasible code on $\Pi$.
Thus, we have the following remark.
\begin{rem}\label{rem-sgl-sets}
Let $\tilde{C}_\Pi=\{d_R\in\mathbb F^3; R\in\Pi\}$ be a feasible
code on $\Pi$. We can always assume that $d_R\neq 0$ for all
$R\in\Pi$.
\end{rem}

\begin{lem}\label{lem-sgl-sets}
Let $\tilde{C}_\Pi=\{d_R\in\mathbb F^3; R\in\Pi\}$ be a feasible
code on $\Pi$. The following hold:
\begin{itemize}
  \item [1)] If $\Theta\subseteq\Pi$ and $R\in\text{reg}(\Theta)$,
  then $d_R\in\langle d_{R'};R'\in\Theta\rangle$. Thus, if
  $R\in\text{reg}(S_{i_1}, S_{i_2})$ for some
  $\{i_1, i_2\}\subseteq\{1, 2, 3\}$, then
  $d_R\in\langle \alpha_{i_1}, \alpha_{i_2}\rangle$.
  Moreover, If $\langle d_{R'}\rangle=\langle d_{R''}\rangle$ and
  $R\in\text{reg}(R', R'')$, then $\langle d_{R}\rangle
  =\langle d_{R'}\rangle=\langle d_{R''}\rangle$.
  \item [2)] If $\{R,R'\}=\Lambda_j$,
  then $\langle d_R,d_{R'}\rangle=\langle\bar{\alpha},
  d_{R}\rangle=\langle \bar{\alpha},d_{R'}\rangle$.
\end{itemize}
\end{lem}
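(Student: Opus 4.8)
The plan is to prove the two parts of Lemma~\ref{lem-sgl-sets} in order, leaning heavily on Remark~\ref{rem-g-reg-code} and on condition (3) of Definition~\ref{sgl-sets}. For part~1), I would first observe that conditions (1) and (2) of Definition~\ref{sgl-sets} say precisely that $\tilde{C}_\Pi$ behaves like a linear code on the induced subgraph $\Pi$ (with the three source regions carrying $\alpha_1,\alpha_2,\alpha_3$). Since super regions are defined purely by the recursive closure in Definition~\ref{g-reg}, and $\Theta\subseteq\Pi$ forces $\text{reg}(\Theta)\subseteq\Pi$, the induction argument of Remark~\ref{rem-g-reg-code} applies verbatim inside $\Pi$: I would index $\text{reg}(\Theta)$ so that parents precede children, note that each $R\in\text{reg}^\circ(\Theta)$ has $\text{In}(R)\subseteq\text{reg}(\Theta)$ lying earlier in the index, and conclude $d_R\in\langle d_{R'};R'\in\Theta\rangle$ by induction using condition (2). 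The specialization to $\Theta=\{S_{i_1},S_{i_2}\}$ is then immediate from $d_{S_i}=\alpha_i$. For the ``moreover'' clause, if $\langle d_{R'}\rangle=\langle d_{R''}\rangle$ and $R\in\text{reg}(R',R'')$, then the same statement gives $d_R\in\langle d_{R'},d_{R''}\rangle=\langle d_{R'}\rangle$, so $\langle d_R\rangle\subseteq\langle d_{R'}\rangle$; combined with Remark~\ref{rem-sgl-sets} (so $d_R\neq 0$) and the fact that a nonzero vector in a one-dimensional space spans it, equality of the three lines follows.

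For part~2), I would start from condition (3) of Definition~\ref{sgl-sets}, namely $\bar{\alpha}\in\langle d_R; R\in\Lambda_j\rangle$. When $\Lambda_j=\{R,R'\}$ has exactly two elements this reads $\bar{\alpha}\in\langle d_R,d_{R'}\rangle$, which immediately gives the inclusions $\langle\bar{\alpha},d_R\rangle\subseteq\langle d_R,d_{R'}\rangle$ and $\langle\bar{\alpha},d_{R'}\rangle\subseteq\langle d_R,d_{R'}\rangle$. The reverse inclusions are the substance: I must show $d_{R'}\in\langle\bar{\alpha},d_R\rangle$ and symmetrically $d_R\in\langle\bar{\alpha},d_{R'}\rangle$. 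The key structural input is part~2) of Lemma~\ref{in-reg-lmd-t-s}, which guarantees $\Lambda_j\nsubseteq\text{reg}(S_{i_1},S_{i_2})$ for every pair $\{i_1,i_2\}$; together with part~1) of the present lemma (each $d_Q$ for $Q\in\Pi$ lies in a coordinate-plane $\langle\alpha_{i_1},\alpha_{i_2}\rangle$ whenever $Q\in\text{reg}(S_{i_1},S_{i_2})$), this will force $d_R$ and $d_{R'}$ to be linearly independent. Once independence is established, $\langle d_R,d_{R'}\rangle$ is two-dimensional and contains $\bar{\alpha}$; writing $\bar{\alpha}=\lambda d_R+\mu d_{R'}$ with (as I will argue) both $\lambda,\mu\neq 0$, I can solve for each generator in terms of $\bar{\alpha}$ and the other, yielding the two reverse inclusions and hence all three spans coincide.

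I expect the main obstacle to be rigorously pinning down \emph{linear independence} of $d_R$ and $d_{R'}$ and the nonvanishing of both coefficients $\lambda,\mu$. The natural route is: each of $R,R'$ lies in $\Pi=\text{reg}(S_1,S_2)\cup\text{reg}(S_1,S_3)\cup\text{reg}(S_2,S_3)$, so by part~1) each $d_R$ lies in one of the three coordinate planes $\langle\alpha_{i_1},\alpha_{i_2}\rangle$. If $d_R$ and $d_{R'}$ were parallel, then $\langle d_R\rangle=\langle d_{R'}\rangle$ would be a common one-dimensional subspace; I would need to argue that this forces $\Lambda_j$ into a single $\text{reg}(S_{i_1},S_{i_2})$, contradicting Lemma~\ref{in-reg-lmd-t-s}~2). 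Here I would use that $\bar{\alpha}=(1,1,1)$ is not contained in any coordinate plane $\langle\alpha_{i_1},\alpha_{i_2}\rangle$, so if both $d_R,d_{R'}$ lay in the \emph{same} plane their span could not contain $\bar{\alpha}$, violating condition (3); and if they are parallel but lie in different planes, their common line lies in the intersection of two coordinate planes, i.e.\ is spanned by a single $\alpha_i$, again a proper subspace missing $\bar{\alpha}$. Either way condition (3) is contradicted, so $d_R\not\parallel d_{R'}$.

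Finally, for the nonvanishing of the coefficients, with $\bar{\alpha}=\lambda d_R+\mu d_{R'}$ I would note that $\lambda=0$ would force $\bar{\alpha}\in\langle d_{R'}\rangle$, impossible since $\bar{\alpha}$ spans a line meeting no coordinate plane properly unless $d_{R'}$ itself is parallel to $\bar{\alpha}$ — but $d_{R'}\in\langle\alpha_{i_1},\alpha_{i_2}\rangle$ can never be parallel to $\bar\alpha$, as $\bar\alpha$ has all three coordinates nonzero. Hence $\lambda\neq 0$, and symmetrically $\mu\neq 0$. Then $d_{R'}=\mu^{-1}(\bar{\alpha}-\lambda d_R)\in\langle\bar{\alpha},d_R\rangle$ and $d_R=\lambda^{-1}(\bar{\alpha}-\mu d_{R'})\in\langle\bar{\alpha},d_{R'}\rangle$, giving both reverse inclusions and completing the proof that $\langle d_R,d_{R'}\rangle=\langle\bar{\alpha},d_R\rangle=\langle\bar{\alpha},d_{R'}\rangle$.
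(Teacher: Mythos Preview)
Your proposal is correct and follows essentially the same route as the paper: part~1) is the induction along $\text{reg}(\Theta)$ coming from Definition~\ref{g-reg} together with conditions (1),(2) of Definition~\ref{sgl-sets}, and part~2) uses that each $d_R,d_{R'}$ lies in some coordinate plane $\langle\alpha_{i_1},\alpha_{i_2}\rangle$ (hence $\bar\alpha\notin\langle d_R\rangle$ and $\bar\alpha\notin\langle d_{R'}\rangle$), so that $\bar\alpha\in\langle d_R,d_{R'}\rangle$ forces pairwise independence of $\bar\alpha,d_R,d_{R'}$ and equality of the three two-dimensional spans. One small remark: your invocation of Lemma~\ref{in-reg-lmd-t-s}~2) is unnecessary (and in fact that lemma presupposes terminal-separability, which the present statement does not); your own argument via condition~(3) and the coordinate planes already suffices, exactly as in the paper.
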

\begin{proof}
1) is a direct consequence of Definition \ref{g-reg} and
conditions (1), (2) of Definition \ref{sgl-sets}.

2) By claim 1), we have $\{d_R,d_{R'}\}\subseteq\langle\alpha_{1},
\alpha_{2}\rangle\cup\langle\alpha_{1},
\alpha_{3}\rangle\cup\langle\alpha_{2}, \alpha_{3}\rangle$, which
implies that $\bar{\alpha}\notin\langle d_{R}\rangle$ and
$\bar{\alpha}\notin\langle d_{R'}\rangle$. By condition (3) of
Definition \ref{sgl-sets}, $\bar{\alpha}\in\langle d_{R},
d_{R'}\rangle$. So $\bar{\alpha}, d_{R}$ and $d_{R'}$ are mutually
linearly independent and $\langle \bar{\alpha},
d_{R}\rangle=\langle \bar{\alpha}, d_{R'}\rangle=\langle d_R,
d_{R'}\rangle$.
\end{proof}

\section{Weak Decentralized Code on ~$\Pi$}
In this section, we generalize the weak decentralized code on one
super region (Definition \ref{dct-code}) to $\Pi$, which is the
union of three super regions (Definition \ref{lmd-omd}). Our
discussions begin with a most general partition of $\Pi$ and its
refinement. Then we define week decentralized code on the so
called "R-closed partition of $\Pi$''(Definition
\ref{R-closed-Pi}.). Note that the construction of the R-closed
partition of $\Pi$ will be left to next section.

\subsection{Partition of ~$\Pi$ and Its Refinement}
Let $\mathcal I=\{\Delta_1,\cdots,\Delta_K\}$ be a partition of
$\Pi$. As mentioned before, for each $\Delta_i\in\mathcal I$, we
can choose an arbitrary $R\in\Delta_i$ as a representative element
and denote $\Delta_i=[R]$. On the other hand, for each $R\in\Pi$,
we have $R\in\Delta_i$ for some $\Delta_i\in\mathcal I$. We will
use $\Delta_i$ and $[R]$ interchangeably. We further assume that
$[S_i]\neq[S_j]$ for each pair $\{i,j\}\subseteq\{1,2,3\}$. Thus,
$K\geq 3$ and by proper naming, we can assume $\Delta_i=[S_i],
~i=1,2,3$. For each $[R]=\Delta_i\in\mathcal I$ and
$\{j_1,j_2\}\subseteq\{1,2,3\}$, we denote
\begin{align} \vspace{-0.1cm}
[R]_{j_1,j_2}=[\Delta_i]_{j_1,j_2}=\Delta_i\cap\text{reg}(S_{j_1},S_{j_2}).
\label{eq-nt-3}
\end{align}
For $i\in\{1,2,3\}$ and $\{j_1,j_2\}=\{1,2,3\}\backslash\{i\}$, we
denote
\begin{align} \vspace{-0.1cm}
[S_i]_i=[S_i]\cap(\text{reg}(S_{i},S_{j_1})\cup\text{reg}(S_{i},S_{j_2})).
\label{eq-nt-4}
\end{align} By (\ref{eq-nt-3}) and (\ref{eq-nt-4}), we have
$[S_i]_i=[S_i]_{i,j_1}\cup[S_i]_{i,j_2}$.

Given a partition $\mathcal I=\{\Delta_1,\cdots,\Delta_K\}$ of
$\Pi$ as above, we can further refine it as follows:
\begin{defn}[Subclass]\label{sub-class}
For $i\in\{1,2,3\}$, $[S_i]$ is divided into two \emph{subclasses}
$[S_i]_i$ and $[S_i]_{j_1,j_2}$, where
$\{j_1,j_2\}=\{1,2,3\}\backslash\{i\}$; For $i\geq 4$, $\Delta_i$
is divided into three \emph{subclasses} $[\Delta_i]_{1,2},
[\Delta_i]_{1,3}$ and $[\Delta_i]_{2,3}$.
\end{defn}

In what follows, an equivalent class of $\mathcal I$ refers to
$\Delta_i$ (or $[R]$) and a subclass of $\mathcal I$ refers to the
subclass of $\Delta_i$ defined above. Moreover, we usually use
$[[R]]$ to denote a subclass of $[R]$. Note that a subclass of
$[R]$ could be an empty set. The correspondence of equivalent
classes and subclasses are illustrated in Fig. \ref{fg-partition}.
By (\ref{eq-nt-1}), (\ref{eq-nt-2}), (\ref{eq-nt-3}) and
(\ref{eq-nt-4}), the collection of all subclasses of $\mathcal I$
is still a partition of $\Pi$.

%%%%%%%%%%%%%%%%%%%%%%%%%%%%%%%%%%%%%%%%%%%
\renewcommand\figurename{Fig}
\begin{figure*}[htbp]
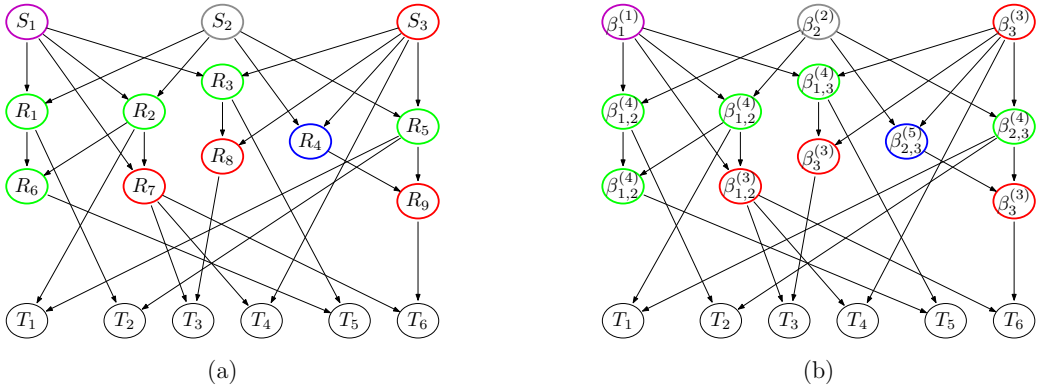

\begin{center}
\hspace{0.1cm}\includegraphics[height=5.1cm]{fg-smpl.7}
\hspace{2.0cm}\includegraphics[height=5.1cm]{fg-smpl.6}
\end{center}
\caption{An example of weak decentralized code on $\Pi$: (a) is
the example region graph, where the subset $\Pi$ is partitioned
into five equivalent classes, regions in the same equivalent class
are drawn in the same color; (b) is an illustration of the
corresponding weak decentralized code.}\label{fg-cmplx-1}
\end{figure*}
%%%%%%%%%%%%%%%%%%%%%%%%%%%%%%%%%%%%%%%%%%%%%%

\begin{exam}\label{ex-dcn-code}
Consider the region graph in Fig. \ref{fg-cmplx-1} (a). One can
check that $\text{reg}(S_1,S_2)=\{S_1,S_2,R_1,R_2,R_6$, $R_7\}$,
$\text{reg}(S_1,S_3)=\{S_1,S_3,R_3,R_8\}$ and
$\text{reg}(S_2,S_3)=\{S_2,S_3,R_4,R_5,R_9\}$. Thus,
$\Pi=\{S_1,S_2,S_3$, $R_1,\cdots,R_9\}$.

Let $\mathcal I_c=\{\Delta_1,\Delta_2,\Delta_3,\Delta_4,
\Delta_5\}$, where $\Delta_1=\{S_1\}$, $\Delta_2=\{S_2\}$,
$\Delta_3=\{S_3, R_7, R_8, R_9\}$, $\Delta_4=\{R_1, R_2, R_3, R_5,
R_6\}$ and $\Delta_5=\{R_4\}$. Then $\Delta_1$ has one nonempty
subclass $[S_1]_1=\{S_1\}$; $\Delta_2$ has one nonempty subclass
$[S_2]_2=\{S_2\}$; $\Delta_3$ has two nonempty subclasses
$[S_3]_3=\{S_3,R_8,R_9\}$ and $[S_3]_{1,2}=\{R_7\}$; $\Delta_4$
has three nonempty subclasses $[R_1]_{1,2}=\{R_1,R_2,R_6\}$,
$[R_1]_{1,3}=\{R_3\}$ and $[R_1]_{2,3}=\{R_5\}$; $\Delta_5$ has
one nonempty subclass $[R_4]_{2,3}=\{R_4\}$. All these subclasses
form the desired refinement of $\mathcal I_c$.
\end{exam}

Let $\mathcal I_0=\{[R]; R\in\Pi\}$, where $[R]=\{R\}$ for all
$R\in\Pi$. We call $\mathcal I_0$ the \emph{trivial partition} of
$\Pi$. Clearly, each equivalent class $[R]\in\mathcal I_0$ has
only one non-empty subclass, i.e., itself.

\subsection{Weak Decentralized Code on ~$\Pi$}
In this subsection, we construct weak decentralized code on $\Pi$.
All coding vectors will be taken from $\mathbb F^3$, where
$\mathbb F$ is a sufficient large field. Also note that for $k=3$,
the vectors $\alpha_1=(1,0,0), \alpha_2=(0,1,0), \alpha_3=(0,0,1)$
and $\bar{\alpha}=(1,1,1)$.

First, we give a lemma for constructing coding vectors.
\begin{lem}\label{genc-code}
Let $\mathbb F$ be a sufficiently large field. Then for any $K\geq
3$, there exist $K$ sets of vectors $\mathcal
B_1=\{\alpha_1,\alpha_2+\alpha_3\}$, $\mathcal
B_2=\{\alpha_2,\alpha_1+\alpha_3\}$, $\mathcal
B_3=\{\alpha_3,\alpha_1+\alpha_2\}$, $\mathcal
B_4=\{\beta^{(4)}_{1,2},\beta^{(4)}_{1,3},\beta^{(4)}_{2,3}\}$,
$\cdots$, $\mathcal
B_K=\{\beta^{(K)}_{1,2},\beta^{(K)}_{1,3},\beta^{(K)}_{2,3}\}\subseteq\mathbb
F^3$ and the following properties are satisfied:
\begin{itemize}
  \item [(1)] For any $\ell\in\{4,\cdots,K\}$ and $\{i_1,i_2\}
  \subseteq\{1,2,3\}$, $\beta^{(\ell)}_{i_1,i_2}\in\langle\alpha_{i_1},
  \alpha_{i_2}\rangle$;
  \item [(2)] If $\{\gamma, \gamma'\}\subseteq\mathcal B_\ell$ for
  some $\ell\in\{1,\cdots,K\}$, then $\bar{\alpha}\in\langle\gamma,
  \gamma'\rangle$;
  \item [(3)] For any pair $\{\gamma, \gamma'\}\subseteq
  \cup_{\ell=1}^K\mathcal
  B_\ell$, $\gamma$ and $\gamma'$ are linearly independent.
  \item [(4)] If $\{\gamma, \gamma',\gamma''\}\subseteq
  \bigcup_{\ell=1}^K\mathcal B_\ell$ such that $\{\gamma, \gamma',\gamma''\}
  \nsubseteq\langle\alpha_{i_1},\alpha_{i_2}\rangle$ for all $\{i_1,i_2\}
  \subseteq\{1,2,3\}$ and $\{\gamma,\gamma',\gamma''\}
  \neq\{\beta^{(\ell)}_{1,2},\beta^{(\ell)}_{1,3},\beta^{(\ell)}_{2,3}\}$
  for all $\ell\in\{4,\cdots,K\}$, then $\gamma, \gamma'$ and $\gamma''$
  are linearly independent;
\end{itemize}
\end{lem}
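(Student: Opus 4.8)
The plan is to realize every set $\mathcal B_\ell$ as the trace, on the three coordinate planes, of a suitable $2$-dimensional subspace through $\bar{\alpha}$, and then to force the independence properties (3) and (4) by a sufficiently-large-field (genericity) argument. A plane $H\subseteq\mathbb F^3$ with $\bar{\alpha}\in H$ is cut out by a normal $n=(p,q,r)$ satisfying $p+q+r=0$, and its intersections with $\langle\alpha_1,\alpha_2\rangle$, $\langle\alpha_1,\alpha_3\rangle$, $\langle\alpha_2,\alpha_3\rangle$ are spanned by $(q,-p,0)$, $(r,0,-p)$, $(0,r,-q)$, respectively. I keep $\mathcal B_1,\mathcal B_2,\mathcal B_3$ exactly as prescribed; these are precisely the traces of the planes with normals $(0,1,-1),(1,0,-1),(1,-1,0)$, each of which contains a coordinate axis, so two of its three traces collapse and leave a $2$-element set. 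For $\ell\in\{4,\dots,K\}$ I choose a normal $n_\ell=(p_\ell,q_\ell,r_\ell)$ with $p_\ell+q_\ell+r_\ell=0$ and set $\beta^{(\ell)}_{1,2}=(q_\ell,-p_\ell,0)$, $\beta^{(\ell)}_{1,3}=(r_\ell,0,-p_\ell)$, $\beta^{(\ell)}_{2,3}=(0,r_\ell,-q_\ell)$, leaving the parameters $p_\ell,q_\ell$ to be fixed at the very end.

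Properties (1) and (2) are then essentially free. Property (1) is immediate from the coordinate shape of each $\beta^{(\ell)}_{i_1,i_2}$. For property (2) and $\ell\in\{1,2,3\}$ it is the identity $\alpha_i+(\alpha_{j_1}+\alpha_{j_2})=\bar{\alpha}$; for $\ell\ge 4$ the three vectors of $\mathcal B_\ell$ all lie in the single plane $H_\ell\ni\bar{\alpha}$, so any two of them that are independent (which property (3) will guarantee) already span $H_\ell$ and hence contain $\bar{\alpha}$.

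The substance lies in (3) and (4), which are non-vanishing statements: (3) asks that a fixed finite list of pairwise determinants be nonzero, and (4) asks that the $3\times3$ determinant of every \emph{admissible} triple be nonzero. Each such determinant is a polynomial in the finitely many parameters $\{p_\ell,q_\ell\}_{\ell\ge 4}$ (with $r_\ell=-p_\ell-q_\ell$), so once I show each of these polynomials is not identically zero, a sufficiently large field admits a common non-root, chosen e.g.\ by Schwartz--Zippel or by selecting the planes one at a time to avoid finitely many proper subvarieties. The classification of determinants is the heart of the argument. For a triple whose component ``types'' lie in $\{12,13,23\}$: if all three share one type, the vectors lie in one coordinate plane (a coordinate is identically zero) and the determinant is identically $0$, but such a triple is exactly the one excluded by clause (a) of (4); if two types coincide, say $12,12,13$, the determinant is $p_c(q_ap_b-p_aq_b)$, not identically zero; and for distinct types $12,13,23$ from planes $a,b,c$ the determinant equals $q_ap_br_c-p_aq_cr_b$, which vanishes identically precisely when $a=b=c$, i.e.\ on the complete triple $\{\beta^{(\ell)}_{1,2},\beta^{(\ell)}_{1,3},\beta^{(\ell)}_{2,3}\}$ excluded by clause (b), and is a nonzero polynomial otherwise. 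Triples involving the fixed vectors $\alpha_i,\alpha_{j_1}+\alpha_{j_2}$ are handled by the same computation with the appropriate coordinates substituted; the one purely fixed triple $\{\alpha_2+\alpha_3,\alpha_1+\alpha_3,\alpha_1+\alpha_2\}$ has determinant $2$, so I additionally assume $\mathrm{char}\,\mathbb F\neq 2$, which is harmless since $\mathbb F$ is taken large.

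I expect the delicate point to be exactly this bookkeeping: proving that the identically-vanishing determinants are \emph{precisely} the two excluded families (three coplanar vectors, or a complete $\mathcal B_\ell$) and no others. This is where properties (2) and (4) pull in opposite directions --- (2) forces each $\mathcal B_\ell$ to be coplanar with $\bar{\alpha}$, so the full triples must be dependent, while (4) demands independence for everything else --- and the construction succeeds only because the forced dependencies coincide exactly with the permitted exceptions. Once the classification is verified, the final large-field selection of the parameters $\{p_\ell,q_\ell\}$ is routine.
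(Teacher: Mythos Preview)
Your proposal is correct and rests on the same geometric idea as the paper: each $\mathcal B_\ell$ is the trace, on the three coordinate planes, of a $2$-dimensional subspace of $\mathbb F^3$ through $\bar{\alpha}$. The paper's proof (Appendix~A) builds the $\mathcal B_\ell$ inductively: given $\mathcal B_1,\dots,\mathcal B_{K-1}$, it collects a finite set $\Psi_{K-1}$ of ``dangerous'' lines (the coordinate-plane traces of all $2$-planes spanned by previously constructed pairs), picks $\beta^{(K)}\notin\bigcup_{\gamma\in\Psi_{K-1}}\langle\bar{\alpha},\gamma\rangle$, sets $\beta^{(K)}_{i_1,i_2}$ to be a nonzero vector in $\langle\beta^{(K)},\bar{\alpha}\rangle\cap\langle\alpha_{i_1},\alpha_{i_2}\rangle$, and verifies (1)--(4) by a short case analysis on how many of $\gamma,\gamma',\gamma''$ lie in $\mathcal B_K$. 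You instead parametrize all the planes simultaneously by their normals $(p_\ell,q_\ell,r_\ell)$ and reduce (3)--(4) to the non-vanishing of an explicit finite list of polynomial determinants in the parameters, then invoke Schwartz--Zippel (or the same one-at-a-time avoidance) to select good values. The two arguments are equivalent in substance; the paper's inductive version avoids writing down any determinant, while your explicit computation has the virtue of flagging the characteristic-$2$ obstruction (the fixed triple $\{\alpha_2+\alpha_3,\alpha_1+\alpha_3,\alpha_1+\alpha_2\}$ has determinant $2$), which the paper's ``Clearly, when $K=3$\ldots'' passes over in silence.
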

\begin{proof}
The proof is given in Appendix A.
\end{proof}

We give an example of sets of vectors satisfying properties
(1)$-$(4) of Lemma \ref{genc-code}. For simplicity, we assume that
$\mathbb F=GF(p)$ for a sufficiently large prime $p$.
\begin{exam}\label{ex-genc-code}
Let $\mathcal B_1=\{\alpha_1, \alpha_2+\alpha_3\}, \mathcal
B_2=\{\alpha_2, \alpha_1+\alpha_3\}, \mathcal B_3=\{\alpha_3,
\alpha_1+\alpha_2\}$, $\mathcal B_4=\{\alpha_1+3\alpha_2,
2\alpha_1+3\alpha_3$, $2\alpha_2-\alpha_3\}$ and $\mathcal
B_5=\{2\alpha_1+3\alpha_2, \alpha_1+3\alpha_3,
\alpha_2-2\alpha_3\}$. Then $\{\mathcal B_1,\mathcal B_2,\mathcal
B_3,\mathcal B_4,\mathcal B_5\}$ satisfies all conditions of Lemma
\ref{genc-code}.
\end{exam}

\vspace{0.1cm}\textbf{Convention}: To unify the notations, we also
denote $\beta^{(1)}_{1}=\alpha_1$ and $\beta^{(1)}_{2,
3}=\alpha_{2}+\alpha_{3}$. Similarly, we denote
$\beta^{(2)}_{2}=\alpha_2$, $\beta^{(2)}_{1,
3}=\alpha_{1}+\alpha_{3}$, $\beta^{(3)}_{3}=\alpha_3$ and
$\beta^{(3)}_{1, 2}=\alpha_{1}+\alpha_{2}$.

\vspace{0.1cm}\begin{defn}\label{R-closed-Pi} A partition
$\mathcal I=\{\Delta_1,\Delta_2,\Delta_3,\cdots,\Delta_K\}$ of
$\Pi$ is said to be R-\emph{closed} if $[S_{i_1}]\neq[S_{i_2}]$
and $\text{reg}([\Delta_j]_{i_1,i_2})=[\Delta_j]_{i_1,i_2}$ for
all $\{i_1,i_2\}\subseteq\{1,2,3\}$ and $j\in\{1,\cdots,K\}$.
\end{defn}

For example, for the region graph in Fig. \ref{fg-cmplx-1} (a),
the partition $\mathcal I_c$ in Example \ref{ex-dcn-code} is an
R-closed partition of $\Pi$.

Let $\{\mathcal B_{1},\mathcal B_{2},\mathcal
B_{3},\cdots,\mathcal B_{K}\}$ be constructed as in Lemma
\ref{genc-code} and $\mathcal
I=\{\Delta_1,\Delta_2,\Delta_3,\cdots,\Delta_K\}$ be an R-closed
partition of $\Pi$. By proper naming, we can let $\Delta_i=[S_i],
i=1,2,3$. Then all subclasses of $\mathcal I$ are in one-to-one
correspondence with all vectors in $\bigcup_{\ell=1}^K\mathcal
B_\ell$, where subclasses are defined as in Definition
\ref{sub-class} (See Fig. \ref{fg-partition}.). Let
$\tilde{C}_\Pi=\{d_R; R\in\Pi\}$ be constructed by assigning each
vector in $\bigcup_{\ell=1}^K\mathcal B_\ell$ to all regions in
the corresponding subclass. Specifically, let
\begin{itemize}
  \item $d_R=\beta^{(i)}_{i}$ for each $i\in\{1,2,3\}$
  and $R\in[S_i]_i$;
  \vspace{0.1cm}\item $d_R=\beta^{(i)}_{j_1,j_2}$
  for each $i\in\{1,2,3\}$ and $R\in[S_i]_{j_1,j_2}$,
  where $\{j_1,j_2\}=\{1,2,3\}\backslash\{i\}$;
  \vspace{0.1cm}\item $d_{R}=\beta^{(i)}_{j_1,j_2}$ for each
  $i\in\{4,\cdots,K\}$, each subset $\{j_1,j_2\}\subseteq\{1,2,3\}$
  and each $R\in[\Delta_{i}]_{j_1,j_2}$.
\end{itemize}

\begin{defn}\label{prtn-dctr-code-def}
The code $\tilde{C}_\Pi$ constructed as above is called an
$\mathcal I$-\emph{weak decentralized code} on $\Pi$.
\end{defn}

Fig. \ref{fg-cmplx-1} (b) illustrates an $\mathcal I_c$-weak
decentralized code for the region graph in Fig. \ref{fg-cmplx-1}
(a), where $\mathcal I_c$ is as in Example \ref{ex-dcn-code}.

To discuss the property of weak decentralized code on $\Pi$, we
need the following conception.
\begin{defn}[Independent Set]\label{ind-set-ptn}
Let $\mathcal I$ be an R-closed partition of $\Pi$. A subset
$\{Q,Q',Q''\}\subseteq\Pi$ is called an $\mathcal
I$-\emph{independent set} if the following three conditions hold:
\begin{itemize}
  \item [(1)] $|\{Q,Q',Q''\}\bigcap[[R]]|\leq 1$ for any equivalent
  class $[R]\in\mathcal I$ and any subclass $[[R]]$ of
  $[R]$, i.e., the three regions $Q,Q',Q''$ belongs to
  three different subclasses of $\mathcal I$;
  \item [(2)] $\{Q,Q',Q''\}\nsubseteq[R]$ for any equivalent
  class $R\in\mathcal I$;
  \item [(3)] $\{Q,Q',Q''\}\nsubseteq[S_{i_1}]_{i_1}\bigcup[S_{i_2}]_{i_2}
  \bigcup\text{reg}(S_{i_1}, S_{i_2})$ for any pair $\{i_1,i_2\}
  \subseteq\{1,2,3\}$.
\end{itemize}
\end{defn}

\begin{exam}
Consider the partition $\mathcal I_c=\{\Delta_1, \Delta_2,
\Delta_3$, $\Delta_4, \Delta_5\}$ in Example \ref{ex-dcn-code}. We
can check that $\{S_1, R_2, R_3\}$ is an $\mathcal
I_c$-independent set. We can also check that $\{R_1, R_2, R_4\}$
is not an $\mathcal I_c$-independent set because $|\{R_1, R_2,
R_4\}\cap[\Delta_4]_{1,2}|=|\{R_1, R_2\}|\geq 2$, violating
condition (1) of Definition \ref{ind-set-ptn}; $\{R_1, R_3, R_5\}$
is not an $\mathcal I_c$-independent set because $\{R_1, R_3,
R_5\}\subseteq\Delta_4$, violating condition (2) of Definition
\ref{ind-set-ptn}; $\{S_1, R_3, R_9\}$ is not an $\mathcal
I_c$-independent set because $\{S_1, R_3,
R_9\}\subseteq[S_{3}]_3\bigcup\text{reg}(S_{1}, S_{3})$, violating
condition (3) of Definition \ref{ind-set-ptn}.
\end{exam}

The following theorem is an important property of weak
decentralized code on $\Pi$.
\begin{thm}\label{prtn-dctr-code}
Let $\mathcal I=\{\Delta_1,\Delta_2,\Delta_3,\cdots,\Delta_K\}$ be
an R-closed partition of $\Pi$ and $\tilde{C}_\Pi=\{d_R\in\mathbb
F^3; R\in\Pi\}$ be an $\mathcal I$-weak decentralized code on
$\Pi$. The following hold:
\begin{itemize}
  \item [1)] $d_{S_i}=\alpha_i,i=1,2,3$.
  \item [2)] $d_R\in\langle d_{R'}; R'\in\text{In}(R)\rangle$ for
  all $R\in\Pi\backslash\{S_1,S_2,S_3\}$.
  \item [3)] If $Q$ and $Q'$ belong to two different subclasses of some
  $[R_j]\in\mathcal I$, then $\bar{\alpha}\in\langle d_{Q},d_{Q'}\rangle$.
  \item [4)] If $\{Q,Q',Q''\}$ is an $\mathcal I$-independent
  set, then $d_{Q},d_{Q'},d_{Q''}$ are linearly independent.
  Hence, $\bar{\alpha}\in\langle d_{Q},d_{Q'},d_{Q''}\rangle$.
\end{itemize}
\end{thm}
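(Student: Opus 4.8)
The four assertions are of increasing difficulty, and my plan is to dispatch 1) and 3) straight from the construction, reduce 2) to the single–super–region statement Lemma~\ref{lem-weak-dec-code}, and prove the substantive claim 4) by setting up a dictionary between the combinatorial conditions of Definition~\ref{ind-set-ptn} and the algebraic properties (3)--(4) of Lemma~\ref{genc-code}. For 1), since a source has no parent we have $S_i\notin\text{reg}(S_{j_1},S_{j_2})$, so by (\ref{eq-nt-4}) $S_i\in[S_i]_i$, and the construction assigns $d_{S_i}=\beta^{(i)}_i=\alpha_i$ by the convention. For 3), if $Q,Q'$ lie in distinct subclasses of a common class $[\Delta_j]$, then $d_Q,d_{Q'}$ are two distinct vectors of $\mathcal B_j$ (for $j\le 3$ the two subclasses of $[S_j]$ correspond exactly to the two vectors of $\mathcal B_j$), whence property (2) of Lemma~\ref{genc-code} gives $\bar\alpha\in\langle d_Q,d_{Q'}\rangle$.

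For 2), I would use that the three interiors $\text{reg}^\circ(S_{i_1},S_{i_2})$ are mutually disjoint by (\ref{eq-nt-2}), so every $R\in\Pi\setminus\{S_1,S_2,S_3\}$ lies in exactly one $\text{reg}^\circ(S_{i_1},S_{i_2})$. The key observation is that the restriction of $\tilde C_\Pi$ to a fixed super region $\text{reg}(S_{i_1},S_{i_2})$ is an ordinary weak decentralized code in the sense of Definition~\ref{dct-code}: its classes are $[\Delta_\ell]_{i_1,i_2}=\Delta_\ell\cap\text{reg}(S_{i_1},S_{i_2})$, which are R-closed by Definition~\ref{R-closed-Pi}; the $S_{i_1}$- and $S_{i_2}$-classes carry $\alpha_{i_1},\alpha_{i_2}$, while every remaining region carries a $\beta^{(\ell)}_{i_1,i_2}\in\langle\alpha_{i_1},\alpha_{i_2}\rangle$ (property (1) of Lemma~\ref{genc-code}), all these vectors being pairwise independent by property (3). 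Lemma~\ref{lem-weak-dec-code} then gives $d_R\in\langle d_{R'};R'\in\text{In}(R)\rangle$ for every $R\in\text{reg}^\circ(S_{i_1},S_{i_2})$, which exhausts $\Pi\setminus\{S_1,S_2,S_3\}$.

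The heart of the argument is 4). Condition (1) of Definition~\ref{ind-set-ptn} makes $d_Q,d_{Q'},d_{Q''}$ three distinct vectors of $\bigcup_\ell\mathcal B_\ell$, and condition (2) translates into $\{d_Q,d_{Q'},d_{Q''}\}\ne\mathcal B_\ell$ for all $\ell\ge 4$. The crucial step is to show that condition (3) is equivalent to $\{d_Q,d_{Q'},d_{Q''}\}\nsubseteq\langle\alpha_{i_1},\alpha_{i_2}\rangle$ for every pair $\{i_1,i_2\}$. For this I would establish the set identity: the regions carrying a vector of $\langle\alpha_{i_1},\alpha_{i_2}\rangle$ are exactly $[S_{i_1}]_{i_1}\cup[S_{i_2}]_{i_2}\cup\text{reg}(S_{i_1},S_{i_2})$. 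Indeed the vectors landing in $\langle\alpha_{i_1},\alpha_{i_2}\rangle$ are $\alpha_{i_1}$ (on $[S_{i_1}]_{i_1}$), $\alpha_{i_2}$ (on $[S_{i_2}]_{i_2}$), $\beta^{(i_3)}_{i_1,i_2}=\alpha_{i_1}+\alpha_{i_2}$ (on $[S_{i_3}]_{i_1,i_2}$), and the $\beta^{(\ell)}_{i_1,i_2}$ for $\ell\ge 4$ (on $[\Delta_\ell]_{i_1,i_2}$); no $\beta^{(\ell)}_{i_1,i_3}$ or $\beta^{(\ell)}_{i_2,i_3}$ can land there, since property (3) forbids it from being a multiple of $\alpha_{i_1}$ or $\alpha_{i_2}$. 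As $\bigcup_\ell[\Delta_\ell]_{i_1,i_2}=\text{reg}(S_{i_1},S_{i_2})$ while $[\Delta_{i_1}]_{i_1,i_2}\subseteq[S_{i_1}]_{i_1}$ and $[\Delta_{i_2}]_{i_1,i_2}\subseteq[S_{i_2}]_{i_2}$, these regions collapse exactly onto the set in condition (3). With the dictionary in place, conditions (1)--(3) are precisely the hypotheses of property (4) of Lemma~\ref{genc-code}, which yields independence of $d_Q,d_{Q'},d_{Q''}$; three independent vectors span $\mathbb F^3$, so $\bar\alpha\in\langle d_Q,d_{Q'},d_{Q''}\rangle$.

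The main obstacle I anticipate is this set-theoretic identity in 4), together with the subsidiary disjointness fact that the two subclasses $[S_i]_i$ and $[S_i]_{j_1,j_2}$ of a source class are disjoint. Both rest on repeated intersection computations via Lemma~\ref{g-reg-cap}, e.g.\ $\text{reg}(S_i,S_{j_1})\cap\text{reg}(S_{j_1},S_{j_2})=\{S_{j_1}\}$, and the delicate point is keeping the containments exact rather than merely up to the source regions $S_i$; once the membership of each subclass in the correct span is pinned down, the rest of the theorem is a direct invocation of the already-established Lemmas~\ref{lem-weak-dec-code} and~\ref{genc-code}.
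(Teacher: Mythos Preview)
Your proposal is correct and follows essentially the same approach as the paper: parts 1) and 3) are read off from the construction, part 2) is reduced to Lemma~\ref{lem-weak-dec-code} by restricting to each $\text{reg}(S_{i_1},S_{i_2})$, and part 4) is obtained by matching the conditions of Definition~\ref{ind-set-ptn} to the hypotheses of property~(4) in Lemma~\ref{genc-code}. The paper's proof of 4) simply asserts this correspondence in one sentence, whereas you spell out the set-theoretic dictionary (in particular the identity $\{R:d_R\in\langle\alpha_{i_1},\alpha_{i_2}\rangle\}=[S_{i_1}]_{i_1}\cup[S_{i_2}]_{i_2}\cup\text{reg}(S_{i_1},S_{i_2})$) explicitly; your added detail is sound and fills in what the paper leaves to the reader.
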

\begin{proof}
1), 3) are obvious by the construction of $\tilde{C}_\Pi$.

To prove 2), we fix $\{i_1,i_2\}\subseteq\{1,2,3\}$. For each
$j\in\{1, \cdots, K\}$, let $\Delta'_j=[\Delta_j]_{i_1,i_2}$.
Since $\mathcal I$ is an R-closed partition of $\Pi$, then
$\mathcal I'=\{\Delta'_1,\Delta'_2,\cdots,\Delta'_K\}$ is an
R-closed partition of $\text{reg}(S_{i_1},S_{i_2})$. By the
construction of $\tilde{C}_\Pi$, the subcode
$\tilde{C}_{S_{i_1},S_{i_2}}=\{d_R;
R\in\text{reg}(S_{i_1},S_{i_2})\}$ is an $\mathcal I'$-weak
decentralized code on $\text{reg}(S_{i_1},S_{i_2})$. By Lemma
\ref{lem-weak-dec-code}, $d_R\in\langle
d_{R'};R'\in\text{In}(R)\rangle$ for all
$R\in\text{reg}^\circ(S_{i_1},S_{i_2})$. Note that $\{i_1,i_2\}$
can be arbitrary subset of $\{1,2,3\}$. Then we have
$d_R\in\langle d_{R'}; R'\in\text{In}(R)\rangle$ for all
$R\in\text{reg}^\circ(S_{1},S_{2})\cup\text{reg}^\circ(S_{1},S_{3})
\cup\text{reg}^\circ(S_{2},S_{3})=\Pi\setminus\{S_1,S_2,S_3\}$.

4) By Definition \ref{prtn-dctr-code-def} and \ref{ind-set-ptn},
the three vectors $d_{Q}, d_{Q'}$ and $d_{Q''}$ satisfy condition
(4) of Lemma \ref{genc-code}, hence are linearly independent. So
$\bar{\alpha}\in\mathbb F^3=\langle d_{Q},d_{Q'},d_{Q''}\rangle$.
\end{proof}

Note that in the construction of $\mathcal I$-weak decentralized
code, $\Lambda_1,\Lambda_2,\cdots,\Lambda_n$ are not taken into
consideration. So in general, an $\mathcal I$-weak decentralized
code is not necessarily a feasible code on $\Pi$. In the next
section, we will construct a partition $\mathcal I_c$ of $\Pi$,
called a normal partition of $\Pi$, such that $\text{RG}(D^{**})$
is feasible if and only if the $\mathcal I_c$-weak decentralized
code on $\Pi$ is a feasible code on $\Pi$.

\section{Network Coding for $3$-source $n$-terminal
Terminal-separable Sum-network}

In this section, we always assume $\text{RG}(D^{**})$ is a
terminal-separable region graph with $3$ source regions and $n$
terminal regions, where $n$ is an positive integer. We will
characterize the feasibility of $\text{RG}(D^{**})$ and show that
it can be determined in polynomial time. Recall that by Theorem
\ref{lmd-solv}, we just need to determine the existence of a
feasible code on $\Pi$.

\subsection{Some Examples}
In this subsection, we use three examples to show the basic idea
of determining feasibility of $\text{RG}(D^{**})$.

\begin{exam}\label{ex-chr-ptn-1}
Let $\text{RG}(D^{**})$ be the region graph in Fig.
\ref{fg-cmplx-1} (a). In Example \ref{ex-dcn-code}, we have seen
that $\Pi=\{S_1,S_2,S_3$, $R_1,\cdots,R_9\}$. By (2) of Definition
\ref{lmd-omd}, $\Omega_j=\{T_j\}$ for $j\in\{1,\cdots,6\}$ and
$\Omega_I=\emptyset$ for all $I\subseteq\{1,\cdots,6\}$ of size
$|I|\geq 2$. So $\text{RG}(D^{**})$ is terminal-separable.
Moreover, by (3) of Definition \ref{lmd-omd}, $\Lambda_1=\{R_2,
R_5\}$, $\Lambda_2=\{R_1, R_5\}$, $\Lambda_3=\{R_7, R_8\}$,
$\Lambda_4=\{S_3, R_7\}$, $\Lambda_5=\{R_3, R_6\}$,
$\Lambda_6=\{R_7, R_9\}$.
\end{exam}

Let $\tilde{C}_\Pi=\{d_R\in\mathbb F^3; R\in\Pi\}$ be an arbitrary
feasible code on $\Pi$. By Remark \ref{rem-sgl-sets}, we assume
$d_R\neq 0$ for all $R\in\Pi$. The following process yields a
partition of $\Pi$.

First, we consider $S_3$. Since $\{S_3, R_7\}=\Lambda_4$, then by
2) of Lemma \ref{lem-sgl-sets}, $\langle d_{R_7},
\bar{\alpha}\rangle =\langle d_{S_3}, \bar{\alpha}\rangle.$
Similarly, $\{R_7, R_8\}=\Lambda_3$ and $\{R_7, R_9\}=\Lambda_6$
imply that $\langle d_{R_8}, \bar{\alpha}\rangle=\langle d_{R_7},
\bar{\alpha}\rangle=\langle d_{R_9}, \bar{\alpha}\rangle.$ So we
have $$\langle d_{S_3}, \bar{\alpha}\rangle=\langle d_{R_7},
\bar{\alpha}\rangle=\langle d_{R_8}, \bar{\alpha}\rangle=\langle
d_{R_9}, \bar{\alpha}\rangle.$$ For the coding vectors
$d_{S_3},d_{R_7},d_{R_8}$ and $d_{R_9}$, note that if we know one
of them, then we can use the above relation to obtain all of them.
As an example, let us see how $d_{S_3}$ determine $d_{R_7}$: Since
$R_7\in\text{reg}(S_1,S_2)$, then by 1) of Lemma
\ref{lem-sgl-sets}, $d_{R_7}\in\langle\alpha_{1},
\alpha_{2}\rangle$. So $\langle d_{R_7}\rangle=\langle d_{S_3},
\bar{\alpha}\rangle \cap\langle\alpha_1, \alpha_2\rangle=\langle
\alpha_1+\alpha_2\rangle.$

Second, consider $R_1$. Since $\{R_1, R_5\}=\Lambda_2$ and $\{R_2,
R_5\}=\Lambda_1$, then by 2) of Lemma \ref{lem-sgl-sets}, we have
$$\langle d_{R_1}, \bar{\alpha}\rangle=\langle d_{R_5},
\bar{\alpha}\rangle=\langle d_{R_2}, \bar{\alpha}\rangle.$$ Note
that $\{R_1,R_2\}\subseteq\text{reg}(S_1,S_2)$. Then by 1) of
Lemma \ref{lem-sgl-sets}, $d_{R_1}, d_{R_2}\in\langle\alpha_1,
\alpha_2\rangle.$ So $d_{R_2}\in\langle d_{R_1},
\bar{\alpha}\rangle\cap\langle\alpha_1, \alpha_2\rangle=\langle
d_{R_1}\rangle$, which implies that $\langle
d_{R_1}\rangle=\langle d_{R_2}\rangle.$ Moreover, by Fig.
\ref{fg-cmplx-1} (a), $R_6\in\text{reg}(R_1, R_2)$. Then by 1) of
Lemma \ref{lem-sgl-sets}, $\langle d_{R_6}\rangle=\langle
d_{R_2}\rangle=\langle d_{R_1}\rangle.$ So $\langle d_{R_6},
\bar{\alpha}\rangle=\langle d_{R_1}, \bar{\alpha}\rangle=\langle
d_{R_5}, \bar{\alpha}\rangle=\langle d_{R_2},
\bar{\alpha}\rangle.$ Again by the condition $\{R_3,
R_6\}=\Lambda_5$ and 2) of Lemma \ref{lem-sgl-sets}, we have
$\langle d_{R_3}, \bar{\alpha}\rangle=\langle d_{R_6},
\bar{\alpha}\rangle.$ Thus,
$$\langle d_{R_3},
\bar{\alpha}\rangle=\langle d_{R_6}, \bar{\alpha}\rangle=\langle
d_{R_1}, \bar{\alpha}\rangle=\langle d_{R_5},
\bar{\alpha}\rangle=\langle d_{R_2}, \bar{\alpha}\rangle.$$ %So
%$d_{R_1}, d_{R_2}, d_{R_3}, d_{R_5}$ and $d_{R_6}$ are determined
%by each other.

Let $\Delta_1=\{S_1\},\Delta_2=\{S_2\}$, $\Delta_3=\{S_3, R_7,
R_8, R_9\}$, $\Delta_4=\{R_1, R_2, R_3, R_5, R_6\}$,
$\Delta_5=\{R_4\}$ and $\mathcal I_c=\{\Delta_1, \Delta_2,
\Delta_3, \Delta_4, \Delta_5\}$. (See Fig. \ref{fg-cmplx-1} (a).)
Then for any $\Delta_i$, if we know the coding vector of one
region in $\Delta_i$, then we can obtain the coding vectors of all
other regions in $\Delta_i$.

Let $\tilde{C}_\Pi$ be the $\mathcal I_c$-weak decentralized code
constructed in Fig. \ref{fg-cmplx-1} (b). It is easy to check that
$\tilde{C}_\Pi$ is a feasible code on $\Pi$. So
$\text{RG}(D^{**})$ is feasible.

%%%%%%%%%%%%%%%%%%%%%%%%%%%%%%%%%%%%%%%%%%%
\renewcommand\figurename{Fig}
\begin{figure}[htbp]
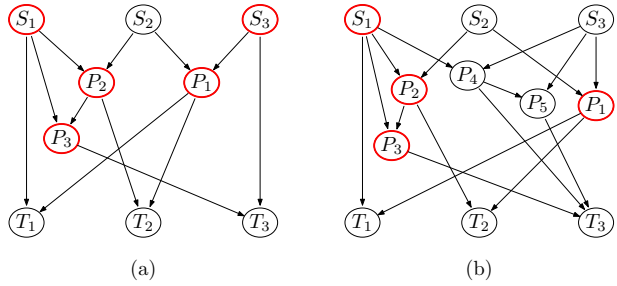

\begin{center}
\vspace{0.2cm}\includegraphics[height=3.7cm]{infsb.3}
\hspace{0.7cm}\includegraphics[height=3.7cm]{infsb.2}
\end{center}
\caption{Two examples of infeasible region graph.
}\label{fg-infsb}
\end{figure}
%%%%%%%%%%%%%%%%%%%%%%%%%%%%%%%%%%%%%%%%%%%%%%

\begin{exam}\label{ex-chr-ptn-2}
Let $\text{RG}(D^{**})$ be the region graph in Fig. \ref{fg-infsb}
(a). By Definition \ref{g-reg},
$\text{reg}(S_1,S_2)=\{S_1,S_2,P_2,P_3\}$,
$\text{reg}(S_1,S_3)=\{S_1,S_3\}$ and
$\text{reg}(S_2,S_3)=\{S_2,S_3,P_1\}$. By (2) of Definition
\ref{lmd-omd}, $\Omega_j=\{T_j\}$ for $j\in\{1,2,3\}$ and
$\Omega_I=\emptyset$ for all $I\subseteq\{1,2,3\}$ of size
$|I|\geq 2$. So $\text{RG}(D^{**})$ is terminal-separable. By (3)
of Definition \ref{lmd-omd}, $\Lambda_1=\{S_1, P_1\}$,
$\Lambda_2=\{P_1, P_2\}$ and $\Lambda_3=\{S_3, P_3\}$.
\end{exam}

Let $\tilde{C}_\Pi=\{d_R\in\mathbb F^3; R\in\Pi\}$ be an arbitrary
feasible code on $\Pi$.

Consider $S_1$. Since $\{S_1, P_1\}=\Lambda_1$ and $\{P_1,
P_2\}=\Lambda_2$, then by 2) of Lemma \ref{lem-sgl-sets}, we have
$$\langle d_{P_2}, \bar{\alpha}\rangle=\langle d_{P_1},
\bar{\alpha}\rangle=\langle d_{S_1}, \bar{\alpha}\rangle.$$ So
$d_{P_2}\in\langle d_{S_1}, \bar{\alpha}\rangle.$ Note that
$P_2\in\text{reg}(S_1,S_2)$. Then by 1) of Lemma
\ref{lem-sgl-sets}, we have
$d_{P_2}\in\langle\alpha_1,\alpha_2\rangle.$ Thus, $$\langle
d_{P_2}\rangle=\langle d_{S_1},
\bar{\alpha}\rangle\cap\langle\alpha_1,\alpha_2\rangle=\langle
d_{S_1}\rangle.$$ Moreover, since $P_3\in\text{reg}(S_1,P_2)$ (See
Fig. \ref{fg-infsb} (a).), then by 1) of Lemma \ref{lem-sgl-sets},
$\langle d_{P_3}\rangle=\langle d_{P_2}\rangle=\langle
d_{S_1}\rangle$, which implies that
$$\langle d_{P_3}, \bar{\alpha}\rangle=\langle d_{P_2},
\bar{\alpha}\rangle=\langle d_{P_1}, \bar{\alpha}\rangle=\langle
d_{S_1}, \bar{\alpha}\rangle.$$ Again by $\{S_3, P_3\}=\Lambda_3$
and 2) of Lemma \ref{lem-sgl-sets}, we have
\begin{align}\label{ex-chr-ptn-2-eq-1}
\langle d_{S_3}, \bar{\alpha}\rangle=\langle d_{P_3},
\bar{\alpha}\rangle=\langle d_{P_2}, \bar{\alpha}\rangle=\langle
d_{P_1}, \bar{\alpha}\rangle=\langle d_{S_1}, \bar{\alpha}\rangle.
\end{align} So we obtain a subset $[S_1]=\{S_1,P_1,P_2,P_3,S_3\}$.
Let $\mathcal I_c=\{[S_1],[S_2]\}$, where $[S_2]=\{S_2\}$. Note
that $[S_1]=[S_3]$.

We remark that (\ref{ex-chr-ptn-2-eq-1}) can not be satisfied
because we can check that $d_{S_1}=\alpha_1\notin\langle \alpha_3,
\bar{\alpha}\rangle=\langle d_{S_3}, \bar{\alpha}\rangle$. So
$\langle d_{S_3}, \bar{\alpha}\rangle\neq\langle d_{S_1},
\bar{\alpha}\rangle$. Thus, $\tilde{C}_\Pi$ can not be well
constructed. By Theorem \ref{lmd-solv}, $\text{RG}(D^{**})$ is
infeasible.

\begin{exam}\label{ex-chr-ptn-3}
Let $\text{RG}(D^{**})$ be the region graph in Fig. \ref{fg-infsb}
(b). By Definition \ref{g-reg},
$\text{reg}(S_1,S_2)=\{S_1,S_2,P_2,P_3\}$,
$\text{reg}(S_1,S_3)=\{S_1,S_3,P_4,P_5\}$ and
$\text{reg}(S_2,S_3)=\{S_2,S_3$, $P_1\}$. By (2) of Definition
\ref{lmd-omd}, $\Omega_j=\{T_j\}$ for $j\in\{1,2,3\}$ and
$\Omega_I=\emptyset$ for all $I\subseteq\{1,2,3\}$ of size
$|I|\geq 2$. So $\text{RG}(D^{**})$ is terminal-separable.
Moreover, by (3) of Definition \ref{lmd-omd}, $\Lambda_1=\{S_1,
P_1\}$, $\Lambda_2=\{P_1, P_2\}$ and $\Lambda_3=\{P_3, P_4,
P_5\}$.
\end{exam}

Let $\tilde{C}_\Pi=\{d_R\in\mathbb F^3; R\in\Pi\}$ be an arbitrary
feasible code on $\Pi$. Consider $S_1\in\Pi$. By the same
discussion as in Example \ref{ex-chr-ptn-2}, we can find a subset
$[S_1]=\{S_1,P_1,P_2,P_3\}$ such that
\begin{align*}%\label{ex-chr-ptn-3-eq-1}
\langle d_{P_3}, \bar{\alpha}\rangle=\langle
d_{P_2}, \bar{\alpha}\rangle=\langle d_{P_1},
\bar{\alpha}\rangle=\langle d_{S_1}, \bar{\alpha}\rangle.
\end{align*} Let $\mathcal
I_c=\{[S_1],[S_2],[S_3],[P_4],[P_5]\}$, where $[S_2]=\{S_2\}$,
$[S_3]=\{S_3\}$, $[P_4]=\{P_4\}$ and $[P_5]=\{P_5\}$. Note that
$[S_1]_1=\{S_1,P_2,P_3\}$ and $\Lambda_3=\{P_3, P_4,
P_5\}\subseteq[S_1]_1\bigcup\text{reg}(S_{1},S_{3})$.

We now derive a contradiction. Firstly, as in Example
\ref{ex-chr-ptn-2}, we have $\langle d_{P_3}\rangle=\langle
d_{S_1}\rangle.$ Secondly, since $P_4, P_5\in\text{reg}(S_1,
S_3)$, by 1) of Lemma \ref{lem-sgl-sets}, we have $d_{P_4},
d_{P_5}\in\langle \alpha_1, \alpha_3\rangle$. Then by condition
(3) of Definition \ref{sgl-sets},
$\bar{\alpha}=\alpha_1+\alpha_2+\alpha_3\in\langle d_R;
R\in\Lambda_3\rangle=\langle
d_{P_3},d_{P_4},d_{P_5}\rangle\subseteq\langle \alpha_1,
\alpha_3\rangle$, a contradiction.

Thus, there exists no feasible code on $\Pi$ and we conclude that
$\text{RG}(D^{**})$ is infeasible.

In general, if $\text{RG}(D^{**})$ is terminal-separable, then we
can always find a partition $\mathcal I_c$ of $\Pi$ such that : If
$[S_i]=[S_j]$ for some $\{i,j\}\subseteq\{1,2,3\}$, then
$\text{RG}(D^{**})$ is infeasible (Example \ref{ex-chr-ptn-2}); If
$\Lambda_j\subseteq[S_{i_1}]_{i_1}\bigcup[S_{i_2}]_{i_2}
\bigcup\text{reg}(S_{i_1}, S_{i_2})$ for some $j\in[n]$ and
$\{i_1,i_2\}\subseteq\{1,2,3\}$, then $\text{RG}(D^{**})$ is
infeasible (Example \ref{ex-chr-ptn-3}); Otherwise, the $\mathcal
I_c$-weak decentralized code on $\Pi$ is a feasible code on $\Pi$
and $\text{RG}(D^{**})$ is feasible (Example \ref{ex-chr-ptn-1}).
---- In this sense, we call $\mathcal I_c$ a normal partition of $\Pi$.

\subsection{Determining Feasibility of $\text{RG}(D^{**})$}
In this subsection we formally describe our method, which is a
generalization of the idea of last subsection.

Firstly, we need a definitions.
\begin{defn}\label{cntd}
Let $\mathcal I=\{[S_1],[S_2],[S_3],\cdots,[R_K]\}$ be a partition
of $\Pi$ and $\{[R'],[R'']\}\subseteq\mathcal I$. We say that
$[R']$ and $[R'']$ are \emph{connected} if one of the following
conditions hold:
\begin{itemize}
  \item [(1)] There is a subclass $[[R']]$ of $[R']$ and a subclass
  $[[R'']]$ of $[R'']$ such that $\Lambda_j\subseteq[[R']]\cup[[R'']]$
  for some $j\in[n]$;
  \item [(2)] There is a subset $\{i_1,i_2\}\subseteq\{1,2,3\}$ such
  that $\text{reg}([R']_{i_1,i_2})\cap\text{reg}([R'']_{i_1,i_2})
  \neq\emptyset$.
\end{itemize}
\end{defn}

Suppose $\mathcal I$ is a partition of $\Pi$ and $\{[R'],
[R'']\}\subseteq\mathcal I$. By combining $[R']$ and $[R'']$ into
one equivalent class $[R']\cup[R'']$, we obtain a partition
$\mathcal I'=\mathcal I\cup\{[R']\cup[R'']\}\setminus\{[R'],
[R'']\}$ of $\Pi$. We call $\mathcal I'$ a \emph{contraction} of
$\mathcal I$ by combining $[R']$ and $[R'']$.

\begin{exam}\label{eg-chpt}
Consider again Example \ref{ex-chr-ptn-3}. We can check that
$\mathcal I_c$ is obtained from $\mathcal I_0$ by a series of
contraction by combining two connected equivalent classes, where
$\mathcal I_0=\{[R];R\in\Pi\}$ is the trivial partition of $\Pi$.
In fact, by condition (1) of Definition \ref{cntd},
$[S_1]=\{S_1\}$ and $[P_1]=\{P_1\}$ are connected in $\mathcal
I_0$. So by combining $[S_1]$ and $[P_1]$ into
$[S_1]=\{S_1,P_1\}$, we obtain a partition $\mathcal I_1$ of
$\Pi$. Similarly, $[S_1]=\{S_1,P_1\}$ and $[P_2]=\{P_2\}$ are
connected in $\mathcal I_1$ and we can obtain a partition
$\mathcal I_2$ of $\Pi$ by combining $[S_1]$ and $[P_2]$ into
$[S_1]=\{S_1,P_1,P_2\}$. Observing Fig. \ref{fg-infsb} (b), we
have $P_3\in\text{reg}(S_1,P_2)=\text{reg}([S_1]_{1,2})$. So by
condition (2) of Definition \ref{cntd}, $[S_1]=\{S_1,P_1,P_2\}$
and $[P_3]=\{P_3\}$ are connected in $\mathcal I_2$ and we can
obtain a partition $\mathcal I_3$ of $\Pi$ by combining
$[S_1]=\{S_1,P_1,P_2\}$ and $[P_3]=\{P_3\}$ into
$[S_1]=\{S_1,P_1,P_2,P_3\}$. We can check that no pair of
equivalent classes of $\mathcal I_3$ are connected.
\end{exam}

The above example also gives an illustration of constructing
normal partition of $\Pi$. Let $\mathcal I_0=\{[R]; R\in\Pi\}$ be
the trivial partition of $\Pi$. A normal partition of $\Pi$ can be
obtained from $\mathcal I_0$ by a series of contraction.
Specifically, we have the following definition.

\begin{defn}[Normal Partition]\label{chpt}
Let $\mathcal I_0,\mathcal I_1,\cdots,\mathcal I_L=\mathcal I_c$
be a sequence of partitions of $\Pi$ such that for each
$\ell\in\{1,\cdots,L\}$ and $\{i,j\}\subseteq\{1,2,3\}$,
$[S_i]\neq[S_j]$ in $\mathcal I_{\ell-1}$ and $\mathcal I_\ell$ is
a contraction of $\mathcal I_{\ell-1}$ by combining two connected
equivalent classes. The partition $\mathcal I_c$ is called a
\emph{normal partition} of $\Pi$ if one of the following
conditions hold:
\begin{itemize}
  \item [(1)] $[S_i]=[S_j]$ for some $\{i,j\}\subseteq\{1,2,3\}$;
  \item [(2)] No pair of equivalent classes in $\mathcal I_c$ are
  connected.
\end{itemize}
\end{defn}

By previous discussion, the partition $\mathcal I_c$ in Example
\ref{ex-chr-ptn-3} is a normal partition of $\Pi$. We can also
check that the partition $\mathcal I_c$ in Example
\ref{ex-chr-ptn-1} is a normal partition of $\Pi$.

Consider the partition $\mathcal I_c=\{[S_1],[S_2]\}$ in Example
\ref{ex-chr-ptn-2}, where $[S_1]=\{S_1,P_1,P_2,P_3,S_3\}$ and
$[S_2]=\{S_2\}$. It is easy to check that $\mathcal I_c$ is
obtained from $\mathcal I_0$ by a series of contraction by
combining two connected equivalent classes. In this example, we
have $[S_1]=\{S_1,P_1,P_2,P_3,S_3\}=[S_3]$. So $\mathcal I_c$ is a
normal partition of $\Pi$.

As we have seen in the last subsection, for a feasible code on
$\Pi$, the coding vectors of regions in two connected equivalent
classes are determined by each other. Thus, we can combine such
two equivalent classes together. Specifically, we have the
following lemma.
\begin{lem}\label{chpt-lem}
Let $ \tilde{C}_\Pi=\{d_R; R\in\Pi\}$ be a feasible code on $\Pi$
and $\mathcal I_c$ be a normal partition of $\Pi$. Then
$d_{Q'}\in\langle d_{Q}\rangle$ for any $[R]\in\mathcal I_c$, any
$\{i_1,i_2\}\subseteq\{1,2,3\}$ and any
$\{Q,Q'\}\subseteq[R]_{i_1,i_2}$.
\end{lem}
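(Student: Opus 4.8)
The plan is to induct along the defining contraction sequence $\mathcal I_0,\mathcal I_1,\dots,\mathcal I_L=\mathcal I_c$ of the normal partition (Definition \ref{chpt}), keeping the feasible code $\tilde{C}_\Pi$ fixed and assuming, by Remark \ref{rem-sgl-sets}, that $d_R\neq 0$ for all $R\in\Pi$. The target statement is exactly the clause I will call (INV-a): for every class $[R]$ and every pair $\{i_1,i_2\}$, the vectors $\{d_Q:Q\in[R]_{i_1,i_2}\}$ lie on one line. Examples \ref{ex-chr-ptn-1}--\ref{ex-chr-ptn-3} suggest that (INV-a) by itself is not inductive: a contraction justified on one super region $\text{reg}(S_{j_1},S_{j_2})$ must still preserve parallelism on the other super regions the two merged classes happen to meet. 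To propagate the constraint I would strengthen the hypothesis with (INV-b): to each class $[R]$ attach a $2$-dimensional space $V_{[R]}\ni\bar\alpha$ such that every nonempty piece $[R]_{i_1,i_2}$ has its common direction $u$ satisfying $\langle\bar\alpha,u\rangle=V_{[R]}$. Here $\langle\bar\alpha,u\rangle$ is automatically $2$-dimensional, since $u\in\langle\alpha_{i_1},\alpha_{i_2}\rangle$ by Lemma \ref{lem-sgl-sets}, part 1), while $\bar\alpha\notin\langle\alpha_{i_1},\alpha_{i_2}\rangle$. The base case $\mathcal I_0$ is immediate: each class is a singleton, so each nonempty piece is one vector (so (INV-a) is trivial), and a source singleton $\{S_i\}$ carries $d_{S_i}=\alpha_i$ on both its pieces, giving (INV-b) with $V_{[S_i]}=\langle\bar\alpha,\alpha_i\rangle$.

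For the inductive step I would merge two connected classes $[R'],[R'']$ and split on the two clauses of Definition \ref{cntd}, using that, by the inductive hypothesis, each subclass already carries a single direction. In a type-(1) link $\Lambda_j\subseteq[[R']]\cup[[R'']]$, Lemma \ref{in-reg-lmd-t-s}, part 2) (namely $\Lambda_j\nsubseteq\text{reg}(S_{i_1},S_{i_2})$) forbids the two subclasses from lying in the same super region, so their directions $u',u''$ sit in two distinct coordinate planes; condition (3) of Definition \ref{sgl-sets} gives $\bar\alpha\in\langle d_R:R\in\Lambda_j\rangle\subseteq\langle u',u''\rangle$, and since $\bar\alpha\notin\langle u'\rangle$ this forces $u',u''$ independent with $\langle\bar\alpha,u'\rangle=\langle\bar\alpha,u''\rangle$, i.e. $V_{[R']}=V_{[R'']}$. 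In a type-(2) link there is a witness $W\in\text{reg}([R']_{j_1,j_2})\cap\text{reg}([R'']_{j_1,j_2})$, and Lemma \ref{lem-sgl-sets}, part 1), puts $d_W$ into both $\langle u'_{j_1,j_2}\rangle$ and $\langle u''_{j_1,j_2}\rangle$; as $d_W\neq 0$ these pieces share a direction, so again $V_{[R']}=V_{[R'']}$. Writing $V$ for this common space, I would finish both clauses uniformly: for any pair $\{i_1,i_2\}$ on which both classes have a nonempty piece, the two directions lie in $V\cap\langle\alpha_{i_1},\alpha_{i_2}\rangle$, which is a single line because $V$ and $\langle\alpha_{i_1},\alpha_{i_2}\rangle$ are distinct $2$-spaces (only $V$ contains $\bar\alpha$); this gives (INV-a) for the merged piece, and (INV-b) for the merged class persists since every surviving direction still spans $V$ with $\bar\alpha$.

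The main obstacle is precisely this cross-pair bookkeeping: a single connection lives on one super region or through one cut $\Lambda_j$, yet the merge must not destroy parallelism on the remaining super regions the class touches. Clause (INV-b) is the mechanism that carries the constraint across, and the elementary fact that a plane through $\bar\alpha$ meets each coordinate plane $\langle\alpha_{i_1},\alpha_{i_2}\rangle$ in a line is what turns ``equal planes'' back into ``equal directions''. A final consistency point I would record is the degenerate merge: if some contraction combined the class of $S_i$ with that of $S_j$ for $i\neq j$, then (INV-b) would force $\langle\bar\alpha,\alpha_i\rangle=\langle\bar\alpha,\alpha_j\rangle$, which is false; hence under the standing hypothesis that a feasible code exists, case (1) of Definition \ref{chpt} cannot be reached, the sequence stays in case (2), and the induction runs undisturbed all the way to $\mathcal I_c$, yielding (INV-a) and therefore the lemma.
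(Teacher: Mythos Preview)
Your proof is correct and follows essentially the same approach as the paper: induct along the contraction sequence while maintaining, for each class, the invariant that all its coding vectors lie in a common $2$-plane through $\bar\alpha$. The paper packages this invariant as the single property (P1), namely $d_{Q'}\in\langle d_Q,\bar\alpha\rangle$ for all $Q,Q'$ in a class, which is logically equivalent to your (INV-a)$+$(INV-b); your formulation via an explicit plane $V_{[R]}$ is a bit more geometric, but the inductive step (splitting on the two clauses of Definition~\ref{cntd} and using Lemma~\ref{lem-sgl-sets} together with condition (3) of Definition~\ref{sgl-sets}) is the same in substance, and your closing observation that case~(1) of Definition~\ref{chpt} is unreachable under a feasible code is exactly what the paper later uses in the proof of Theorem~\ref{regular-cmptl}.
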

\begin{proof}
The proof is given in Appendix B.
\end{proof}

By Definition \ref{chpt}, it is easy to check that the following
Algorithm 3 outputs a normal partition of $\Pi$.

\vspace{0.2cm}
\begin{center}
\setlength{\unitlength}{1mm}
\begin{picture}(110,64)(-2,-62)
\put(-1,4){\line(1,0){107}} \put(-1,4){\line(0,-1){66}}

\put(106,-62){\line(-1,0){107}} \put(106,-62){\line(0,1){66}}

\put(0,0){\textbf{Algorithm 3}: Partitioning algorithm $(\Pi)$:}

\put(3,-5){$L=0$;}

\put(3,-10){\textbf{While} there are $R',R''\in\mathcal I_L$ which
are connected \textbf{do}}

\put(6,-15){Let $~\mathcal I_{L+1}~$ be a contraction of
$~\mathcal I_{L}~$ by combining $R'$}

\put(6,-20){and $R''$;}

\put(6,-25){\textbf{If} $[S_i]=[S_j]$ for some
$\{i,j\}\subseteq\{1,2,3\}$ \textbf{then}}

\put(9,-30){$\mathcal I_c=\mathcal I_L$;}

\put(9,-35){\textbf{return} $\mathcal I_c$;}

\put(9,-40){\textbf{stop};}

\put(6,-45){\textbf{else}}

\put(9,-50){$L=L+1$;}

\put(3,-55){$\mathcal I_c=\mathcal I_L$;}

\put(3,-60){\textbf{return} $\mathcal I_c$;}

\end{picture}
\end{center}

Clearly, the While-loop of Algorithm 3 has at most $|\mathcal
I_0|=|\Pi|$ rounds. In each round, we need to determine wether
there are two connected equivalent classes, which can be done in
time $O(n)$ by Definition \ref{cntd}. So Algorithm 3 can output
$\mathcal I_c$ in $\{|\Pi|,n\}$-polynomial time.

%%%%%%%%%%%%%%%%%%%%%%%%%%%%%%%%%%%%%%%%%%%
\renewcommand\figurename{Fig}
\begin{figure}[htbp]
\begin{center}
\includegraphics[height=3.6cm]{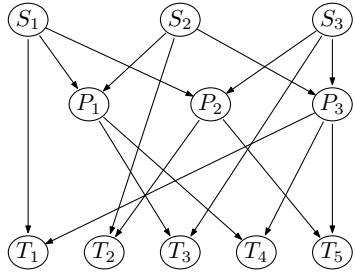}
\end{center}
\caption{An example of infeasible region graph.}\label{fg-m-chtp}
\end{figure}
%%%%%%%%%%%%%%%%%%%%%%%%%%%%%%%%%%%%%%%%%%%%%%

We remark that for a given region graph, there could be several
normal partitions of $\Pi$. Consider the region graph in Fig.
\ref{fg-m-chtp}. Let $\mathcal I_c=\{[S_1],[S_3],[P_1]\}$ such
that $[S_1]=\{S_1,P_3$, $P_2,S_2\},[S_3]=\{S_3\},[P_1]=\{P_1\}$
and $\mathcal I_c'=\{[S_1],[S_2],[P_2]\}$ such that
$[S_1]=\{S_1,P_3,P_1,S_3\},[S_2]=\{S_2\},[P_2]=\{P_2\}$. Then
$\mathcal I_c$ and $\mathcal I_c'$ are both normal partitions of
$\Pi$. However, we will show that to determine whether
$\text{RG}(D^{**})$ is feasible, it is sufficient to construct one
normal partition of $\Pi$ (as Algorithm 3 does).

\begin{defn}[Compatibility]\label{cmptl}
Suppose $\mathcal I=\{[R_1],[R_2]$, $[R_3],\cdots,[R_K]\}$ is a
partition of $\Pi$. We say that $\mathcal I$ is \emph{compatible}
if the following three conditions hold:
\begin{itemize}
  \item [(1)] $[S_i]\neq[S_j]$ for all pair $\{i,j\}\subseteq\{1,2,3\}$;
  \item [(2)] No pair of equivalent classes of $\mathcal I$
  are connected;
  \item [(3)] $\Lambda_j\nsubseteq[S_{i_1}]_{i_1}\bigcup[S_{i_2}]_{i_2}
  \bigcup\text{reg}(S_{i_1},S_{i_2})$
  for all $j\in[n]$ and $\{i_1,i_2\} \subseteq\{1,2,3\}$.
\end{itemize}
\end{defn}

The following theorem is one of the main results of this section.
\begin{thm}\label{regular-cmptl}
Let $\mathcal I_c$ be a normal partition of $\Pi$. Then
$\text{RG}(D^{**})$ is feasible if and only if $\mathcal I_c$ is
compatible. Moreover, it is $\{|\Pi|,n\}$-polynomial time
complexity to determine whether $\text{RG}(D^{**})$ is feasible.
\end{thm}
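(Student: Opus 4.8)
The plan is to invoke Theorem~\ref{lmd-solv}, which reduces feasibility of $\text{RG}(D^{**})$ to the existence of a feasible code on $\Pi$ (Definition~\ref{sgl-sets}), and then to prove that such a code exists if and only if the normal partition $\mathcal I_c$ is compatible (Definition~\ref{cmptl}). I would treat the two implications separately and finish with the complexity count.

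For the ``if'' direction I would first observe that compatibility conditions (1) and (2) make $\mathcal I_c$ an R-closed partition of $\Pi$ (Definition~\ref{R-closed-Pi}): condition (1) gives $[S_{i_1}]\neq[S_{i_2}]$, and condition (2) forces $\text{reg}([\Delta_j]_{i_1,i_2})=[\Delta_j]_{i_1,i_2}$, since any region in $\text{reg}([\Delta_j]_{i_1,i_2})\setminus[\Delta_j]_{i_1,i_2}$ would witness connection condition (2) of Definition~\ref{cntd} between two distinct classes. Over a sufficiently large field I then build the $\mathcal I_c$-weak decentralized code $\tilde C_\Pi$ via Lemma~\ref{genc-code}; parts 1) and 2) of Theorem~\ref{prtn-dctr-code} give conditions (1) and (2) of Definition~\ref{sgl-sets} for free, so everything reduces to verifying $\bar\alpha\in\langle d_R; R\in\Lambda_j\rangle$ for each $j\in[n]$. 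Here I split on $\Lambda_j$: if two of its regions lie in two different subclasses of one equivalent class, part 3) of Theorem~\ref{prtn-dctr-code} yields $\bar\alpha$ at once; otherwise I use Lemma~\ref{in-reg-lmd-t-s} (namely $|\Lambda_j|\geq2$ and $\Lambda_j\nsubseteq\text{reg}(S_{i_1},S_{i_2})$) together with compatibility conditions (2),(3) to show that $\Lambda_j$ meets at least three distinct subclasses lying in three distinct equivalent classes, whose coding vectors cover all three coordinate directions; from these I extract an $\mathcal I_c$-independent set (Definition~\ref{ind-set-ptn}) and apply part 4) of Theorem~\ref{prtn-dctr-code}.

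For the ``only if'' direction I would start from a feasible code $\tilde C_\Pi$ with all $d_R\neq0$ and lean on Lemma~\ref{chpt-lem}, which says regions in a common $[R]_{i_1,i_2}$ carry parallel coding vectors. Condition (1) of compatibility is then immediate: if $[S_i]=[S_j]$ then $S_i,S_j\in[S_i]_{i,j}$, so Lemma~\ref{chpt-lem} forces $\alpha_j\in\langle\alpha_i\rangle$, which is impossible; consequently the normal partition did not terminate through case (1) of Definition~\ref{chpt}, hence terminated through case (2), which is precisely compatibility condition (2). For condition (3) I argue by contradiction: if $\Lambda_j\subseteq[S_{i_1}]_{i_1}\cup[S_{i_2}]_{i_2}\cup\text{reg}(S_{i_1},S_{i_2})$, then Lemma~\ref{chpt-lem} and Lemma~\ref{lem-sgl-sets} place every $d_R$ with $R\in\Lambda_j$ into $\langle\alpha_{i_1},\alpha_{i_2}\rangle$, whence $\bar\alpha\in\langle d_R; R\in\Lambda_j\rangle\subseteq\langle\alpha_{i_1},\alpha_{i_2}\rangle$, contradicting that $\bar\alpha$ lies in no coordinate plane.

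The complexity claim follows by composing the earlier bounds: $D^{**}$ and $\text{RG}(D^{**})$ cost $O(|E|)$ (Theorem~\ref{b-reg-unq}), the sets $\Pi$, the $\Omega_I$ and the $\Lambda_j$ cost $O(|D^{**}|)$ (Theorem~\ref{omg-intc}), the normal partition $\mathcal I_c$ is produced by Algorithm 3 in $\{|\Pi|,n\}$-polynomial time, and the three compatibility conditions of Definition~\ref{cmptl} are checked directly in polynomial time. I expect the genuine obstacle to be the verification of Definition~\ref{sgl-sets}(3) in the ``if'' direction: the delicate point is to turn the two combinatorial hypotheses (no two regions of $\Lambda_j$ in different subclasses of one class, and no pair of classes connected) together with compatibility condition (3) into the existence of three regions whose coding vectors are \emph{actually} linearly independent, i.e. the coordinate-covering selection that feeds condition (4) of Lemma~\ref{genc-code}.
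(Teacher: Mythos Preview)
Your proposal is correct and follows essentially the same route as the paper: the ``if'' direction is exactly the content of Lemmas~\ref{cmpt-wd-code} and~\ref{comp-code} (R-closedness from compatibility, then weak decentralized code plus the two-case analysis on $\Lambda_j$), and the ``only if'' direction is the same three-step check using Lemma~\ref{chpt-lem} and the termination dichotomy in Definition~\ref{chpt}. Your flagged ``genuine obstacle''---extracting an $\mathcal I_c$-independent triple from $\Lambda_j$ in Case~2---is indeed the one place where both your sketch and the paper's proof of Lemma~\ref{comp-code} are brief, but the argument you indicate (at least three classes by connectedness condition~(1), then condition~(3) of compatibility to escape every $[S_{i_1}]_{i_1}\cup[S_{i_2}]_{i_2}\cup\text{reg}(S_{i_1},S_{i_2})$) is the right one.
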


By Definition \ref{chpt} and Theorem \ref{regular-cmptl}, to
determine feasibility of $\text{RG}(D^{**})$, we start with the
trivial partition $\mathcal I_0$ of $\Pi$ and combine the pair of
connected equivalent classes step by step. If $[S_i]=[S_j]$ for
some $\{i,j\}\subseteq\{1,2,3\}$ in some step, then
$\text{RG}(D^{**})$ is infeasible. Else, by at most $|\Pi|$ steps,
we can obtain a partition $\mathcal I_c$ of $\Pi$ satisfying
conditions (1), (2) of Definition \ref{cmptl}. If $\mathcal I_c$
further satisfies condition (3) of Definition \ref{cmptl}, then
$\text{RG}(D^{**})$ is feasible. Otherwise, $\text{RG}(D^{**})$ is
infeasible.

Before proving Theorem \ref{regular-cmptl}, we first give some
lemmas.

\begin{lem}\label{cmpt-wd-code}
If $\mathcal I$ is a compatible partition of $\Pi$, then we can
construct an $\mathcal I$-weak decentralized code on $\Pi$.
\end{lem}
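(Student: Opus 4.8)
The plan is to show that a \emph{compatible} partition $\mathcal I$ is automatically \emph{R-closed} in the sense of Definition \ref{R-closed-Pi}; once this is in hand, the assignment rule of Definition \ref{prtn-dctr-code-def} applies verbatim, using the vector sets $\mathcal B_1,\dots,\mathcal B_K$ supplied by Lemma \ref{genc-code} over a sufficiently large field. A useful observation up front is that condition (3) of Definition \ref{cmptl} plays no role here: that condition governs whether the resulting code is \emph{feasible} (the concern of Theorem \ref{regular-cmptl}), not whether it can be built at all. So the work reduces entirely to conditions (1) and (2) of compatibility.

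First I would dispose of the routine requirements. Condition (1) of Definition \ref{cmptl} gives $[S_{i_1}]\neq[S_{i_2}]$ for every pair, which is precisely the first clause of R-closedness and also guarantees $K\geq 3$; hence Lemma \ref{genc-code} indeed furnishes the needed $\mathcal B_1,\dots,\mathcal B_K$, and the ``proper naming'' $\Delta_i=[S_i]$ for $i\in\{1,2,3\}$ is legitimate. It then remains to verify the second clause of Definition \ref{R-closed-Pi}, namely $\text{reg}([\Delta_j]_{i_1,i_2})=[\Delta_j]_{i_1,i_2}$ for every equivalent class $\Delta_j$ and every pair $\{i_1,i_2\}\subseteq\{1,2,3\}$.

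The inclusion $[\Delta_j]_{i_1,i_2}\subseteq\text{reg}([\Delta_j]_{i_1,i_2})$ is automatic, so I would concentrate on the reverse. Since $[\Delta_j]_{i_1,i_2}=\Delta_j\cap\text{reg}(S_{i_1},S_{i_2})\subseteq\text{reg}(S_{i_1},S_{i_2})$, the monotonicity of $\text{reg}(\cdot)$ stated after Remark \ref{rem-g-reg-code} yields $\text{reg}([\Delta_j]_{i_1,i_2})\subseteq\text{reg}(S_{i_1},S_{i_2})$. Now take any $R\in\text{reg}([\Delta_j]_{i_1,i_2})$ and let $[R']$ be its equivalent class in $\mathcal I$. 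Then $R\in\text{reg}(S_{i_1},S_{i_2})$, so $R\in[R']\cap\text{reg}(S_{i_1},S_{i_2})=[R']_{i_1,i_2}\subseteq\text{reg}([R']_{i_1,i_2})$. Consequently $R$ lies simultaneously in $\text{reg}([\Delta_j]_{i_1,i_2})$ and in $\text{reg}([R']_{i_1,i_2})$, so these two super regions intersect.

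This intersection is exactly the step I expect to be decisive. If $[R']\neq\Delta_j$, then the non-empty intersection just exhibited is precisely condition (2) of Definition \ref{cntd}, making $\Delta_j$ and $[R']$ \emph{connected} and contradicting condition (2) of compatibility. Hence $[R']=\Delta_j$, so $R\in\Delta_j\cap\text{reg}(S_{i_1},S_{i_2})=[\Delta_j]_{i_1,i_2}$, which gives the reverse inclusion and thus equality. With R-closedness of $\mathcal I$ established, the $\mathcal I$-weak decentralized code is produced directly by the construction of Definition \ref{prtn-dctr-code-def}, completing the argument.
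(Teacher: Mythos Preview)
Your proposal is correct and follows essentially the same approach as the paper: both show that compatibility forces $\mathcal I$ to be R-closed by arguing that any $R\in\text{reg}([\Delta_j]_{i_1,i_2})\setminus[\Delta_j]_{i_1,i_2}$ would make $[\Delta_j]$ and $[R]$ connected via condition (2) of Definition \ref{cntd}, contradicting compatibility. Your version is simply more explicit about the intermediate steps (the monotonicity argument giving $R\in\text{reg}(S_{i_1},S_{i_2})$, and the reason $[R']\neq\Delta_j$) that the paper's terse proof leaves to the reader.
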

\begin{proof}
Let $\mathcal I=\{[R_1],[R_2],[R_3],\cdots,[R_K]\}$. Then we have
$[R_j]_{i_1,i_2}=\text{reg}([R_j]_{i_1,i_2})$ for each $j\in\{1,
2, \cdots, K\}$ and $\{i_1,i_2\}\subseteq\{1,2,3\}$. Otherwise,
there exists an
$R'\in\text{reg}([R_\ell]_{i_1,i_2})\backslash[R_\ell]_{i_1,i_2}$.
By condition (2) of Definition \ref{cntd}, $[R_\ell]$ and $[R']$
are connected, which contradicts to condition (2) of Definition
\ref{cmptl}. Thus, $\mathcal I$ is R-closed and we can construct
an $\mathcal I$-weak decentralized code on $\Pi$.
\end{proof}

\begin{lem}\label{comp-code}
If $\mathcal I$ is a compatible partition of $\Pi$, then the
$\mathcal I$-weak decentralized code is a feasible code on $\Pi$
and $\text{RG}(D^{**})$ is feasible.
\end{lem}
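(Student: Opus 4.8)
The plan is to exhibit the $\mathcal I$-weak decentralized code $\tilde C_\Pi=\{d_R;R\in\Pi\}$ — which exists by Lemma \ref{cmpt-wd-code}, since a compatible partition is R-closed — and verify that it satisfies the three conditions of Definition \ref{sgl-sets}. Conditions (1) and (2) are immediate: parts 1) and 2) of Theorem \ref{prtn-dctr-code} state precisely that $d_{S_i}=\alpha_i$ for $i\in\{1,2,3\}$ and that $d_R\in\langle d_{R'};R'\in\text{In}(R)\rangle$ for every $R\in\Pi\setminus\{S_1,S_2,S_3\}$. Hence the entire work lies in condition (3): for every $j\in[n]$ one must show that $\bar\alpha$ lies in $W_j:=\langle d_R;R\in\Lambda_j\rangle$. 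Once this is done, $\tilde C_\Pi$ is a feasible code on $\Pi$, and since $\text{RG}(D^{**})$ is terminal-separable, Theorem \ref{lmd-solv} immediately gives that $\text{RG}(D^{**})$ is feasible.

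I would prove condition (3) by contradiction. Fix $j$ and suppose $\bar\alpha\notin W_j$. Two structural facts then follow from Theorem \ref{prtn-dctr-code}. First, by part 3), no two regions of $\Lambda_j$ can lie in \emph{different} subclasses of the same equivalent class, for otherwise their two coding vectors would already span $\bar\alpha$; consequently any two regions of $\Lambda_j$ sharing an equivalent class also share a subclass, hence a coding vector. Second, by part 4), $\Lambda_j$ can contain no $\mathcal I$-independent set, for otherwise three of its coding vectors would be linearly independent and span $\mathbb F^3\ni\bar\alpha$. Reading these against Definition \ref{ind-set-ptn}, and noting that the first fact rules out three regions of $\Lambda_j$ occupying three subclasses of one common class (so condition (2) of Definition \ref{ind-set-ptn} is never the obstruction), every triple of regions of $\Lambda_j$ sitting in three distinct subclasses must be contained in $[S_{i_1}]_{i_1}\cup[S_{i_2}]_{i_2}\cup\text{reg}(S_{i_1},S_{i_2})$ for some pair $\{i_1,i_2\}\subseteq\{1,2,3\}$.

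The decisive — and hardest — step is to upgrade this ``local'' triplewise information into the single global inclusion $\Lambda_j\subseteq[S_{i_1}]_{i_1}\cup[S_{i_2}]_{i_2}\cup\text{reg}(S_{i_1},S_{i_2})$ for one fixed pair, which contradicts condition (3) of Definition \ref{cmptl}. Here I would lean on the explicit shape of the vectors from Lemma \ref{genc-code}: every $d_R$ lies in a coordinate plane $\langle\alpha_{i_1},\alpha_{i_2}\rangle$, while $\bar\alpha$ lies in none of them, and by property (4) three such vectors are dependent only if they all lie in one coordinate plane or form a full class-triple $\{\beta^{(\ell)}_{1,2},\beta^{(\ell)}_{1,3},\beta^{(\ell)}_{2,3}\}$ — and the latter is excluded by the subclass-sharing fact established above. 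A short case analysis on how many distinct coordinate planes the vectors $\{d_R;R\in\Lambda_j\}$ occupy, using property (3) (pairwise independence) together with $|\Lambda_j|\ge 2$ and $\Lambda_j\nsubseteq\text{reg}(S_{i_1},S_{i_2})$ from Lemma \ref{in-reg-lmd-t-s}, forces $W_j$ to be a single coordinate plane $\langle\alpha_{i_1},\alpha_{i_2}\rangle$; tracing each region's subclass back through (\ref{eq-nt-3}) and (\ref{eq-nt-4}) then places all of $\Lambda_j$ inside $[S_{i_1}]_{i_1}\cup[S_{i_2}]_{i_2}\cup\text{reg}(S_{i_1},S_{i_2})$, the desired contradiction. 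I expect the genuinely delicate point to be exactly this last case analysis, namely ruling out a two-dimensional $W_j$ that fails to be a coordinate plane (which could happen if one allowed a ``mixed'' pair of vectors such as $\alpha_{i}$ and a vector of another plane); the pairwise-independence property (3) and the fact that $\Lambda_j$ escapes every single super region are what I would use to eliminate that configuration.
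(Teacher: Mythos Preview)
Your overall scaffolding is right: construct the $\mathcal I$-weak decentralized code via Lemma~\ref{cmpt-wd-code}, dispose of conditions (1) and (2) of Definition~\ref{sgl-sets} through parts 1) and 2) of Theorem~\ref{prtn-dctr-code}, and focus on condition (3). Your two observations under the contradiction hypothesis are also correct. But there is a genuine gap: you never invoke condition (2) of Definition~\ref{cmptl} (no pair of equivalent classes is connected), and without it your vector-based case analysis cannot close.

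Concretely, consider the situation where $\Lambda_j$ meets exactly two subclasses, say $[\Delta_{\ell'}]_{1,2}$ and $[\Delta_{\ell''}]_{2,3}$ with $\ell',\ell''\ge 4$ and $\ell'\ne\ell''$. Then $W_j=\langle\beta^{(\ell')}_{1,2},\beta^{(\ell'')}_{2,3}\rangle$ is two-dimensional but \emph{not} a coordinate plane, and one checks (e.g.\ via property (4) of Lemma~\ref{genc-code} applied to $\{\beta^{(\ell')}_{1,2},\beta^{(\ell'')}_{1,2},\beta^{(\ell'')}_{2,3}\}$) that $\bar\alpha\notin W_j$. Moreover $\Lambda_j\subseteq\text{reg}(S_1,S_2)\cup\text{reg}(S_2,S_3)$ escapes every single super region, and since $[\Delta_{\ell'}],[\Delta_{\ell''}]$ are non-source classes, $\Lambda_j$ is not contained in any $[S_{i_1}]_{i_1}\cup[S_{i_2}]_{i_2}\cup\text{reg}(S_{i_1},S_{i_2})$ either. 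So neither Lemma~\ref{in-reg-lmd-t-s} nor condition (3) of compatibility yields a contradiction here; the only thing that rules this out is condition (2) of compatibility, since $\Lambda_j\subseteq[[\Delta_{\ell'}]]\cup[[\Delta_{\ell''}]]$ makes $[\Delta_{\ell'}]$ and $[\Delta_{\ell''}]$ connected in the sense of Definition~\ref{cntd}(1). Your proposed tools (pairwise independence and escaping each super region) both hold in this scenario, so they cannot eliminate it.

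The paper's argument is shorter and avoids the vector analysis entirely. It splits directly into two cases: either $\Lambda_j$ meets two different subclasses of a single class, and part 3) of Theorem~\ref{prtn-dctr-code} finishes; or $\Lambda_j$ meets at most one subclass of each class, in which case condition (2) of compatibility forces $\Lambda_j$ to touch at least three distinct classes (otherwise two of them would be connected via Definition~\ref{cntd}(1)), and then condition (3) of compatibility lets one select an $\mathcal I$-independent triple inside $\Lambda_j$, so part 4) of Theorem~\ref{prtn-dctr-code} finishes. The fix to your argument is simply to insert the use of condition (2) to dispose of the ``two subclasses'' case before attempting any triple-based reasoning.
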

\begin{proof}
Let $\tilde{C}_\Pi=\{d_R; R\in\Pi\}$ be an $\mathcal I$-weak
decentralized code on $\Pi$. We need to prove that $\tilde{C}_\Pi$
is a feasible code on $\Pi$. Note that 1), 2) of Theorem
\ref{prtn-dctr-code} corresponds to (1), (2) of Definition
\ref{sgl-sets} respectively. So we only need to prove that
$\tilde{C}_\Pi$ satisfies condition (3) of Definition
\ref{sgl-sets}, i.e., $\bar{\alpha}\in\langle d_R;
R\in\Lambda_j\rangle$ for all $j\in[n]$. We have the following two
cases:

Case 1: There is an $[R_\ell]\in\mathcal I_c$ such that
$\Lambda_j$ intersects with at least two different subclasses of
$[R_\ell]$. Suppose $\{Q,Q'\}\subseteq\Lambda_j$ and $Q,Q'$ belong
to two different subclasses of $[R_\ell]$. By 3) of Theorem
\ref{prtn-dctr-code}, $\bar{\alpha}\in\langle
d_{Q},d_{Q'}\rangle\subseteq\langle d_R; R\in\Lambda_j\rangle$.

Case 2: For each $[R_\ell]\in\mathcal I$, $\Lambda_j$ intersects
with at most one subclass of $[R_\ell]$. Then by condition (2) of
Definition \ref{cmptl} and condition (1) of Definition \ref{cntd},
$\Lambda_j$ intersects with at least three equivalent classes in
$\mathcal I$. Moreover, by condition (3) of  Definition
\ref{cmptl},
$\Lambda_j\nsubseteq[S_{i_1}]_{i_1}\bigcup[S_{i_2}]_{i_2}
\bigcup\text{reg}(S_{i_1},S_{i_2})$ for all $\{i_1,i_2\}
\subseteq\{1,2,3\}$. Then by Definition \ref{ind-set-ptn}, we can
find a subset $\{Q,Q',Q''\}\subseteq \Lambda_j$ such that
$\{Q,Q',Q''\}$ is an $\mathcal I$-independent set. By 4) of
Theorem \ref{prtn-dctr-code}, $\bar{\alpha}\in\langle
d_{Q},d_{Q'},d_{Q''}\rangle\subseteq\langle d_R;
R\in\Lambda_j\rangle$.

By the above discussion, $\tilde{C}_\Pi$ satisfies conditions
(1)-(3) of Definition \ref{sgl-sets}. So $\tilde{C}_\Pi$ is a
feasible code on $\Pi$.
\end{proof}

Now we can prove Theorem \ref{regular-cmptl}.
\begin{proof}[Proof of Theorem \ref{regular-cmptl}]
If $\mathcal I_c$ is compatible, then by Lemma \ref{comp-code},
$\text{RG}(D^{**})$ is feasible.

Conversely, suppose $\text{RG}(D^{**})$ is feasible. By Theorem
\ref{lmd-solv}, there is a feasible code
$\tilde{C}_\Pi=\{d_R\in\mathbb F^3; R\in\Pi\}$ on $\Pi$. We shall
prove $\mathcal I_c$ satisfies conditions (1)-(3) of Definition
\ref{cmptl}.

For $\{i_1,i_2\}\subseteq\{1,2,3\}$, if $[S_{i_1}]=[S_{i_2}]$,
then $S_{i_1}\in[S_{i_2}]$. By Definition \ref{lnc-reg-g} and
Lemma \ref{chpt-lem}, $d_{S_{i_1}}=\alpha_{i_1}\in\langle
d_{S_{i_2}}\rangle=\langle\alpha_{i_2}\rangle$, a contradiction.
So $[S_{i_1}]\neq[S_{i_2}]$ for all
$\{i_1,i_2\}\subseteq\{1,2,3\}$. Moreover, by Definition
\ref{chpt}, no pair of equivalent classes of $\mathcal I_c$ are
connected. Thus, $\mathcal I_c$ satisfies conditions (1), (2) of
Definition \ref{cmptl}.

We can prove condition (3) by contradiction. Suppose
$\Lambda_j\subseteq[S_{i_1}]_{i_1}\bigcup[S_{i_2}]_{i_2}
\bigcup\text{reg}(S_{i_1},S_{i_2})$ for some $j\in[n]$ and
$\{i_1,i_2\}\subseteq\{1,2,3\}$. By Lemma \ref{chpt-lem} and
(\ref{eq-nt-4}), we have
$$d_{Q}\in\langle d_{S_{i_1}}\rangle=\langle\alpha_{i_1}\rangle, \forall
Q\in[S_{i_1}]_{i_1}$$ and
$$d_{Q}\in\langle d_{S_{i_2}}\rangle=\langle\alpha_{i_2}\rangle, \forall
Q\in[S_{i_2}]_{i_2}.$$ Moreover, since $\tilde{C}_\Pi$ is a
feasible code on $\Pi$, by 1) of Lemma \ref{lem-sgl-sets}, we have
$d_Q\in\langle\alpha_{i_1},\alpha_{i_2}\rangle,\forall
Q\in\text{reg}(S_{i_1},S_{i_2})$. Since we assume
$\Lambda_j\subseteq[S_{i_1}]_{i_1}\bigcup[S_{i_2}]_{i_2}
\bigcup\text{reg}(S_{i_1},S_{i_2})$, then
$d_Q\in\langle\alpha_{i_1},\alpha_{i_2}\rangle$ for all
$Q\in\Lambda_j$ and by condition (3) of Definition \ref{sgl-sets},
$\bar{\alpha}=\alpha_1+\alpha_2+\alpha_3\in\langle d_Q;
Q\in\Lambda_j\rangle=\langle\alpha_{i_1},\alpha_{i_2}\rangle$, a
contradiction. So
$\Lambda_j\nsubseteq[S_{i_1}]_{i_1}\bigcup[S_{i_2}]_{i_2}
\bigcup\text{reg}(S_{i_1},S_{i_2})$ and $\mathcal I_c$ satisfies
condition (3) of Definition \ref{cmptl}.

By the above discussion, $\mathcal I_c$ is compatible. Finally, we
prove the time complexity. We have seen that Algorithm 3 can
output $\mathcal I_c$ in $\{|\Pi|,n\}$-polynomial time. Moreover,
it is easy to see that conditions (1)-(3) of Definition
\ref{cmptl} can be checked in time $O(n)$. Thus, it is
$\{|\Pi|,n\}$-polynomial time complexity to determine whether of
$\text{RG}(D^{**})$ is feasible.
\end{proof}

\subsection{Some Simple Cases of Three-source Sum-network}
The following theorem gives some families of terminal-separable
region graph that is feasible.
\begin{thm}\label{simple-case}
Suppose $\text{RG}(D^{**})$ is terminal-separable. Then it is
feasible if one of the following conditions hold:
\begin{itemize}
  \item [(1)] $\Lambda_{j_1}\cap\Lambda_{j_2}=\emptyset$ for all
  pair $\{j_1,j_2\}\subseteq[n]$ such that
  $|\Lambda_{j_1}|=|\Lambda_{j_2}|=2$.
  \item [(2)] $\Lambda_j\subseteq\Pi\backslash\{S_1,S_2,S_3\}$
  for all $j\in[n]$.
  \item [(3)] There is an $\{\ell',\ell''\}\subseteq\{1,2,3\}$ such
  that for all $j\in[n]$,
  $\Lambda_j\cap\text{reg}^\circ(S_{\ell'},S_{\ell''})\neq\emptyset$.
  \item [(4)] $n\leq 2$.
\end{itemize}
\end{thm}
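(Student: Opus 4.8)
The plan is to reduce feasibility to the compatibility of a normal partition and then verify compatibility separately under each of the four hypotheses. By Theorem~\ref{regular-cmptl}, $\text{RG}(D^{**})$ is feasible if and only if a normal partition $\mathcal I_c$ of $\Pi$ (for instance the one produced by Algorithm 3) is compatible; equivalently, by Lemma~\ref{comp-code} it is enough to exhibit any compatible partition of $\Pi$. I would therefore fix a normal partition $\mathcal I_c$ and check the three clauses of Definition~\ref{cmptl}. Clause (2) (no two equivalent classes connected) is automatic for a normal partition by Definition~\ref{chpt}, so the real work is to establish clause (1) (the three source classes $[S_1],[S_2],[S_3]$ remain pairwise distinct) and clause (3) (no $\Lambda_j$ is trapped in $[S_{i_1}]_{i_1}\cup[S_{i_2}]_{i_2}\cup\text{reg}(S_{i_1},S_{i_2})$). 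Throughout I would lean on Lemma~\ref{in-reg-lmd-t-s}, which already gives $|\Lambda_j|\ge 2$ and $\Lambda_j\nsubseteq\text{reg}(S_{i_1},S_{i_2})$ for every $j$ and every pair $\{i_1,i_2\}$.

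For clause (3) the baseline is exactly Lemma~\ref{in-reg-lmd-t-s}(2): when a source class is still the singleton $\{S_i\}$ we have $[S_i]_i=\{S_i\}$ and the trapping set collapses to $\text{reg}(S_{i_1},S_{i_2})$, so the obstruction is ruled out. The genuine content of clause (3) appears only once source classes have absorbed coding regions through contractions, and I would argue that each hypothesis limits this growth: in case (2) no $\Lambda_j$ meets $\{S_1,S_2,S_3\}$, in case (3) each $\Lambda_j$ has a witness in $\text{reg}^\circ(S_{\ell'},S_{\ell''})$, and in case (1) the rigid size-two sets are kept disjoint. I would convert each of these into the statement that the extra regions merged into a source subclass cannot swallow an entire $\Lambda_j$, and then close the argument with the representative-element structure of Definition~\ref{ind-set-ptn} together with the span properties in Theorem~\ref{prtn-dctr-code}(3),(4). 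Case (4) ($n\le 2$) I would treat as the base case: with at most two sets $\Lambda_j$ the non-trapping condition follows directly, and this is the route to the announced $3$s$/2$t corollary.

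The main obstacle is clause (1): showing that the sequence of contractions defining the normal partition never identifies two distinct source regions. Here I would analyze the two ways classes become connected in Definition~\ref{cntd}. A type-(2) connection (overlapping super-regions inside one coordinate plane) cannot by itself merge two sources, since a single source generates only $\text{reg}(S_i)=\{S_i\}$ (every non-source region of the basic region graph has at least two parents by Definition~\ref{BRD}) and $\text{reg}(S_{i},S_{i_1})\cap\text{reg}(S_{i},S_{i_2})=\{S_i\}$ by~(\ref{eq-nt-1}); the danger is indirect, through growth of the $\{i_1,i_2\}$-subclass of a source. A type-(1) connection is driven by some $\Lambda_j$ being split across two subclasses. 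The heart of the proof is to show, under each hypothesis, that no chain of such connections can link $S_{i_1}$ to $S_{i_2}$: in case (1) the disjointness of the size-two sets prevents the rigid identifications from forming a path of length $\ge 2$; in case (2) the absence of sources from every $\Lambda_j$ blocks the type-(1) route from reaching a source subclass; and in case (3) the common plane $\{\ell',\ell''\}$ isolates the third source $S_\ell$. I expect the careful bookkeeping of how the subclasses of $[S_{i_1}]$ and $[S_{i_2}]$ evolve under contraction, and the verification that these evolutions stay confined, to be the most delicate part of the argument.
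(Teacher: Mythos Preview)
Your plan is viable in principle but takes a harder road than necessary, and the part you flag as ``the main obstacle'' is a genuine gap. The paper never analyses the normal partition at all: for each of (1)--(3) it simply \emph{writes down} an explicit partition of $\Pi$ and checks compatibility by inspection, then invokes Lemma~\ref{comp-code}. For (2) the partition is $\{\{S_1\},\{S_2\},\{S_3\},\,\Pi\setminus\{S_1,S_2,S_3\}\}$; for (3) (with $\{\ell',\ell''\}=\{2,3\}$, say) it is $\{\Pi\setminus\{S_2,S_3\},\,\{S_2\},\,\{S_3\}\}$; for (1) the classes are exactly the size-two $\Lambda_j$'s together with singletons $\{R\}$ for every remaining $R$. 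In each of these handcrafted partitions all subclasses have size at most one (because $|\Lambda_j|=2$ and Lemma~\ref{in-reg-lmd-t-s}(2) forces the two elements into different $\text{reg}(S_{i_1},S_{i_2})$), so the source subclasses $[S_i]_i$ collapse to $\{S_i\}$, clause~(1) of Definition~\ref{cmptl} is immediate, and clause~(3) reduces to $\Lambda_j\nsubseteq\text{reg}(S_{i_1},S_{i_2})$, which is Lemma~\ref{in-reg-lmd-t-s}(2) again. Clause~(2) is a short disjointness check driven directly by the hypothesis. Case (4) is then a small case analysis on $|\Lambda_1|,|\Lambda_2|$ and $\Lambda_1\cap\Lambda_2$, reducing to cases (1) or (3).

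By contrast, your route requires you to control the \emph{dynamics} of Algorithm~3, and your sketch does not do this. The claim that ``in case~(1) the disjointness of the size-two sets prevents the rigid identifications from forming a path of length $\ge 2$'' overlooks that once the first contractions occur, subclasses grow, and then $\Lambda_j$'s with $|\Lambda_j|\ge 3$ (which hypothesis~(1) does not constrain at all) can fit inside two enlarged subclasses and trigger further type-(1) merges; type-(2) merges via $\text{reg}([R]_{i_1,i_2})$ can then feed back. Proving that no such chain ever reaches two distinct sources would require an invariant maintained through an arbitrary sequence of contractions, and you have not proposed one. The paper's direct constructions sidestep this entirely: you already noted that Lemma~\ref{comp-code} only needs \emph{some} compatible partition, and that is exactly the shortcut to take.
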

\begin{proof}
1) Suppose condition (1) holds. Let $A$ be the subset of $[n]$
such that $|\Lambda_j|=2$ for all $j\in A$ and $|\Lambda_j|>2$ for
all $j\in[n]\backslash A$. Let $\mathcal I=\{\Lambda_{j};j\in
A\}\cup\{[R];R\in\Pi\backslash(\cup_{j\in A}\Lambda_{j})\}$, where
$[R]=\{R\}$ for all $R\in\Pi\backslash(\cup_{j\in A}\Lambda_{j})$.
Then for each $j\in A$, $\Lambda_j$ is an equivalent class and for
each $R\in\Pi\backslash(\cup_{j\in A}\Lambda_{j})$, $\{R\}$ is an
equivalent class. By Definition \ref{cmptl}, $\mathcal I$ is
compatible. By Lemma \ref{comp-code}, $\text{RG}(D^{**})$ is
feasible. Fig. \ref{fg-smpl-case2} is an example of such region
graph and feasible code.

2) Suppose condition (2) holds. Let $\mathcal
I=\{[S_1],[S_2],[S_3]$, $[R]\}$, where $[S_i]=\{S_i\}$ for
$i\in\{1,2,3\}$ and $[R]=\Pi\backslash\{S_1,S_2,S_3\}$. Then by
Definition \ref{cmptl}, $\mathcal I$ is compatible. By Lemma
\ref{comp-code}, $\text{RG}(D^{**})$ is feasible. Fig.
\ref{fg-smpl-case3} (a) is an example of such region graph and
feasible code.

3) Suppose condition (3) holds. Without loss of generality, assume
$\ell=1, \ell'=2$ and $\ell''=3$. Let $\mathcal
I=\{[S_1],[S_2],[S_3]\}$, where $[S_1]=\Pi\backslash\{S_2,S_3\}$,
$[S_2]=\{S_2\}$ and $[S_3]=\{S_3\}$. Then by Definition
\ref{cmptl}, $\mathcal I$ is compatible. So by Lemma
\ref{comp-code}, $\text{RG}(D^{**})$ is feasible. Fig.
\ref{fg-smpl-case3} (b) is an example of such region graph and
feasible code.

4) Suppose condition (4) holds. If $n=1$, the conclusion is
trivial. So we assume $n=2$. We have the following three cases:

Case 1: $|\Lambda_1|>2$ or $|\Lambda_2|>2$. In this case, the
condition (1) holds and by proved result, $\text{RG}(D^{**})$ is
feasible.

Case 2: $|\Lambda_1|=|\Lambda_2|=2$ and
$\Lambda_1\cap\Lambda_2=\emptyset$. In this case, the condition
(1) holds and by proved result, $\text{RG}(D^{**})$ is feasible.

Case 3: $|\Lambda_1|=|\Lambda_2|=2$ and
$\Lambda_1\cap\Lambda_2\neq\emptyset$. If $\Lambda_1=\Lambda_2$,
then the conclusion is trivial. So we assume
$\Lambda_1\neq\Lambda_2$. Thus, we can assume
$\Lambda_1=\{Q_1,Q_2\}$ and $\Lambda_2=\{Q_1,Q_3\}$. We have the
following two subcases:

Case 3.1: $Q_1=S_i$ for some $i\in\{1,2,3\}$. Without loss of
generality, assume $Q_1=S_1$. By 2) of Lemma \ref{in-reg-lmd-t-s},
$\{Q_1,Q_2\}=\Lambda_1\nsubseteq\text{reg}(S_{1},S_{2})$ and
$\{Q_1,Q_2\}=\Lambda_1\nsubseteq\text{reg}(S_{1},S_{3})$. Then we
have $Q_2\in\text{reg}^\circ(S_{2},S_{3})$. Similarly, we have
$Q_3\in\text{reg}^\circ(S_{2},S_{3})$. So condition (3) holds and
by the proved result, $\text{RG}(D^{**})$ is feasible.

Case 3.2: $Q_1\neq S_i$ for all $i\in\{1,2,3\}$. Then
$Q_1\in\text{reg}^\circ(S_{i_1},S_{i_2})$ for some
$\{i_1,i_2\}\subseteq\{1,2,3\}$. So condition (3) holds and by the
proven result, $\text{RG}(D^{**})$ is feasible.
\end{proof}

\begin{rem}\label{rem-num-leq-2}
We remark that if the number of different subsets in
$\{\Lambda_1,\Lambda_2,\cdots,\Lambda_n\}$ is at most $2$, then we
can still construct a feasible code on $\Pi$. So
$\text{RG}(D^{**})$ is still feasible.
\end{rem}

For example, if $\{\Lambda_1\neq\Lambda_2=\cdots=\Lambda_n\}$.
Then consider the subgraph $\text{RG}(D^{**})'$ formed by
$\Pi\cup\Omega_1\cup\Omega_2$. By (4) of Theorem
\ref{simple-case}, $\text{RG}(D^{**})'$ is feasible. So by Theorem
\ref{lmd-solv}, there exists a feasible code on $\Pi$. Again by
Theorem \ref{lmd-solv}, $\text{RG}(D^{**})$ is feasible.

%%%%%%%%%%%%%%%%%%%%%%%%%%%%%%%%%%%%%%%%%%%
\renewcommand\figurename{Fig}
\begin{figure}[htbp]
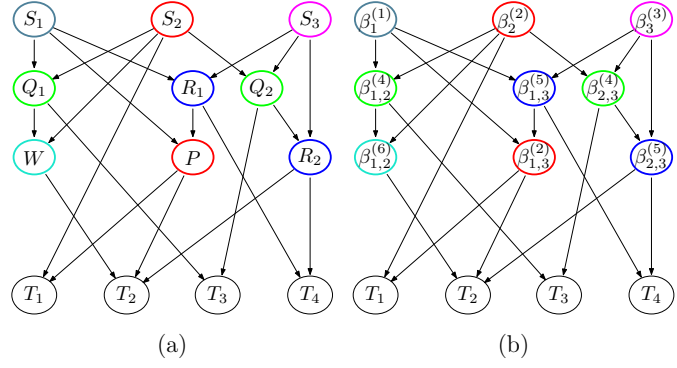

\begin{center}
\vspace{0.2cm}\includegraphics[height=4.8cm]{fg-smpl.3}
\hspace{0.1cm}\includegraphics[height=4.8cm]{fg-smpl.4}
\end{center}
\caption{An example of weak decentralized code: (a) is a region
graph with $\Pi=\{S_1, S_2, S_3, Q_1, R_1, Q_2, W, P, R_2\}$ and
$\Lambda_1=\{\{S_2, P\}, \Lambda_2=\{W, P, R_2\}$,
$\Lambda_3=\{Q_1, Q_2\}, \Lambda_4=\{R_1, R_2\}$. Let $\mathcal
I=\{[S_1], [S_2], [S_3], [Q_1], [R_1], [W]\}$, where
$[S_1]=\{S_1\}$, $[S_2]=\{S_2, P\}$, $[S_3]=\{S_3\}$,
$[Q_1]=\{Q_1, Q_2\}$, $[R_1]=\{R_1, R_2\}$ and $[W]=\{W\}$. Then
$\mathcal I$ is compatible. (b) illustrates the $\mathcal I$-weak
decentralized code, where $\mathcal
B_1=\{\beta^{(1)}_{1},\beta^{(1)}_{2,3}\},\cdots,\mathcal
B_6=\{\beta^{(6)}_{1,2},\beta^{(6)}_{1,3},\beta^{(6)}_{2,3}\}$ are
as in Lemma \ref{genc-code}. }\label{fg-smpl-case2}
\end{figure}
%%%%%%%%%%%%%%%%%%%%%%%%%%%%%%%%%%%%%%%%%%%%%%

%%%%%%%%%%%%%%%%%%%%%%%%%%%%%%%%%%%%%%%%%%%
\renewcommand\figurename{Fig}
\begin{figure}[htbp]
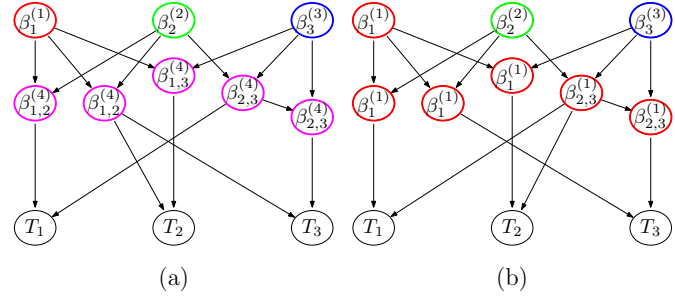

\begin{center}
\vspace{0.2cm}\includegraphics[height=3.9cm]{fg-smpl.8}
\hspace{0.1cm}\includegraphics[height=3.9cm]{fg-smpl.9}
\end{center}
\caption{Examples of weak decentralized code: In (a),
$\Lambda_j\subseteq\Pi\backslash\{S_1,S_2,S_3\}$ for
$j\in\{1,2,3\}$; In (b),
$\Lambda_j\cap\text{reg}^\circ(S_2,S_3)\neq\emptyset,
j=1,2,3$.}\label{fg-smpl-case3}
\end{figure}
%%%%%%%%%%%%%%%%%%%%%%%%%%%%%%%%%%%%%%%%%%%%%%

As a simple corollary of (4) of Theorem \ref{simple-case}, we can
characterize solvability of $3$s$/2$t sum-network (not necessarily
terminal-separable) as follows.
\begin{cor}\label{3s-2t}
If $G$ is a $3$s$/2$t sum-network, then the basic region graph
$\text{RG}(D^{**})$ is feasible.
\end{cor}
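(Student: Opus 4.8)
The plan is to split the argument according to whether $\text{RG}(D^{**})$ is already terminal-separable, since for $n=2$ this is controlled entirely by the single set $\Omega_{1,2}$: by Definition \ref{t-sep}, $\text{RG}(D^{**})$ is terminal-separable precisely when $\Omega_{1,2}=\emptyset$ (the only $I\subseteq[2]$ with $|I|>1$ is $\{1,2\}$). In the easy case $\Omega_{1,2}=\emptyset$ I would simply invoke part (4) of Theorem \ref{simple-case} with $n=2$ and conclude feasibility at once.

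The substance lies in the remaining case $\Omega_{1,2}\neq\emptyset$. Here I would pick any $R\in\Omega_{1,2}$; by (2) of Definition \ref{lmd-omd} this region satisfies $R\rightarrow T_1$ and $R\rightarrow T_2$, and since $R\in D^{**}\setminus\Pi$, Lemma \ref{in-omg-intc} gives $S_i\rightarrow R$ for every $i\in\{1,2,3\}$. The idea is to collapse the two terminals into the single region $R$. I would form the sub-region-graph $H$ induced by $R$ together with all of its ancestors and regard $R$ as the unique terminal of $H$. Then $H$ is a three-source, one-terminal basic region graph, which is vacuously terminal-separable, so the trivial case $n=1$ of part (4) of Theorem \ref{simple-case} shows that $H$ is feasible; equivalently, the sum $\sum_{i=1}^{3}X_i$ can be delivered to $R$. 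Finally, because $R\rightarrow T_1$ and $R\rightarrow T_2$, Lemma \ref{P-C-fsb} applied with $\Theta=\{R\}$ upgrades this to feasibility of the full $\text{RG}(D^{**})$.

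The hard part will be justifying that the restricted graph $H$ is a legitimate instance to which Theorem \ref{simple-case} applies, rather than the invocation itself. Concretely, I must check that passing to $R$ and its ancestors preserves the conditions of Definition \ref{BRD}: every retained non-source region keeps all of its parents, because any parent of an ancestor of $R$ is again an ancestor of $R$, so condition (2) (at least two parents) survives, while condition (1) concerns links internal to each region and is untouched. I must also observe that $R$ has no children in $H$, so Assumption 1 holds there, and that $S_i\rightarrow R$ supplies the required source–terminal connectivity. A small but necessary remark is that $R\neq T_1,T_2$: since terminals have no children by Assumption 1, a terminal cannot reach another terminal and hence cannot lie in $\Omega_{1,2}$, so $R$ is genuinely an internal region being promoted to a sink. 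Once this bookkeeping is in place, the two lemmas combine to give the claim; I would also note that the statement tacitly uses the standing hypothesis $S_i\rightarrow T_j$ for all $i,j$ (every source–terminal pair connected), since otherwise $G$ is unsolvable and $\text{RG}(D^{**})$ is infeasible by Theorem \ref{solv-eqvlt}.
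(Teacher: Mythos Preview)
Your proof is correct and follows essentially the same case split and strategy as the paper: handle the terminal-separable case via Theorem \ref{simple-case}(4), and in the non-separable case pick $R\in\Omega_{1,2}$, use Lemma \ref{in-omg-intc} to get $S_i\rightarrow R$ for all $i$, deliver the sum to $R$, and finish with Lemma \ref{P-C-fsb}. The only difference is that where you construct an auxiliary one-terminal subgraph $H$ and invoke the $n=1$ case of Theorem \ref{simple-case}(4), the paper more directly appeals to Lemma \ref{sub-g-reg-code} (since each $S_i\rightarrow R$, the combination $\alpha_1+\alpha_2+\alpha_3$ can be routed to $R$), which spares you the bookkeeping of verifying that $H$ is itself a legitimate basic region graph.
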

\begin{proof}
Since $G$ is a $3$s$/2$t sum-network, then $\text{RG}(D^{**})$ has
at most two terminal regions. If $\text{RG}(D^{**})$ has only one
terminal region, then clearly, it is feasible. So we assume that
$\text{RG}(D^{**})$ has two terminal regions, say $T_1$ and $T_2$.
Then we have the following two cases:

Case 1: $\Omega_{1,2}\neq\emptyset$. Pick a $P\in\Omega_{1,2}$. By
(2) of Definition \ref{lmd-omd}, $P\in D^{**}\backslash\Pi$. Then
by Lemma \ref{in-omg-intc}, $S_i\rightarrow P$ for all
$i\in\{1,2,3\}$ and by Lemma \ref{sub-g-reg-code}, the sum
$\sum_{i=1}^3X_i$ can be transmitted from $\{S_1,S_2,S_3\}$ to
$P$. Moreover, by (2) of Definition \ref{lmd-omd}, we have
$P\rightarrow T_j, j=1,2$. So by Lemma \ref{P-C-fsb},
$\text{RG}(D^{**})$ is feasible.

Case 2: $\Omega_{1,2}=\emptyset$. Then $\text{RG}(D^{**})$ is
terminal separable and by (4) of Theorem \ref{simple-case},
$\text{RG}(D^{**})$ is feasible.

By the above discussion, $\text{RG}(D^{**})$ is feasible.
\end{proof}

Recall that in this paper we assumed $S_i\rightarrow T_j$ for all
$i\in\{1,2,3\}$ and $j\in[n]$ (Assumption 2). On the other hand,
if $S_i\nrightarrow T_j$ for some $i\in\{1,2,3\}$ and $j\in[n]$,
then clearly $\text{RG}(D^{**})$ is infeasible. So by Corollary
\ref{3s-2t}, a $3$s$/2$t sum-network is feasible if and only if
each source-terminal pair is connected, which is a special case of
the result of \cite{Rama08}.

When $n=3$, we can give a graph theoretic characterization for
feasibility of $\text{RG}(D^{**})$.
\begin{thm}\label{sgl-3t}
Suppose $\text{RG}(D^{**})$ is terminal separable and has three
terminal regions. Then $\text{RG}(D^{**})$ is infeasible if and
only if by proper naming, the following condition (C-IR) hold: \\
\textbf{(C-IR)} There is a $P_1\in\text{reg}^\circ(S_{2},S_{3})$
and a $P_2\in\text{reg}^\circ(S_{1},S_{2})$ such that
$\Lambda_{1}=\{S_{1},P_1\},\Lambda_{2}=\{P_1,P_2\}$ and
$\Lambda_{3}\subseteq\text{reg}(S_{1},P_2)\cup\text{reg}(S_{1},
S_{3})$.
\end{thm}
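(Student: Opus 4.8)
My plan is to handle the two directions separately, in both cases reducing feasibility of $\text{RG}(D^{**})$ to the existence of a feasible code on $\Pi$ via Theorem \ref{lmd-solv}, and exploiting the structural facts of Lemma \ref{lem-sgl-sets} together with the compatibility criterion of Theorem \ref{regular-cmptl}. The sufficiency direction (C-IR) $\Rightarrow$ infeasible is a short linear-algebra computation; the necessity direction is a finite case analysis built on Theorem \ref{simple-case}, and that case analysis is where the real work lies.

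For sufficiency, suppose (C-IR) holds and, for contradiction, that $\tilde{C}_\Pi=\{d_R\}$ is a feasible code on $\Pi$; by Remark \ref{rem-sgl-sets} I may assume $d_R\neq 0$ for all $R$. From $\Lambda_1=\{S_1,P_1\}$ and $\Lambda_2=\{P_1,P_2\}$, part 2) of Lemma \ref{lem-sgl-sets} gives $\langle d_{S_1},\bar{\alpha}\rangle=\langle d_{P_1},\bar{\alpha}\rangle=\langle d_{P_2},\bar{\alpha}\rangle$. Since $P_2\in\text{reg}(S_1,S_2)$, part 1) gives $d_{P_2}\in\langle\alpha_1,\alpha_2\rangle$, and intersecting $\langle\alpha_1,\bar{\alpha}\rangle=\langle\alpha_1,\alpha_2+\alpha_3\rangle$ with $\langle\alpha_1,\alpha_2\rangle$ forces $\langle d_{P_2}\rangle=\langle\alpha_1\rangle=\langle d_{S_1}\rangle$. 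Now each $R\in\Lambda_3$ lies in $\text{reg}(S_1,P_2)\cup\text{reg}(S_1,S_3)$: on $\text{reg}(S_1,S_3)$ part 1) gives $d_R\in\langle\alpha_1,\alpha_3\rangle$, and on $\text{reg}(S_1,P_2)$ the last assertion of part 1) (applied to $\langle d_{S_1}\rangle=\langle d_{P_2}\rangle$) gives $\langle d_R\rangle=\langle\alpha_1\rangle\subseteq\langle\alpha_1,\alpha_3\rangle$. Hence, by condition (3) of Definition \ref{sgl-sets}, $\bar{\alpha}\in\langle d_R; R\in\Lambda_3\rangle\subseteq\langle\alpha_1,\alpha_3\rangle$, contradicting $\bar{\alpha}\notin\langle\alpha_1,\alpha_3\rangle$. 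So no feasible code exists and $\text{RG}(D^{**})$ is infeasible.

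For necessity, assume $\text{RG}(D^{**})$ is infeasible. Then none of conditions (1)--(4) of Theorem \ref{simple-case} can hold, and by Remark \ref{rem-num-leq-2} the sets $\Lambda_1,\Lambda_2,\Lambda_3$ are pairwise distinct. Failure of condition (1) of Theorem \ref{simple-case} yields, after renaming the terminals, two size-two sets sharing exactly one region, say $\Lambda_1=\{A,P_1\}$ and $\Lambda_2=\{P_1,P_2\}$ with $A,P_1,P_2$ distinct (sharing two regions would force $\Lambda_1=\Lambda_2$). By part 2) of Lemma \ref{in-reg-lmd-t-s} neither set is contained in a single $\text{reg}(S_{i_1},S_{i_2})$, and by (\ref{eq-nt-1})--(\ref{eq-nt-2}) each non-source region lies in exactly one of the three super regions while each source lies in exactly two. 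I would then enumerate the possible placements of $A,P_1,P_2$ across the super regions and the source set, using the failures of conditions (2) and (3) of Theorem \ref{simple-case} to discard every configuration except the one in (C-IR): the shared region $P_1$ is a non-source and, after renaming the sources, $A=S_1$, $P_1\in\text{reg}^\circ(S_2,S_3)$, $P_2\in\text{reg}^\circ(S_1,S_2)$. For example, were the shared region $P_1$ itself a source, one checks that $A$ and $P_2$ must then both lie in the opposite $\text{reg}^\circ(S_{i_1},S_{i_2})$, so that index pair witnesses condition (3) of Theorem \ref{simple-case} (making $\text{RG}(D^{**})$ feasible), a contradiction.

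With $\Lambda_1,\Lambda_2$ so fixed, it remains to locate $\Lambda_3$. I would form the partition merging $S_1,P_1,P_2$ and all of $\text{reg}(S_1,P_2)$ into one class $[S_1]$ --- this is forced, since $\Lambda_1,\Lambda_2$ connect $S_1,P_1,P_2$ by condition (1) of Definition \ref{cntd} and $\text{reg}(S_1,P_2)$ is absorbed by condition (2), giving $[S_1]_1=\text{reg}(S_1,P_2)$ by (\ref{eq-nt-4}) --- and complete it to a normal partition $\mathcal I_c$ by Algorithm 3. Since $\text{RG}(D^{**})$ is infeasible, Theorem \ref{regular-cmptl} forces $\mathcal I_c$ to be non-compatible, so either two source classes coincide or condition (3) of Definition \ref{cmptl} fails. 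A direct check (as sketched above for $\Lambda_1,\Lambda_2$) shows the offending relation cannot be produced by $\Lambda_1$ or $\Lambda_2$, so it must involve $\Lambda_3$, and tracing it back through the $\Lambda$- and super-region closures yields $\Lambda_3\subseteq[S_1]_1\cup\text{reg}(S_1,S_3)=\text{reg}(S_1,P_2)\cup\text{reg}(S_1,S_3)$, the last clause of (C-IR). The principal obstacle is the case analysis in the previous paragraph: with three terminals and three super regions there are only finitely many placements of $A,P_1,P_2$, but eliminating each non-(C-IR) placement requires either exhibiting a compatible partition or producing an explicit pair $\{\ell',\ell''\}$ for which all three $\Lambda_j$ meet $\text{reg}^\circ(S_{\ell'},S_{\ell''})$, i.e. invoking condition (3) of Theorem \ref{simple-case}; organizing these subcases cleanly is the delicate part.
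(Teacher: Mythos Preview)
Your sufficiency argument is correct and essentially identical to the paper's.

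For necessity, your overall strategy --- reduce to two size-two $\Lambda$'s sharing a region, then case-split on the placement of that shared region --- is the same skeleton the paper uses. But the execution you sketch has a concrete error. In your illustrative case ``$P_1$ is itself a source, say $P_1=S_i$'', you correctly deduce (via Lemma~\ref{in-reg-lmd-t-s}(2)) that $A,P_2\in\text{reg}^\circ(S_{j_1},S_{j_2})$ with $\{j_1,j_2\}=\{1,2,3\}\setminus\{i\}$, so $\Lambda_1,\Lambda_2$ both meet $\text{reg}^\circ(S_{j_1},S_{j_2})$. You then claim this ``witnesses condition (3) of Theorem~\ref{simple-case}''. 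It does not: condition (3) requires \emph{all three} $\Lambda_j$ to meet the same $\text{reg}^\circ(S_{\ell'},S_{\ell''})$, and nothing you have said constrains $\Lambda_3$. When $\Lambda_3\cap\text{reg}^\circ(S_{j_1},S_{j_2})=\emptyset$ one must still prove feasibility, and the paper does this (its Lemma~\ref{case2-1}) by writing down an explicit compatible partition, e.g.\ $[S_i]=\{S_i\}\cup\text{reg}^\circ(S_{j_1},S_{j_2})$, and invoking Lemma~\ref{comp-code}. The analogous gap appears in the configuration $P_1\in\text{reg}^\circ(S_1,S_2)$ with both $P_2,P_3\in\text{reg}^\circ(S_2,S_3)$ (the paper's Lemma~\ref{case1-1}): here too neither (2) nor (3) of Theorem~\ref{simple-case} need hold, yet the graph is feasible, and again one must exhibit a compatible partition by hand. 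So the claim that failures of Theorem~\ref{simple-case}(2),(3) alone pin down $\Lambda_1,\Lambda_2$ to the (C-IR) shape is false; you effectively concede this in your final sentence, but the body of the argument asserts otherwise.

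Your step~2 idea --- once $\Lambda_1,\Lambda_2$ are in (C-IR) form, run Algorithm~3 starting from the forced block $[S_1]=\text{reg}(S_1,P_2)\cup\{P_1\}$ and read off the constraint on $\Lambda_3$ from non-compatibility via Theorem~\ref{regular-cmptl} --- is a genuinely different device from the paper's Lemma~\ref{case2-2}, which instead enumerates the possible positions of $\Lambda_3$ and builds a compatible partition in each non-(C-IR) subcase. Your route could be made to work, but the sentence ``tracing it back \ldots\ yields $\Lambda_3\subseteq[S_1]_1\cup\text{reg}(S_1,S_3)$'' hides real content: when $|\Lambda_3|\ge 3$ or when $\Lambda_3$ spans three subclasses, condition~(1) of Definition~\ref{cntd} does not fire at all, so no further merging occurs and you must check directly that the resulting partition is compatible (i.e.\ that Definition~\ref{cmptl}(3) holds for $\Lambda_3$), which is again a case analysis on where $\Lambda_3$ sits. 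In short, whichever route you take, the necessity direction unavoidably requires constructing compatible partitions in several concrete subcases; Theorem~\ref{simple-case} and Theorem~\ref{regular-cmptl} are organizing tools, not substitutes for those constructions.
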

\begin{proof}
The proof is given in Appendix C.
\end{proof}

A new characterization on the solvability of a $3$s$/3$t
sum-network $G$ (not necessarily terminal-separable) can be
derived as a corollary of Theorem \ref{sgl-3t}.
\begin{cor}\label{3s-3t}
For a $3$s$/3$t sum-network $G$, $\text{RG}(D^{**})$ is infeasible
if and only if it is terminal separable and satisfies condition
(C-IR).
\end{cor}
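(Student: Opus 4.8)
The plan is to reduce the corollary to Theorem \ref{sgl-3t} by establishing the single missing ingredient: that a $3$s$/3$t basic region graph can only be infeasible when it is terminal-separable. Throughout I use that a $3$s$/3$t network has exactly three terminal regions $T_1,T_2,T_3$ and, by the standing assumption, $S_i\rightarrow T_j$ for all $i\in\{1,2,3\}$ and $j\in[3]$. The $(\Leftarrow)$ direction is then immediate: if $\text{RG}(D^{**})$ is terminal-separable and satisfies (C-IR), then since it has three terminal regions, Theorem \ref{sgl-3t} gives that it is infeasible.

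For $(\Rightarrow)$ I would argue the contrapositive, and the first and main step is to show that if $\text{RG}(D^{**})$ is \emph{not} terminal-separable then it is feasible. By Definition \ref{t-sep}, non-terminal-separability means $\Omega_I\neq\emptyset$ for some $I\subseteq[3]$ with $|I|\geq 2$. If $\Omega_{1,2,3}\neq\emptyset$, I pick $R\in\Omega_{1,2,3}$; then $R\rightarrow T_j$ for every $j$, and since $R\in D^{**}\backslash\Pi$, Lemma \ref{in-omg-intc} gives $S_i\rightarrow R$ for all $i$, so by Lemma \ref{sub-g-reg-code} the vector $\bar{\alpha}=\alpha_1+\alpha_2+\alpha_3\in\langle\alpha_1,\alpha_2,\alpha_3\rangle$ can be delivered to $R$. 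Taking $\Theta=\{R\}$ in Lemma \ref{P-C-fsb} then yields feasibility. If instead some $\Omega_I\neq\emptyset$ with $|I|=2$, say $I=\{1,2\}$, I pick $R\in\Omega_{1,2}$, so that $R\rightarrow T_1$, $R\rightarrow T_2$, and $R\nrightarrow T_3$, while $T_3$ has no child by Assumption 1.

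The idea (paralleling the discussion after Definition \ref{t-sep}) is to view $\{R,T_3\}$ as the demand set: these two regions together dominate $T_1,T_2,T_3$, and they are mutually non-connected since $R\nrightarrow T_3$ and $T_3\nrightarrow R$. Delivering the sum simultaneously to $R$ and $T_3$ is thus a $3$-source $2$-target problem, which is always feasible by the argument of Corollary \ref{3s-2t} (equivalently, case (4) of Theorem \ref{simple-case} with $n=2$). Hence the sum can be transmitted to $\{R,T_3\}$ simultaneously, and Lemma \ref{P-C-fsb} with $\Theta=\{R,T_3\}$ gives that $\text{RG}(D^{**})$ is feasible. Combining the two steps: if $\text{RG}(D^{**})$ is infeasible, the first step forces it to be terminal-separable, and then Theorem \ref{sgl-3t} forces (C-IR) to hold after proper naming, completing $(\Rightarrow)$.

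I expect the $|I|=2$ case to be the one obstacle requiring care, because one must justify rigorously that delivering the sum to the internal region $R$ and to $T_3$ is a genuine $3$s$/2$t feasibility question rather than an informal analogy. The clean way is to note that $\Pi$ depends only on the source super-regions and is unchanged by the reduced demand, then recompute the parent sets $\Lambda$ with respect to the two targets $\{R,T_3\}$ and apply Theorem \ref{lmd-solv} together with the $n=2$ instance of Theorem \ref{simple-case}; the only subtle point is that $R$ and $T_3$ are non-adjacent in the reachability order, so the reduced instance is well-posed, and this follows from $R\in\Omega_{1,2}$ and Assumption 1.
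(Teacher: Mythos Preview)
Your proposal is correct and follows essentially the same approach as the paper: the same three-way case split on $\Omega_{1,2,3}\neq\emptyset$, some $\Omega_{i_1,i_2}\neq\emptyset$, and terminal-separable, with the first two handled via Lemma \ref{P-C-fsb} and the two-target case reduced to Corollary \ref{3s-2t}, and the third invoking Theorem \ref{sgl-3t}. The paper simply cites Corollary \ref{3s-2t} directly for the $|I|=2$ case without the extra justification you sketch, and it also briefly handles the possibility that $\text{RG}(D^{**})$ has fewer than three terminal regions (falling back on Corollary \ref{3s-2t}), which you absorb into the standing assumptions.
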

\begin{proof}
Since $G$ is a $3$s$/3$t sum-network, then $\text{RG}(D^{**})$ has
at most $3$ terminal regions. If $\text{RG}(D^{**})$ has at most
two terminal regions, then by Corollary \ref{3s-2t}, it is
feasible. So we assume $\text{RG}(D^{**})$ has three terminal
regions. Then one of the following three cases holds:

Case 1: $\Omega_{1,2,3}\neq\emptyset$. Pick a $P\in
\Omega_{1,2,3}$. As in the proof of Corollary \ref{3s-2t}, the sum
$\sum_{i=1}^3X_i$ can be transmitted from $\{S_1,S_2,S_3\}$ to
$P$. Moreover, by (2) of Definition \ref{lmd-omd}, we have
$P\rightarrow T_j, j=1,2,3$. So by Lemma \ref{P-C-fsb},
$\text{RG}(D^{**})$ is feasible.

Case 2: $\Omega_{1,2,3}=\emptyset$ and
$\Omega_{i_1,i_2}\neq\emptyset$ for some
$\{i_1,i_2\}\subseteq\{1,2,3\}$. Pick a $Q\in\Omega_{i_1,i_2}$. By
Corollary \ref{3s-2t}, the sum $\sum_{i=1}^3X_i$ can be
transmitted from $\{S_1,S_2,S_3\}$ to $\{Q,T_{i_3}\}$
simultaneously, where $\{i_3\}=\{1,2,3\}\backslash\{i_1,i_2\}$.
Moreover, by (2) of Definition \ref{lmd-omd}, $P\rightarrow T_j,
j=i_1,i_2$. So by Lemma \ref{P-C-fsb}, $\text{RG}(D^{**})$ is
feasible.

Case 3: $\text{RG}(D^{**})$ is terminal separable. By Theorem
\ref{sgl-3t}, $\text{RG}(D^{**})$ is infeasible if and only if the
condition (C-IR) holds.

By the above discussion, $\text{RG}(D^{**})$ is infeasible if and
only if it is terminal separable and satisfies condition (C-IR).
\end{proof}

Fig. \ref{fg-infsb} are two examples of infeasible region graph
with three source regions and three terminal regions.

The first necessary and sufficient condition for solvability of
$3$s$/3$t sum-network is given by Shenvi and Dey [18, Th.1]. We
remark that the conditions in \cite{Shenvi10} can be easily
derived from Corollary \ref{3s-3t}. In fact, by interchanging the
name of $S_2$ and $S_3$ and replacing the name of $T_1,T_2,T_3$ by
$T_2,T_3,T_1$ respectively, condition (C-IR) can be restated as
\\ (\textbf{C-IR}$\boldmath{'}$) There is a
$P_1\in\text{reg}^\circ(S_{2},S_{3})$ and a
$P_2\in\text{reg}^\circ(S_{1},S_{3})$ such that
$\Lambda_{2}=\{S_{1},P_1\},\Lambda_{3}=\{P_1,P_2\}$ and
$\Lambda_{1}\subseteq\text{reg}(S_{1},P_2)\cup\text{reg}(S_{1},
S_{2})$.\\
Let $e_1=\text{lead}(P_2)$ and $e_2=\text{lead}(P_1)$. Then we can
check that $e_1,e_2$ satisfy the conditions 1)$-$6) of [18, Th.1].

For example, for the network in Fig. \ref{fg-unsolv-net} (a), its
basic region graph is in Fig. \ref{fg-unsolv-reg} (a) and
satisfies the condition (C-IR$'$). Let
$e_1=\text{lead}(P_2)=(v_1,v_3)$ and
$e_2=\text{lead}(P_1)=(v_2,v_4)$. Then $e_1,e_2$ satisfy the
conditions 1)$-$6) of [18, Th.1].

Similarly, for the network in Fig. \ref{fg-unsolv-net} (b), its
basic region graph is in Fig. \ref{fg-unsolv-reg} (b) and
satisfies the condition (C-IR$'$). Let
$e_1=\text{lead}(P_2)=(v_1,v_4)$ and
$e_2=\text{lead}(P_1)=(v_3,v_5)$. Then $e_1,e_2$ satisfy the
conditions 1)$-$6) of [18, Th.1].

Note that by Theorem \ref{b-reg-unq}, the basic region graph
$\text{RG}(D^{**})$ can be obtained in time $O(|E|)$. Moreover, by
Theorem \ref{omg-intc}, it is $O(|D^{**}|)$ time complexity to
determine whether $\text{RG}(D^{**})$ is terminal-separable. So by
Corollary \ref{3s-3t}, it is $O(|E|)$ time complexity to determine
solvability of a $3$s$/3$t sum-network, where $E$ is the link set.
However, by [18, Th.1], it needs $O(|E|^3)$ time complexity. Thus,
our result gives a faster method to determine solvability of $G$.

\section{Conclusions and Discussions}
We investigated the network coding problem of $3$s$/n$t
sum-network. We obtained a computationally simple sufficient and
necessary condition for solvability of a class of $3$s$/n$t
sum-network by developing the region decomposition method in
\cite{Wentu11, Wentu12} and generalizing the decentralized coding
method in \cite{Fragouli06}. The condition was characterized by
some simple structural properties of some certain partitions on a
part of the region graph and also can be judged using a very
simple polynomial time algorithm. As a result, the solvability of
$3$s$/3$t sum-networks was characterized by using a single
forbidden structure. Our method can further develop a completely
characterization on the solvability of $3$s$/4$t sum-networks.
Limited by the space, we leave it to a future paper.

\appendices

\section{Proof of Lemma \ref{genc-code}}
\begin{proof}[Proof of Lemma \ref{genc-code}]
Clearly, when $K=3$, the sets $\mathcal B_1, \mathcal B_2,
\mathcal B_3$ satisfy properties (1)$-$(4) of Lemma
\ref{genc-code}.

Now suppose $K>3$ and the sets $\mathcal B_1, \cdots, \mathcal
B_{K-1}$ satisfy properties (1)$-$(4). We want to construct a
subset $\mathcal
B_{K}=\{\beta^{(K)}_{1,2},\beta^{(K)}_{1,3},\beta^{(K)}_{2,3}\}$
such that the sets $\mathcal B_1, \cdots, \mathcal B_{K-1},
\mathcal B_{K}$ satisfy properties (1)$-$(4). The key is to
carefully choose a vector $\beta^{(K)}\in\mathbb
F^3\backslash\langle\bar{\alpha}\rangle$ and let
$0\neq\beta^{(K)}_{i_1,i_2}\in\langle\beta^{(K)},
\bar{\alpha}\rangle\cap\langle\alpha_{i_1}, \alpha_{i_2}\rangle$
for each $\{i_1,i_2\}\subseteq\{1,2,3\}$.

Let $\Phi_{K-1}$ be the set of all pairs
$\{\gamma',\gamma''\}\subseteq\bigcup_{\ell=1}^{K-1}\mathcal
B_\ell$ such that
$\{\gamma',\gamma''\}\nsubseteq\langle\alpha_{i_1},\alpha_{i_2}\rangle$
for all $\{i_1,i_2\}\subseteq\{1,2,3\}$. Note that $\mathcal B_1,
\cdots, \mathcal B_{K-1}$ satisfy property (3) of Lemma
\ref{genc-code}. Then $\gamma'$ and $\gamma''$ are linearly
independent. So $\langle\gamma',\gamma''\rangle$ is an
$2$-dimensional subspace of $\mathbb F^3$ and
$\langle\gamma',\gamma''\rangle
\neq\langle\alpha_{i_1},\alpha_{i_2}\rangle$. Thus,
$\langle\gamma',\gamma''\rangle
\cap\langle\alpha_{i_1},\alpha_{i_2}\rangle$ is an $1$-dimensional
subspace of $\mathbb F^3$. Let
$$\langle\gamma',\gamma''\rangle_{i_1,i_2}=\langle\gamma',\gamma''\rangle
\cap\langle\alpha_{i_1},\alpha_{i_2}\rangle$$ and let
\begin{align}
\Psi_{K-1}=\bigcup_{\{\gamma',\gamma''\}\in\Phi_{K-1}}\{\langle\gamma',
\gamma''\rangle_{1,2}, \langle\gamma',\gamma''\rangle_{1,3},
\langle\gamma',\gamma''\rangle_{2,3}\}. \notag %\label{eq-genc-code-0}
\end{align}
Then $\Psi_{K-1}\subseteq\langle\alpha_{1},
\alpha_{2}\rangle\cup\langle\alpha_{1},
\alpha_{3}\rangle\cup\langle\alpha_{2}, \alpha_{3}\rangle$. Since
$\mathbb F$ is sufficiently large, there exists a
$\beta^{(K)}\in\mathbb F^3$ such that
$\beta^{(K)}\notin\langle\bar{\alpha}, \gamma\rangle$ for all
$\gamma\in\Psi_{K-1}$. Then we have
\begin{align}
\gamma\notin\langle\bar{\alpha}, \beta^{(K)}\rangle,~
\forall\gamma\in\Psi_{K-1}.\label{eq-genc-code-1}
\end{align}
For each $\{i_1,i_2\}\subseteq\{1,2,3\}$, let
\begin{align}
0\neq\beta^{(K)}_{i_1,i_2}\in\langle\beta^{(K)},
\bar{\alpha}\rangle\cap\langle\alpha_{i_1}, \alpha_{i_2}\rangle.
\label{eq-genc-code-2}
\end{align}
Let $\mathcal B_{K}=\{\beta^{(K)}_{1,2}$, $\beta^{(K)}_{1,3}$,
$\beta^{(K)}_{2,3}\}$. We shall prove $\mathcal B_1, \cdots,
\mathcal B_{K-1}, \mathcal B_{K}$ satisfy properties (1)$-$(4) of
Lemma \ref{genc-code}.

By (\ref{eq-genc-code-2}),
$\beta^{(K)}_{i_1,i_2}\in\langle\alpha_{i_1}, \alpha_{i_2}\rangle$
for all $\{i_1,i_2\}\subseteq\{1,2,3\}$. So $\mathcal B_1, \cdots,
\mathcal B_{K-1}, \mathcal B_{K}$ satisfy property (1).

By assumption, $\mathcal B_1, \cdots, \mathcal B_{K-1}$ satisfy
property (2) of Lemma \ref{genc-code}, then for any
$\ell\in\{1,\cdots,K-1\}$ and
$\{\gamma',\gamma''\}\subseteq\mathcal B_\ell$, the pair
$\{\gamma',\gamma''\}$ is in $\Phi_{K-1}$. Moreover, since
$\mathcal B_1, \cdots, \mathcal B_{K-1}$ satisfy property (1) of
Lemma \ref{genc-code}, then $\{\gamma',\gamma''\}\subseteq\mathcal
B_\ell\subseteq\langle\alpha_{1},\alpha_{2}\rangle\cup\langle\alpha_{1},
\alpha_{3}\rangle\cup\langle\alpha_{2},\alpha_{3}\rangle$. So
$\{\gamma',\gamma''\}\subseteq\{\langle\gamma',
\gamma''\rangle_{1,2}$, $\langle\gamma',\gamma''\rangle_{1,3}$,
$\langle\gamma',\gamma''\rangle_{2,3}\}$. Thus,
$\bigcup_{\ell=1}^{K-1}\mathcal B_\ell\subseteq\Psi_{K-1}$ and by
(\ref{eq-genc-code-1}),
\begin{align}
\gamma\notin\langle\beta^{(K)},\bar{\alpha}\rangle, ~
\forall\gamma\in\textstyle{\bigcup_{\ell=1}^{K-1}\mathcal
B_\ell}.\label{eq-genc-code-4}
\end{align}
Note that $\alpha_i\in\mathcal B_i, i=1,2,3$. Then
$\alpha_{i}\notin\langle\beta^{(K)},\bar{\alpha}\rangle, i=1,2,3$.
So by (\ref{eq-genc-code-2}), $\beta^{(K)}_{1,2}$,
$\beta^{(K)}_{1,3}$ and $\beta^{(K)}_{2,3}$ are mutually linearly
independent and $\bar{\alpha}\in\langle\gamma,\gamma'\rangle$ for
all $\{\gamma, \gamma'\}\subseteq\mathcal B_K$. Thus, $\mathcal
B_1, \cdots, \mathcal B_{K-1}, \mathcal B_{K}$ satisfy property
(2).

Now, we prove that $\mathcal B_1, \cdots, \mathcal B_{K-1},
\mathcal B_{K}$ satisfy property (4). Suppose
$\{\gamma,\gamma',\gamma''\}\subseteq\bigcup_{\ell=1}^{K}\mathcal
B_\ell$ such that $\{\gamma, \gamma',\gamma''\}
\nsubseteq\langle\alpha_{i_1},\alpha_{i_2}\rangle$ for all
$\{i_1,i_2\}\subseteq\{1,2,3\}$ and $\{\gamma, \gamma', \gamma''\}
\neq\{\beta^{(\ell)}_{1,2}$, $\beta^{(\ell)}_{1,3}$,
$\beta^{(\ell)}_{2,3}\}$ for all $\ell\in\{4,\cdots,K\}$. We have
the following three cases:

Case 1:
$\{\gamma,\gamma',\gamma''\}\subseteq\bigcup_{\ell=1}^{K-1}\mathcal
B_\ell$. By induction assumption, $\gamma, \gamma'$ and $\gamma''$
are linearly independent.

Case 2:
$\{\gamma',\gamma''\}\subseteq\bigcup_{\ell=1}^{K-1}\mathcal
B_\ell$ and $\gamma\in\mathcal B_K$. We have the following two
subcases:

Case 2.1: $\{\gamma',\gamma''\}\subseteq\langle\alpha_{i_1},
\alpha_{i_2}\rangle$ for some $\{i_1,i_2\}\subseteq\{1,2,3\}$. By
assumption of $\{\gamma, \gamma',\gamma''\}$, we have
$\gamma\notin\langle\alpha_{i_1}, \alpha_{i_2}\rangle$. So
$\gamma, \gamma'$ and $\gamma''$ are linearly independent.

Case 2.2: $\{\gamma',\gamma''\}\nsubseteq\langle\alpha_{i_1},
\alpha_{i_2}\rangle$ for all $\{i_1,i_2\}\subseteq\{1,2,3\}$. Then
the pair $\{\gamma',\gamma''\}$ is in the set $\Phi_{K-1}$ and
$\gamma', \gamma''$ are linearly independent. Note that
$\gamma\in\mathcal B_K=\{\beta^{(K)}_{1,2}$, $\beta^{(K)}_{1,3},
\beta^{(K)}_{2,3}\}$. Then by (\ref{eq-genc-code-2}),
$\gamma\in\langle\alpha_{1},\alpha_{2}\rangle\cup\langle\alpha_{1},
\alpha_{3}\rangle\cup\langle\alpha_{2},\alpha_{3}\rangle$. So
$\gamma, \gamma'$ and $\gamma''$ are linearly independent.
Otherwise, we have $\gamma\in\langle\gamma', \gamma''\rangle$ and
$\gamma\in\{\langle\gamma', \gamma''\rangle_{1,2},
\langle\gamma',\gamma''\rangle_{1,3},
\langle\gamma',\gamma''\rangle_{2,3}\}\subseteq\Psi_{K-1}$. By
(\ref{eq-genc-code-2}), $\gamma\in\langle\bar{\alpha},
\beta^{(K)}\rangle$, which contradicts to (\ref{eq-genc-code-1}).

Case 3: $\gamma''\in\bigcup_{\ell=1}^{K-1}\mathcal B_\ell$ and
$\{\gamma,\gamma'\}\subseteq\mathcal B_K$. Then by the proven
result, $\gamma$ and $\gamma'$ are linearly independent. So by
(\ref{eq-genc-code-2}), $\langle\beta^{(K)},\bar{\alpha}\rangle=
\langle\gamma',\gamma''\rangle$. From (\ref{eq-genc-code-4}), we
have $\gamma\notin\langle\beta^{(K)},\bar{\alpha}\rangle=
\langle\gamma',\gamma''\rangle$. Thus, $\gamma, \gamma'$ and
$\gamma''$ are linearly independent.

In all cases, $\gamma, \gamma'$ and $\gamma''$ are linearly
independent. Thus, $\bigcup_{\ell=1}^{K}\mathcal B_\ell$ satisfies
condition (4).

Clearly, if $\mathcal B_1, \cdots, \mathcal B_{K-1}, \mathcal
B_{K}$ satisfy property (4), then for any $\{\gamma,
\gamma'\}\subseteq\bigcup_{\ell=1}^K\mathcal B_\ell$, we can
always find a vector $\gamma''\in\bigcup_{\ell=1}^K\mathcal
B_\ell$ such that $\gamma, \gamma'$ and $\gamma''$ are linearly
independent. So $\gamma$ and $\gamma'$ are linearly independent
and $\mathcal B_1, \cdots, \mathcal B_{K-1}, \mathcal B_{K}$
satisfy property (3).

Thus, we can always find $K$ sets $\mathcal B_1, \cdots, \mathcal
B_{K-1}, \mathcal B_{K}$ satisfying properties (1)$-$(4) of Lemma
\ref{genc-code} for all $K\geq 3$.
\end{proof}

We give an example of the above construction. For simplicity, we
let $\mathbb F=GF(p)$ for a sufficiently large prime $p$.

When $K=4$, we have $\bigcup_{\ell=1}^{K-1}\mathcal
B_\ell=\{\alpha_1, \alpha_2+\alpha_3, \alpha_2, \alpha_1+\alpha_3,
\alpha_3, \alpha_1+\alpha_2\}$. So $\Phi_3=\{\{\alpha_1,
\alpha_2+\alpha_3\}, \{\alpha_2, \alpha_1+\alpha_3\}, \{\alpha_3,
\alpha_1+\alpha_2\}, \{\alpha_1+\alpha_2, \alpha_1+\alpha_3\},
\{\alpha_1+\alpha_2, \alpha_2+\alpha_3\}, \{\alpha_1+\alpha_3,
\alpha_2+\alpha_3\}\}$. Correspondingly, we have
$\Psi_3=\{\langle\alpha_1\rangle,
\langle\alpha_2+\alpha_3\rangle\}\cup\{\langle\alpha_2\rangle,
\langle\alpha_1+\alpha_3\rangle\}\cup\{\langle\alpha_3\rangle,
\langle\alpha_1+\alpha_2\rangle\}\cup\{\langle\alpha_1+\alpha_2\rangle,
\langle\alpha_1+\alpha_3\rangle,
\langle\alpha_2-\alpha_3\rangle\},
\{\langle\alpha_1+\alpha_2\rangle,
\langle\alpha_2+\alpha_3\rangle,
\langle\alpha_1-\alpha_3\rangle\},
\{\langle\alpha_1+\alpha_3\rangle,
\langle\alpha_2+\alpha_3\rangle,
\langle\alpha_1-\alpha_2\rangle\}\}$. Let
$\beta^{(4)}=\alpha_1+3\alpha_2$. Then $\beta^{(4)}$ satisfies
(\ref{eq-genc-code-1}). By (\ref{eq-genc-code-2}), we have
$\beta^{(4)}_{1,2}=\alpha_1+3\alpha_2$,
$\beta^{(4)}_{1,3}=2\alpha_1+3\alpha_3$ and
$\beta^{(4)}_{2,3}=2\alpha_2-\alpha_3$. So we can obtain $\mathcal
B_4=\{\alpha_1+3\alpha_2, 2\alpha_1+3\alpha_3,
2\alpha_2-\alpha_3\}$. Similarly, we can construct $\mathcal
B_5=\{2\alpha_1+3\alpha_2, \alpha_1+3\alpha_3,
\alpha_2-2\alpha_3\}$ (See Example \ref{ex-genc-code}.), and so
on.

\section{Proof of Lemma \ref{chpt-lem}}
In what follows, we suppose $\tilde{C}_\Pi=\{d_R\in\mathbb F^3;
R\in\Pi\}$ is a feasible code on $\Pi$. By Remark
\ref{rem-sgl-sets}, we can assume that $d_R\neq 0$ for all
$R\in\Pi$.

To prove Lemma \ref{chpt-lem}, the key is to prove that all
equivalent class $[R]\in\mathcal I_c$ satisfies the following
property:\\
\textbf{(P1)} : For any pair $\{Q,Q'\}\subseteq[R]$,
$d_{Q'}\in\langle d_{Q}, \bar{\alpha}\rangle$.

To prove this, we first prove three lemmas.

\begin{lem}\label{claim 1}
Let $\mathcal I$ be a partition of $\Pi$ and $[R]\in\mathcal I$
satisfying (P1). Then for any $\{i_1,i_2\}\subseteq\{1,2,3\}$ and
any pair $\{Q,Q'\}\subseteq[R]_{i_1,i_2}$, $d_{Q'}\in\langle
d_{Q}\rangle$. Moreover, for any subclass $[[R]]$ of $[R]$ and any
pair $\{Q,Q'\}\subseteq[[R]]$, $d_{Q'}\in\langle d_{Q}\rangle$.
\end{lem}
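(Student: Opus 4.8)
The plan is to reduce both assertions to one linear-algebra observation: the coding vectors in a super region $\text{reg}(S_{i_1},S_{i_2})$ are confined to the plane $\langle\alpha_{i_1},\alpha_{i_2}\rangle$, whereas $\bar\alpha$ is not. Property (P1) confines differences to the line direction plus $\bar\alpha$, and intersecting with the plane will kill the $\bar\alpha$ component.

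First I would prove the first assertion. Fix $\{i_1,i_2\}\subseteq\{1,2,3\}$ and $\{Q,Q'\}\subseteq[R]_{i_1,i_2}=[R]\cap\text{reg}(S_{i_1},S_{i_2})$. Since $Q,Q'\in[R]$, property (P1) gives $d_{Q'}\in\langle d_Q,\bar\alpha\rangle$. Since $Q,Q'\in\text{reg}(S_{i_1},S_{i_2})$, claim 1) of Lemma \ref{lem-sgl-sets} gives $d_Q,d_{Q'}\in\langle\alpha_{i_1},\alpha_{i_2}\rangle$. Writing $\{i_3\}=\{1,2,3\}\setminus\{i_1,i_2\}$, every vector of $\langle\alpha_{i_1},\alpha_{i_2}\rangle$ has zero $i_3$-th coordinate while $\bar\alpha$ has $i_3$-th coordinate $1$, so $\bar\alpha\notin\langle\alpha_{i_1},\alpha_{i_2}\rangle$; as $d_Q\neq 0$ by Remark \ref{rem-sgl-sets}, this forces $d_Q$ and $\bar\alpha$ to be linearly independent. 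Expanding $d_{Q'}=a\,d_Q+b\,\bar\alpha$ and reading off the $i_3$-th coordinate yields $b=0$, hence $d_{Q'}=a\,d_Q\in\langle d_Q\rangle$.

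Next I would handle the subclasses. By Definition \ref{sub-class}, every subclass of $[\Delta_j]$ with $j\geq 4$, and the subclass $[S_i]_{j_1,j_2}$ for $i\in\{1,2,3\}$, is of the form $[R]_{i_1,i_2}$, so the first assertion applies verbatim. The only genuinely new case is $[S_i]_i=[S_i]_{i,j_1}\cup[S_i]_{i,j_2}$ from (\ref{eq-nt-4}), where a pair $\{Q,Q'\}$ may straddle the two super regions $\text{reg}(S_i,S_{j_1})$ and $\text{reg}(S_i,S_{j_2})$. This is the main obstacle, since here the argument above breaks down: $Q$ and $Q'$ need not lie in a common $\text{reg}(S_{i_1},S_{i_2})$, so I cannot confine both to a single plane. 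The key idea to circumvent this is to anchor everything to $S_i$ and exploit $d_{S_i}=\alpha_i$ (condition (1) of Definition \ref{sgl-sets}): for any $Q\in[S_i]_{i,j_k}$, applying (P1) to the pair $\{Q,S_i\}\subseteq[S_i]$ gives $d_Q\in\langle\alpha_i,\bar\alpha\rangle$, while claim 1) of Lemma \ref{lem-sgl-sets} gives $d_Q\in\langle\alpha_i,\alpha_{j_k}\rangle$.

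A short computation then shows $\langle\alpha_i,\bar\alpha\rangle\cap\langle\alpha_i,\alpha_{j_k}\rangle=\langle\alpha_i\rangle$, so $d_Q\in\langle\alpha_i\rangle$, and since $d_Q\neq 0$ we get $\langle d_Q\rangle=\langle\alpha_i\rangle$. As this holds for every element of $[S_i]_i$ regardless of which of the two pieces it lies in, all such vectors span the common line $\langle\alpha_i\rangle$; hence $d_{Q'}\in\langle d_Q\rangle$ for every $\{Q,Q'\}\subseteq[S_i]_i$, completing the proof.
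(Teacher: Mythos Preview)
Your proof is correct and follows essentially the same approach as the paper: for the first assertion you intersect $\langle d_Q,\bar\alpha\rangle$ with the plane $\langle\alpha_{i_1},\alpha_{i_2}\rangle$, and for the subclass $[S_i]_i$ you anchor to $S_i$ to force each $d_Q$ into $\langle\alpha_i\rangle$. The only cosmetic difference is that, in the $[S_i]_i$ case, the paper invokes the already-proven first assertion on the pairs $\{S_i,Q\}$ and $\{S_i,Q'\}$ (noting that $S_i$ lies in both $[S_i]_{i,j_1}$ and $[S_i]_{i,j_2}$), whereas you re-derive the same conclusion directly from (P1) and Lemma~\ref{lem-sgl-sets}; both routes yield $\langle d_Q\rangle=\langle\alpha_i\rangle$ and are logically equivalent.
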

\begin{proof}
By (\ref{eq-nt-3}),
$\{Q,Q'\}\subseteq[R]_{i_1,i_2}\subseteq\text{reg}(S_{i_1},S_{i_2})$.
So by 1) of Lemma \ref{lem-sgl-sets}, $d_Q,
d_{Q'}\in\langle\alpha_{i_1}, \alpha_{i_2}\rangle$. Meanwhile,
since $[R]\in\mathcal I$ satisfies (P1), then $d_{Q'}\in\langle
d_{Q}, \bar{\alpha}\rangle$. So
$d_{Q'}\in\langle\alpha_{i_1},\alpha_{i_2}\rangle\cap\langle
d_{Q},\bar{\alpha}\rangle=\langle d_{Q}\rangle$ and the first
claim is true.

Now, we prove the second claim. Suppose $\{Q,Q'\}\subseteq[[R]]$.
If $[R]\neq[S_i]$ for all $i\in\{1,2,3\}$, then by Definition
\ref{sub-class}, $[[R]]=[R]_{i_1,i_2}$ for some
$\{i_1,i_2\}\subseteq\{1,2,3\}$ and by the proven result,
$d_{Q'}\in\langle d_{Q}\rangle$. If $[R]=[S_i]$ for some
$i\in\{1,2,3\}$, then by Definition \ref{sub-class}, we have the
following two cases:

Case 1: $[[R]]=[S_i]_i$. By (\ref{eq-nt-3}) and (\ref{eq-nt-4}),
$[[R]]=[S_i]_{i,j_1}\cup[S_i]_{i,j_2}$, where
$\{j_1,j_2\}=\{1,2,3\}\backslash\{i\}$. By the proven result, we
have $\alpha_i=d_{S_i}\in\langle d_Q\rangle$ and $d_{Q'}\in\langle
d_{S_i}\rangle$. So $d_{Q'}\in\langle d_{Q}\rangle$.

Case 2: $[[R]]=[S_i]_{j_1,j_2}$, where
$\{j_1,j_2\}=\{1,2,3\}\backslash\{i\}$. By the proven result, we
have $d_{Q'}\in\langle d_{Q}\rangle$.

In both cases, we have $d_{Q'}\in\langle d_{Q}\rangle$, which
proves the second claim.
\end{proof}

\begin{lem}\label{claim 2}
Let $\mathcal I$ be a partition of $\Pi$ and
$\{[R'],[R'']\}\subseteq\mathcal I$ such that $[R']$ and $[R'']$
satisfy (P1). If there is a $j\in[n]$ such that
$\Lambda_j\subseteq[[R']]\cup[[R'']]$, where $[[R']]~($resp.
$[[R'']])$ is a subclass of $[R']~($resp. $[R''])$. Then
$\bar{\alpha}\in\langle d_{P'},d_{P''}\rangle$ for any
$P'\in[[R']]$ and $P''\in[[R'']]$.
\end{lem}
\begin{proof}
By Lemma \ref{claim 1}, $d_{Q'}\in\langle d_{P'}\rangle$ for all
$Q'\in[[R']]$ and $d_{Q''}\in\langle d_{P''}\rangle$ for all
$Q''\in[[R'']]$. Then we have $\langle d_Q;
Q\in[[R']]\cup[[R'']]\rangle=\langle d_{P'},d_{P''}\rangle$. Note
that by assumption, $\Lambda_j\subseteq[[R']]\cup[[R'']]$. Then
$\langle d_Q; Q\in\Lambda_j\rangle\subseteq\langle d_Q;
Q\in[[R']]\cup[[R'']\rangle=\langle d_{P'},d_{P''}\rangle$. Now,
since $\tilde{C}_\Pi$ is a feasible code on $\Pi$, then we have
$\bar{\alpha}\in\langle d_Q; Q\in\Lambda_j\rangle\subseteq\langle
d_{P'},d_{P''}\rangle$.
\end{proof}

\begin{lem}\label{claim 3}
Let $\mathcal I$ be a partition of $\Pi$ and $\mathcal I'$ be a
contraction of $\mathcal I$ by combining two connected equivalent
classes in $\mathcal I$. If all equivalent classes in $\mathcal I$
satisfy (P1), then all equivalent classes in $\mathcal I'$ satisfy
(P1).
\end{lem}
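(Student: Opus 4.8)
The plan is to reduce the statement to a single new equivalence class and then exploit the two ways in which the merged classes can be \emph{connected}. Write the contraction as $\mathcal I' = \mathcal I \cup \{[R']\cup[R'']\} \setminus \{[R'],[R'']\}$, where $[R']$ and $[R'']$ are the two connected classes being combined. Every class of $\mathcal I'$ other than $[R']\cup[R'']$ is already a class of $\mathcal I$, hence satisfies (P1) by hypothesis, so the only thing to verify is that the merged class $[R']\cup[R'']$ satisfies (P1). For a class satisfying (P1), the relation $d_{Q'}\in\langle d_Q,\bar{\alpha}\rangle$ holds for \emph{every} ordered pair, which forces $\langle d_Q,\bar{\alpha}\rangle=\langle d_{Q'},\bar{\alpha}\rangle$ throughout the class; I will record this common subspace as $V_{[R']}$ (resp.\ $V_{[R'']}$). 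Since pairs lying entirely inside $[R']$ or entirely inside $[R'']$ are handled by the hypothesis, the whole problem collapses to showing $V_{[R']}=V_{[R'']}$, i.e.\ that some representatives $P'\in[R']$ and $P''\in[R'']$ satisfy $\langle d_{P'},\bar{\alpha}\rangle=\langle d_{P''},\bar{\alpha}\rangle$.

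The crux is to produce such representatives from connectedness, splitting along the two cases of Definition \ref{cntd}. In case (1) there are subclasses $[[R']]$, $[[R'']]$ and a $j\in[n]$ with $\Lambda_j\subseteq[[R']]\cup[[R'']]$; applying Lemma \ref{claim 2} yields $\bar{\alpha}\in\langle d_{P'},d_{P''}\rangle$ for chosen $P'\in[[R']]$, $P''\in[[R'']]$. Because every region of $\Pi$ lies in some $\text{reg}(S_{i_1},S_{i_2})$, part 1) of Lemma \ref{lem-sgl-sets} gives $d_{P'},d_{P''}\in\langle\alpha_{i_1},\alpha_{i_2}\rangle$, so neither is a scalar multiple of $\bar{\alpha}$; hence $d_{P'},d_{P''}$ are independent, $\langle d_{P'},d_{P''}\rangle$ is two-dimensional and contains $\bar{\alpha}$, and both $\langle d_{P'},\bar{\alpha}\rangle$ and $\langle d_{P''},\bar{\alpha}\rangle$ coincide with $\langle d_{P'},d_{P''}\rangle$, giving $V_{[R']}=V_{[R'']}$. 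In case (2) there is a pair $\{i_1,i_2\}$ with a common region $W\in\text{reg}([R']_{i_1,i_2})\cap\text{reg}([R'']_{i_1,i_2})$. By part 1) of Lemma \ref{lem-sgl-sets}, $d_W$ lies in the span of $\{d_Q:Q\in[R']_{i_1,i_2}\}$; by the first claim of Lemma \ref{claim 1} these vectors are pairwise proportional, so $\langle d_W\rangle=\langle d_{P'}\rangle$ for any $P'\in[R']_{i_1,i_2}$, and symmetrically $\langle d_W\rangle=\langle d_{P''}\rangle$ for $P''\in[R'']_{i_1,i_2}$. Thus $\langle d_{P'}\rangle=\langle d_{P''}\rangle$, whence $\langle d_{P'},\bar{\alpha}\rangle=\langle d_{P''},\bar{\alpha}\rangle$ and again $V_{[R']}=V_{[R'']}$.

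With $V_{[R']}=V_{[R'']}$ established, (P1) for the merged class follows: a mixed pair $Q\in[R']$, $Q'\in[R'']$ gives $d_{Q'}\in V_{[R'']}=V_{[R']}=\langle d_Q,\bar{\alpha}\rangle$. The step I expect to be the main obstacle is the clean passage, in case (1), from $\bar{\alpha}\in\langle d_{P'},d_{P''}\rangle$ to the equality of the augmented spans; this is where the structural fact that no coding vector on $\Pi$ is a scalar multiple of $\bar{\alpha}$ (forced by part 1) of Lemma \ref{lem-sgl-sets}) is essential, since it rules out the degenerate configurations and guarantees that $\langle d_{P'},\bar{\alpha}\rangle$ is genuinely two-dimensional. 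The remaining bookkeeping — that the representatives may be chosen inside the relevant subclasses (nonempty precisely because the corresponding super regions meet) and that all coding vectors are nonzero (Remark \ref{rem-sgl-sets}) — is routine.
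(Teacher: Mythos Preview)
Your proof is correct and follows essentially the same approach as the paper's: both reduce to the merged class, split into the two cases of Definition \ref{cntd}, invoke Lemma \ref{claim 2} in case (1) and Lemma \ref{claim 1} together with 1) of Lemma \ref{lem-sgl-sets} in case (2), and then chain through using (P1) on each original class. Your framing via the common two-dimensional subspaces $V_{[R']},V_{[R'']}$ and your explicit independence argument in case (1) are slightly more detailed than the paper's ``similar to 2) of Lemma \ref{lem-sgl-sets}'', but the substance is the same.
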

\begin{proof}
Suppose $[R'],[R'']\in\mathcal I$ are connected and $\mathcal I'$
is obtained by combining $[R']$ and $[R'']$.

Suppose $[R]\in\mathcal I'$. If $[R]\neq[R']\cup[R'']$, then
$[R]\in\mathcal I$, and by assumption, $[R]$ satisfies (P1). So we
only need to prove that $[R]=[R']\cup[R'']$ satisfies (P1), i.e.,
$d_{Q'}\in\langle d_{Q}, \bar{\alpha}\rangle$ for any pair
$\{Q,Q'\}\subseteq[R']\cup[R'']$. Note that by assumption, $[R'],
[R'']\in\mathcal I$ satisfy (P1). So if $\{Q,Q'\}\subseteq[R']$ or
$\{Q,Q'\}\subseteq[R'']$, then $d_{Q'}\in\langle d_{Q},
\bar{\alpha}\rangle$. Thus, we can assume $\{Q,Q'\}\nsubseteq[R']$
and $\{Q,Q'\}\nsubseteq[R'']$. By proper naming, we can assume
$Q\in[R']$ and $Q'\in[R'']$. Since, $[R']$ and $[R'']$ are
connected, by Definition \ref{cntd}, we have two cases:

Case 1: There is a $\Lambda_j\subseteq[[R']]\cup[[R'']]$, where
$[[R']]~($resp. $[[R'']])$ is a subclass of $[R']~($resp.
$[R''])$. By Lemma \ref{claim 2}, $\bar{\alpha}\in\langle
d_{P'},d_{P''}\rangle$, where $P'\in[[R']]$ and $P''\in[[R'']]$.
Similar to 2) of Lemma \ref{lem-sgl-sets}, we have $\langle
d_{P'}, \bar{\alpha}\rangle=\langle d_{P''},\bar{\alpha}\rangle$.
Since $Q', P''\in[R'']$ and $[R'']$ satisfy (P1), then
$d_{Q'}\in\langle d_{P''},\bar{\alpha}\rangle=\langle
d_{P'},\bar{\alpha}\rangle$. Similarly, since $Q, P'\in[R']$ and
$[R']$ satisfy (P1), then $d_{P'}\in\langle
d_{Q},\bar{\alpha}\rangle$. Thus, $d_{Q'}\in\langle
d_{P'},\bar{\alpha}\rangle\subseteq\langle
d_{Q},\bar{\alpha}\rangle$.

Case 2: There is a subset $\{i_1,i_2\}\subseteq\{1,2,3\}$ such
that
$\text{reg}([R']_{i_1,i_2})\cap\text{reg}([R'']_{i_1,i_2})\neq\emptyset$.
Suppose
$Q_0\in\text{reg}([R']_{i_1,i_2})\cap\text{reg}([R'']_{i_1,i_2})$.
Pick a $P_1\in\text{reg}([R']_{i_1,i_2})$. By Lemma \ref{claim 1},
$d_{P'}\in\langle d_{P_1}\rangle, \forall P'\in[R']_{i_1,i_2}$.
Then by 1) of Lemma \ref{lem-sgl-sets}, $d_{Q_0}\in\langle d_{P'};
P'\in[R']_{i_1,i_2}\rangle=\langle d_{P_1}\rangle$, which implies
that $\langle d_{P_1}\rangle=\langle d_{Q_0}\rangle$. Note that by
assumption, $[R']$ satisfies (P1). Then $d_{P_1}\in\langle
d_{Q},\bar{\alpha}\rangle$. By 1) of Lemma \ref{lem-sgl-sets},
$d_{P_1},d_{Q}\notin\langle\bar{\alpha}\rangle$. So we have
$\langle d_{Q}, \bar{\alpha}\rangle=\langle d_{P_1},
\bar{\alpha}\rangle=\langle d_{Q_0}, \bar{\alpha}\rangle$.
Similarly, Pick a $P_2\in\text{reg}([R'']_{i_1,i_2})$ and we have
$\langle d_{Q'}, \bar{\alpha}\rangle=\langle d_{P_2},
\bar{\alpha}\rangle=\langle d_{Q_0}, \bar{\alpha}\rangle$. So
$d_{Q'}\in\langle d_{Q_0},\bar{\alpha}\rangle=\langle
d_{Q},\bar{\alpha}\rangle$.

In both cases, $[R]=[R']\cup[R'']$ satisfies (P1). Thus, all
equivalent classes in $\mathcal I'$ satisfy (P1).
\end{proof}

Now we can prove Lemma \ref{chpt-lem}.
\begin{proof}[Proof of Lemma \ref{chpt-lem}]
First, for the trivial partition $\mathcal I_0$ of $\Pi$, since
$[R]=\{R\}, \forall [R]\in\mathcal I_0$, so $[R]$ naturally
satisfies property (P1) and $[S_i]\neq[S_j], \forall
\{i,j\}\subseteq\{1,2,3\}$.

By Definition \ref{chpt}, $\mathcal I_c=\mathcal I_L$, where
$\mathcal I_0,\mathcal I_1,\cdots,\mathcal I_L$ is a sequence of
partitions of $\Pi$ such that for each $\ell\in\{1,\cdots,L\}$ and
$\{i,j\}\subseteq\{1,2,3\}$, $[S_i]\neq[S_j]$ in $\mathcal
I_{\ell-1}$ and $\mathcal I_\ell$ is a contraction of $\mathcal
I_{\ell-1}$ by combining two connected equivalent classes. So by
Lemma \ref{claim 3}, all equivalent classes in $\mathcal I_\ell$
satisfy (P1). In particular, all equivalent classes in $\mathcal
I_c=\mathcal I_L$ satisfies (P1). Then Lemma \ref{chpt-lem} is
derived from Lemma \ref{claim 1}.
\end{proof}

\section{Proof of Theorem \ref{sgl-3t}}

In what follows, we always assume $G$ has three terminal regions
$T_1,T_2$ and $T_3$. If $\Lambda_{j_1}=\Lambda_{j_2}$ for some
$\{j_1,j_2\}\subseteq\{1,2,3\}$, then by Remark
\ref{rem-num-leq-2}, $\text{RG}(D^{**})$ is feasible. So we always
assume that $\Lambda_1,\Lambda_2,\Lambda_3$ are mutually
different.

We first give some lemmas, which will help to prove Theorem
\ref{sgl-3t}.

\begin{lem}\label{s2-ps}
For any $j\in\{1,2,3\}$, the following hold.
\begin{itemize}
  \item [1)] If $\Lambda_j=\{S_{j_1},P\}$ for some $j_1\in\{1,2,3\}$,
  then $P\in\text{reg}^\circ(S_{j_2},S_{j_3})$, where
  $\{j_2,j_3\}=\{1,2,3\}\backslash\{j_1\}$.
  \item [2)] If $\Lambda_j=\{P',P''\}\subseteq\Pi\setminus\{S_1,S_2,S_3\}$
  and $P'\in\text{reg}^\circ(S_{j_1},S_{j_2})$ for some
  $\{j_1,j_2\}\subseteq\{1,2,3\}$, then
  $P''\in\text{reg}^\circ(S_{j_1},S_{j_3})$ or
  $P''\in\text{reg}^\circ(S_{j_2},S_{j_3})$, where
  $\{j_3\}=\{1,2,3\}\backslash\{j_1,j_2\}$.
\end{itemize}
\end{lem}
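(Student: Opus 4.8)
The plan is to read both parts off the partition structure of $\Pi$ together with part 2) of Lemma \ref{in-reg-lmd-t-s}. Recall from (\ref{eq-nt-1}) and (\ref{eq-nt-2}) that the three sets $\text{reg}^\circ(S_1,S_2)$, $\text{reg}^\circ(S_1,S_3)$, $\text{reg}^\circ(S_2,S_3)$ are mutually disjoint, and since $\Pi=\text{reg}(S_1,S_2)\cup\text{reg}(S_1,S_3)\cup\text{reg}(S_2,S_3)$ by (1) of Definition \ref{lmd-omd}, every $R\in\Pi\setminus\{S_1,S_2,S_3\}$ lies in \emph{exactly one} of these three sets. This single observation, which I would record first, does almost all the work: ``not in $\text{reg}(S_{i_1},S_{i_2})$'' for a non-source region forces it into the $\text{reg}^\circ$ of one of the two remaining pairs. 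The second ingredient is that, under the terminal-separability hypothesis in force here, part 2) of Lemma \ref{in-reg-lmd-t-s} gives $\Lambda_j\nsubseteq\text{reg}(S_{i_1},S_{i_2})$ for every pair $\{i_1,i_2\}\subseteq\{1,2,3\}$.

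For part 1), with $\Lambda_j=\{S_{j_1},P\}$ and $\{j_2,j_3\}=\{1,2,3\}\setminus\{j_1\}$, I would apply the above non-containment to the two pairs involving $j_1$. Since $S_{j_1}\in\text{reg}(S_{j_1},S_{j_2})$ but $\Lambda_j\nsubseteq\text{reg}(S_{j_1},S_{j_2})$, the other element must satisfy $P\notin\text{reg}(S_{j_1},S_{j_2})$; symmetrically $P\notin\text{reg}(S_{j_1},S_{j_3})$. In particular $P\neq S_{j_2}$ and $P\neq S_{j_3}$ (each source region lies in the corresponding super region), and of course $P\neq S_{j_1}$, so $P$ is a non-source region. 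By the first observation $P$ lies in one of the three $\text{reg}^\circ$ sets; having excluded $\text{reg}^\circ(S_{j_1},S_{j_2})$ and $\text{reg}^\circ(S_{j_1},S_{j_3})$ (each contained in the corresponding $\text{reg}$), the only survivor is $\text{reg}^\circ(S_{j_2},S_{j_3})$, as claimed.

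For part 2), with $\Lambda_j=\{P',P''\}\subseteq\Pi\setminus\{S_1,S_2,S_3\}$ and $P'\in\text{reg}^\circ(S_{j_1},S_{j_2})$, I would argue identically: since $P'\in\text{reg}(S_{j_1},S_{j_2})$ and $\Lambda_j\nsubseteq\text{reg}(S_{j_1},S_{j_2})$, the remaining element satisfies $P''\notin\text{reg}(S_{j_1},S_{j_2})$; being a non-source region of $\Pi$, $P''$ then lies in one of the two remaining $\text{reg}^\circ$ sets, namely $\text{reg}^\circ(S_{j_1},S_{j_3})$ or $\text{reg}^\circ(S_{j_2},S_{j_3})$.

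I do not anticipate a genuine obstacle here: the statement is essentially a bookkeeping consequence of the disjoint decomposition of $\Pi$ and Lemma \ref{in-reg-lmd-t-s}(2). The only point requiring care is the membership check in part 1) --- verifying that $P$ cannot secretly be a source region --- which is why I would spell out $P\neq S_{j_2},S_{j_3}$ explicitly rather than treat it as obvious, since the hypothesis $\Lambda_j=\{S_{j_1},P\}$ only guarantees $P\neq S_{j_1}$ directly.
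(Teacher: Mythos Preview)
Your proposal is correct and follows essentially the same route as the paper: both parts use Lemma~\ref{in-reg-lmd-t-s}(2) to exclude the super regions containing the known element and then read off the location of the remaining element from the decomposition $\Pi=\text{reg}(S_1,S_2)\cup\text{reg}(S_1,S_3)\cup\text{reg}(S_2,S_3)$. The paper is slightly terser in part~1) --- it writes $P\in\Pi\setminus(\text{reg}(S_{j_1},S_{j_2})\cup\text{reg}(S_{j_1},S_{j_3}))=\text{reg}^\circ(S_{j_2},S_{j_3})$ directly --- but your explicit check that $P$ is non-source is the same content unpacked.
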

\begin{proof}
1) By 2) of Lemma \ref{in-reg-lmd-t-s},
$\Lambda_j\nsubseteq\text{reg}(S_{j_1},S_{j_2})$ and
$\Lambda_j\nsubseteq\text{reg}(S_{j_1},S_{j_3})$. So if
$\Lambda_j=\{S_{j_1},P\}$, then
$P\notin\text{reg}(S_{j_1},S_{j_2})\cup\text{reg}(S_{j_1},S_{j_3})$.
Thus, we have $P\in\Pi\setminus(\text{reg}(S_{j_1},S_{j_2})\cup
\text{reg}(S_{j_1},S_{j_2}))=\text{reg}^\circ(S_{j_2},S_{j_3})$.

2) By 2) of Lemma \ref{in-reg-lmd-t-s},
$\Lambda_j\nsubseteq\text{reg}(S_{j_1},S_{j_2})$. Moreover, by
assumption of this lemma, $P''\notin\{S_1,S_2,S_3\}$. So if
$P'\in\text{reg}^\circ(S_{j_1},S_{j_2})$, then
$P''\in\text{reg}^\circ(S_{j_1},S_{j_3})\cup\text{reg}^\circ(S_{j_2},S_{j_3})$.
\end{proof}

For example, if $\Lambda_j=\{S_1,P\}$, then
$P\in\text{reg}^\circ(S_{2},S_{3})$; If
$\Lambda_j=\{P',P''\}\subseteq\Pi\backslash\{S_1,S_2,S_3\}$ and
$P'\in\text{reg}^\circ(S_{1},S_{2})$, then we have
$P''\in\text{reg}^\circ(S_{1},S_{3})$ or
$P''\in\text{reg}^\circ(S_{2},S_{3})$.

\begin{lem}\label{case1-1}
Suppose $P_1\in\text{reg}^\circ(S_1,S_2)$ and
$P_2,P_3\in\text{reg}^\circ(S_2,S_3)$ such that
$\Lambda_1=\{P_1,P_2\}$ and $\Lambda_2=\{P_1,P_3\}$. Then
$\text{RG}(D^{**})$ is feasible.
\end{lem}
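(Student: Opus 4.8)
The plan is to produce a feasible code on $\Pi$ and invoke Lemma \ref{comp-code}, which already yields feasibility of $\text{RG}(D^{**})$ once a \emph{compatible} partition of $\Pi$ is in hand (Theorem \ref{lmd-solv} is the conceptual reason this suffices, since $\text{RG}(D^{**})$ is terminal-separable). I first record the disjoint decomposition
\[
\Pi=\{S_1,S_2,S_3\}\sqcup\text{reg}^\circ(S_1,S_2)\sqcup\text{reg}^\circ(S_1,S_3)\sqcup\text{reg}^\circ(S_2,S_3),
\]
which follows from (\ref{eq-nt-1}) and (\ref{eq-nt-2}). Since $P_2\in\Lambda_1$ and $P_3\in\Lambda_2$ lie in $\text{reg}^\circ(S_2,S_3)$ while $P_1\in\Lambda_1\cap\Lambda_2$ lies in $\text{reg}^\circ(S_1,S_2)$, both $\Lambda_1$ and $\Lambda_2$ meet $\text{reg}^\circ(S_2,S_3)$ and also meet $\text{reg}^\circ(S_1,S_2)$. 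This observation is what drives the easy cases through Theorem \ref{simple-case}(3).

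Next I would dispatch two cases. If $\Lambda_3\cap\text{reg}^\circ(S_2,S_3)\neq\emptyset$, then all three of $\Lambda_1,\Lambda_2,\Lambda_3$ meet $\text{reg}^\circ(S_2,S_3)$, so condition (3) of Theorem \ref{simple-case} holds with $\{\ell',\ell''\}=\{2,3\}$ and $\text{RG}(D^{**})$ is feasible; likewise, if $\Lambda_3\cap\text{reg}^\circ(S_1,S_2)\neq\emptyset$, condition (3) applies with $\{\ell',\ell''\}=\{1,2\}$. By the decomposition above it then remains to handle the case $\Lambda_3\subseteq\{S_1,S_2,S_3\}\cup\text{reg}^\circ(S_1,S_3)$.

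In this remaining case I would first extract structure from Lemma \ref{in-reg-lmd-t-s}(2). The only element of $\{S_1,S_2,S_3\}\cup\text{reg}^\circ(S_1,S_3)$ lying outside $\text{reg}(S_1,S_3)$ is $S_2$, so $\Lambda_3\nsubseteq\text{reg}(S_1,S_3)$ forces $S_2\in\Lambda_3$ and $\Lambda_3\setminus\{S_2\}\subseteq\text{reg}(S_1,S_3)$; moreover $\Lambda_3\nsubseteq\text{reg}(S_2,S_3)$ forces $\Lambda_3$ to contain an element of $\{S_1\}\cup\text{reg}^\circ(S_1,S_3)$. I then build a compatible partition $\mathcal I$. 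The constraints $\Lambda_1=\{P_1,P_2\}$, $\Lambda_2=\{P_1,P_3\}$ force (in any compatible partition) $P_1,P_2,P_3$ into one equivalent class with $P_1$ in a subclass different from $P_2,P_3$; I realize this by placing $P_1,P_2,P_3$ (together with whatever $R$-closure Definition \ref{R-closed-Pi} requires) into $[S_1]$, so that $d_{P_1}=\alpha_1$ and $d_{P_2}=d_{P_3}=\alpha_2+\alpha_3$, whence $\bar{\alpha}\in\langle d_{P_1},d_{P_2}\rangle=\langle d_{P_1},d_{P_3}\rangle$ by Theorem \ref{prtn-dctr-code}(3). For $\Lambda_3$ I keep $S_1\in[S_1]$, $S_3\in[S_3]$ and place every region of $\Lambda_3\cap\text{reg}^\circ(S_1,S_3)$ into $[S_2]$ (subclass $[S_2]_{1,3}$, vector $\alpha_1+\alpha_3$); then $S_2$ and such a region lie in different subclasses of $[S_2]$, giving $\bar{\alpha}\in\langle d_R;R\in\Lambda_3\rangle$ again by Theorem \ref{prtn-dctr-code}(3). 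In the degenerate situation $\Lambda_3=\{S_1,S_2,S_3\}$, the set $\Lambda_3$ is an $\mathcal I$-independent set, so $\bar{\alpha}\in\langle d_R;R\in\Lambda_3\rangle$ by Theorem \ref{prtn-dctr-code}(4). Checking that this $\mathcal I$ satisfies the three requirements of Definition \ref{cmptl} then closes the case via Lemma \ref{comp-code}.

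The hard part will be this last case: I must verify simultaneously that the above groupings are $R$-closed (Definition \ref{R-closed-Pi}) and that $\mathcal I$ is compatible (Definition \ref{cmptl}) --- in particular that absorbing $P_2,P_3$ into $[S_1]_{2,3}$ and the $\Lambda_3$-regions into $[S_2]_{1,3}$ neither connects two classes nor violates condition (3) of Definition \ref{cmptl} for any $\Lambda_j$, while still spanning $\bar{\alpha}$ at each of $T_1,T_2,T_3$. The reason feasibility should hold at all --- and a useful guard against error --- is that this configuration is genuinely different from the forbidden structure (C-IR) of Theorem \ref{sgl-3t}: there one of the two-element sets $\Lambda_1,\Lambda_2$ contains the source region $S_1$, whereas here $\Lambda_1,\Lambda_2\subseteq\Pi\setminus\{S_1,S_2,S_3\}$ by hypothesis, so the obstruction forcing all of $\Lambda_3$ into a single $\langle\alpha_{i_1},\alpha_{i_2}\rangle$ cannot arise.
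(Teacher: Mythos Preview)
Your proposal is correct and follows essentially the same strategy as the paper: handle the cases $\Lambda_3\cap\text{reg}^\circ(S_1,S_2)\neq\emptyset$ and $\Lambda_3\cap\text{reg}^\circ(S_2,S_3)\neq\emptyset$ via Theorem~\ref{simple-case}(3), then in the remaining case exhibit a compatible partition and apply Lemma~\ref{comp-code}.

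The one substantive difference is the choice of partition in the remaining case. You propose to absorb $P_1,P_2,P_3$ into $[S_1]$ (so $d_{P_1}=\alpha_1$, $d_{P_2}=d_{P_3}=\alpha_2+\alpha_3$) and the regions of $\Lambda_3\cap\text{reg}^\circ(S_1,S_3)$ into $[S_2]$, leaving the remaining R-closure to be worked out. The paper instead writes down an explicit four-class partition
\[
[S_1]=\{S_1\},\quad [S_2]=\{S_2\}\cup\text{reg}^\circ(S_1,S_3),\quad [S_3]=\{S_3\},\quad [P_1]=\text{reg}^\circ(S_1,S_2)\cup\text{reg}^\circ(S_2,S_3),
\]
so that $P_1,P_2,P_3\in[P_1]$, separate from all sources. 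This choice buys two simplifications: R-closedness is immediate (each subclass is either a singleton source, a full $\text{reg}^\circ(S_{i_1},S_{i_2})$, or empty), and condition~(3) of Definition~\ref{cmptl} becomes trivial because $[S_i]_i=\{S_i\}$ for every $i$. It also makes the case split on $\Lambda_3=\{S_1,S_2,S_3\}$ versus $\{S_2,P\}\subseteq\Lambda_3$ unnecessary, since in either event $\Lambda_3$ meets two different subclasses of $[S_2]$. Your variant works as well (it is essentially the paper's partition with $[S_1]$ and $[P_1]$ merged), but then $[S_1]_1\supsetneq\{S_1\}$ and you must actually use $S_2\in\Lambda_3$ to verify condition~(3) for the pair $\{1,3\}$ --- which is exactly what you do. So both routes land in the same place; the paper's is just tidier to check.
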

\begin{proof}
If $\Lambda_3\cap\text{reg}^\circ(S_1,S_2)\neq\emptyset$ or
$\Lambda_3\cap\text{reg}^\circ(S_2,S_3)\neq\emptyset$, then by (3)
of Theorem \ref{simple-case}, $\text{RG}(D^{**})$ is feasible. So
we assume
$\Lambda_3\cap(\text{reg}^\circ(S_1,S_2)\cup\text{reg}^\circ(S_2,S_3))
=\emptyset$. Then we have
$\Lambda_3\subseteq\Pi\setminus(\text{reg}^\circ(S_1,S_2)
\cup\text{reg}^\circ(S_2,S_3))=\{S_2\}\cup\text{reg}(S_1,S_3)$.
Moreover, since by 2) of Lemma \ref{in-reg-lmd-t-s},
$\Lambda_3\nsubseteq\text{reg}(S_1,S_3)$, then $S_2\in\Lambda_3$.
Thus, we have
$$S_2\in\Lambda_3\subseteq\{S_2\}\cup\text{reg}(S_1,S_3).$$ Again,
by 2) of Lemma \ref{in-reg-lmd-t-s},
$\Lambda_3\nsubseteq\text{reg}(S_1,S_2)$ and
$\Lambda_3\nsubseteq\text{reg}(S_2,S_3)$. Then either
$\Lambda_3=\{S_1,S_2,S_3\}$ or $\{S_2,P\}\subseteq\Lambda_3$ for
some $P\in\text{reg}^\circ(S_1,S_3)$.

Let $\mathcal I=\{[S_{1}],[S_{2}],[S_{3}],[P_1]\}$, where
$[S_{1}]=\{S_{1}\}$,
$[S_{2}]=\{S_{2}\}\cup\text{reg}^\circ(S_{1},S_{3})$,
$[S_{3}]=\{S_{3}\}$ and $[P_1]=\text{reg}^\circ(S_{1},
S_{2})\cup\text{reg}^\circ(S_{2},S_{3})$. By Definition
\ref{cmptl}, it is easy to check that $\mathcal I$ is a compatible
partition of $\Pi$. So by Lemma \ref{comp-code},
$\text{RG}(D^{**})$ is feasible. An illustration is given in Fig.
\ref{fg-3t-1}.
\end{proof}

%%%%%%%%%%%%%%%%%%%%%%%%%%%%%%%%%%%%%%%%%%%
\renewcommand\figurename{Fig}
\begin{figure}[htbp]
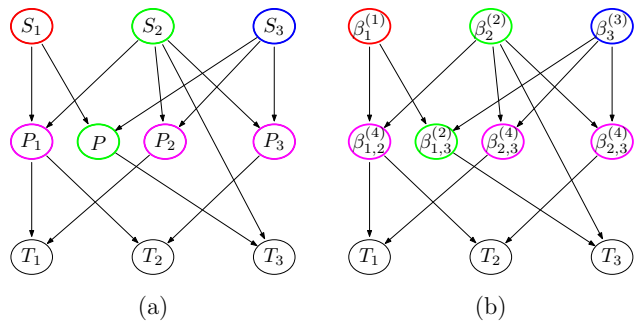

\begin{center}
\includegraphics[height=4.2cm]{fg-3t.1}
\hspace{0.55cm}\includegraphics[height=4.2cm]{fg-3t.2}
\end{center} \caption{An example of code construction: (a) is a region
graph satisfies conditions of Lemma \ref{case1-1} with
$\Lambda_3=\{S_2,P\}$; (b) illustrates a code, where the sets of
coding vectors $\mathcal
B_1=\{\beta^{(1)}_{1},\beta^{(1)}_{2,3}\},\cdots,\mathcal
B_4=\{\beta^{(4)}_{1,2},\beta^{(4)}_{1,3},\beta^{(4)}_{2,3}\}$ are
as in Lemma \ref{genc-code}. We can check that it is still a
feasible code if $\Lambda_3=\{S_1,S_2,S_3\}$.}\label{fg-3t-1}
\end{figure}
%%%%%%%%%%%%%%%%%%%%%%%%%%%%%%%%%%%%%%%%%%%%%%

\begin{lem}\label{case1-2}
Suppose $\Lambda_1=\{P_1,P_2\}$ and $\Lambda_2=\{P_1,P_3\}$ for
some $P_1\in\text{reg}^\circ(S_1,S_2)$,
$P_2\in\text{reg}^\circ(S_2,S_3)$ and
$P_3\in\text{reg}^\circ(S_1,S_3)$. If $\text{RG}(D^{**})$ is
infeasible, then the condition (C-IR) holds.
\end{lem}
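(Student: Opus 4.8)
The plan is to assume $\text{RG}(D^{**})$ is infeasible and then squeeze the structure of $\Lambda_3$ until it is forced into the shape (C-IR), using the feasibility-producing criteria of Theorem \ref{simple-case} to kill every other possibility. First I would record two reductions. Since $\Lambda_1,\Lambda_2\subseteq\Pi\setminus\{S_1,S_2,S_3\}$, if also $\Lambda_3\subseteq\Pi\setminus\{S_1,S_2,S_3\}$ then condition (2) of Theorem \ref{simple-case} would make $\text{RG}(D^{**})$ feasible; hence infeasibility forces $\Lambda_3\cap\{S_1,S_2,S_3\}\neq\emptyset$. Likewise $P_1\in\Lambda_1\cap\Lambda_2\cap\text{reg}^\circ(S_1,S_2)$, so if $\Lambda_3\cap\text{reg}^\circ(S_1,S_2)\neq\emptyset$ as well, condition (3) of Theorem \ref{simple-case} with $\{\ell',\ell''\}=\{1,2\}$ would give feasibility; hence $\Lambda_3\cap\text{reg}^\circ(S_1,S_2)=\emptyset$. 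Combining these with $\Lambda_3\nsubseteq\text{reg}(S_{i_1},S_{i_2})$ (part 2 of Lemma \ref{in-reg-lmd-t-s}) locates $\Lambda_3$ inside $\{S_1,S_2,S_3\}\cup\text{reg}^\circ(S_1,S_3)\cup\text{reg}^\circ(S_2,S_3)$ and guarantees it meets both $\{S_1\}\cup\text{reg}^\circ(S_1,S_3)$ and $\{S_2\}\cup\text{reg}^\circ(S_2,S_3)$.

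Next I would extract rigidity from the two hypotheses $\Lambda_1=\{P_1,P_2\}$, $\Lambda_2=\{P_1,P_3\}$. Taking any feasible code on $\Pi$ with $d_R\neq0$ (Remark \ref{rem-sgl-sets}), part 2 of Lemma \ref{lem-sgl-sets} forces $\langle\bar{\alpha},d_{P_1}\rangle=\langle\bar{\alpha},d_{P_2}\rangle=\langle\bar{\alpha},d_{P_3}\rangle$, a single plane through $\bar{\alpha}$; intersecting it with the coordinate planes (part 1 of Lemma \ref{lem-sgl-sets}) pins down $\langle d_{P_1}\rangle,\langle d_{P_2}\rangle,\langle d_{P_3}\rangle$ up to scale, and satisfiability of $\Lambda_1,\Lambda_2$ forces $d_{P_1}$ to have its first two coordinates nonzero, so that $\langle d_{P_2}\rangle$ is a fixed line in $\langle\alpha_2,\alpha_3\rangle$ different from $\langle\alpha_2+\alpha_3\rangle$. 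Passing to the normal partition $\mathcal I_c$ (Algorithm 3), Lemma \ref{chpt-lem} upgrades this to: every region in the same subclass as $P_2$ shares the line $\langle d_{P_2}\rangle$, while every region of $\text{reg}(S_1,S_3)$ keeps its vector in $\langle\alpha_1,\alpha_3\rangle$. This is exactly the obstruction that will make an $\{S_1,\cdot\}$-type $\Lambda_3$ unsatisfiable.

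Finally I would synthesize. By Theorem \ref{regular-cmptl}, infeasibility means $\mathcal I_c$ is not compatible, while the connections coming from $\Lambda_1,\Lambda_2$ already glue $P_1,P_2,P_3$ into one class $M$, and—since source regions have no parents—no source can enter $M$ except through a connection forced by $\Lambda_3$. Tracing Definition \ref{cntd}, such a connection arises only if $\Lambda_3$ minus one source lies in a single super-region subclass of $M$; the location of $\Lambda_3$ from the first paragraph then leaves exactly the case $\Lambda_3=\{S_1,P\}$ with $P\in\text{reg}^\circ(S_2,S_3)\cap M$ (together with its mirror under $S_2\leftrightarrow S_3$, $\Lambda_1\leftrightarrow\Lambda_2$). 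Once $S_1\in M$ one gets $\Lambda_2=\{P_1,P_3\}\subseteq[S_1]_1$, which is precisely the failure of condition (3) of Definition \ref{cmptl}; reading this in the names of (C-IR)—taking the shared $P\in\text{reg}^\circ(S_2,S_3)$ as its ``$P_1$'', $P_1\in\text{reg}^\circ(S_1,S_2)$ as its ``$P_2$'', and verifying $\{P_1,P_3\}\subseteq\text{reg}(S_1,P_1)\cup\text{reg}(S_1,S_3)$—yields (C-IR) after the matching renaming of terminals. The hard part will be the elimination step: showing that any placement of $\Lambda_3$ which does \emph{not} drive a source into $M$ (in particular configurations where $\Lambda_3$ pulls a source into a class disjoint from $M$) instead admits a compatible partition, so that Lemma \ref{comp-code} would contradict infeasibility. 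This amounts to checking condition (3) of Definition \ref{cmptl} for $\Lambda_1,\Lambda_2,\Lambda_3$ against every forbidden union $[S_{i_1}]_{i_1}\cup[S_{i_2}]_{i_2}\cup\text{reg}(S_{i_1},S_{i_2})$, in the explicit style of Lemma \ref{case1-1}.
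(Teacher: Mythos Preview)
Your second paragraph is logically broken: you begin ``Taking any feasible code on $\Pi$\ldots'' and then invoke Lemma~\ref{chpt-lem}, but both steps presuppose that a feasible code exists, whereas the standing hypothesis is precisely that $\text{RG}(D^{**})$ is infeasible and hence, by Theorem~\ref{lmd-solv}, that \emph{no} feasible code on $\Pi$ exists. Whatever heuristic content this paragraph carries, it contributes nothing to the proof and should be deleted.

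Your first-paragraph reductions are correct and already go a bit further than the paper (you rule out $\Lambda_3\cap\text{reg}^\circ(S_1,S_2)\neq\emptyset$ via condition~(3) of Theorem~\ref{simple-case}, which the paper only uses in Case~1). But the third paragraph has real gaps. First, you never dispose of $|\Lambda_3|\geq 3$; the paper handles this by exhibiting the compatible partition with $[P_1]=\{P_1,P_2,P_3\}$ and all other regions singletons. Second, your assertion that the normal-partition analysis ``leaves exactly the case $\Lambda_3=\{S_1,P\}$ with $P\in\text{reg}^\circ(S_2,S_3)\cap M$'' is not established: from your first paragraph $\Lambda_3$ could still be, e.g., $\{S_2,P\}$ with $P\in\text{reg}^\circ(S_1,S_3)$, or $\{S_3,Q,Q'\}$ with $Q,Q'$ in two different super-regions, and you have not shown why such $\Lambda_3$ must connect a source into $M$ (nor why, in the $\{S_1,P\}$ case, $P$ must coincide with $P_2$, which is exactly what (C-IR) needs after renaming). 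Tracking incompatibility of $\mathcal I_c$ through Definition~\ref{cntd} is delicate because condition~(2) there (super-region closure) can enlarge classes in ways that depend on the full topology of $\Pi$, not just on $\Lambda_1,\Lambda_2,\Lambda_3$.

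The paper's proof avoids all of this by direct construction: after the reduction to $\Lambda_3\cap\{S_1,S_2,S_3\}\neq\emptyset$, it shows that $|\Lambda_3|\geq 3$ yields a compatible partition (hence feasibility), reduces to $|\Lambda_3|=2$, and then runs through the three cases $\Lambda_3=\{S_3,P\}$, $\{S_2,P\}$, $\{S_1,P\}$, in each case either applying Theorem~\ref{simple-case}(3) or writing down an explicit compatible partition unless $P=P_3$ (resp.\ $P=P_2$). What survives is precisely $\Lambda_3=\{S_1,P_2\}$ or $\Lambda_3=\{S_2,P_3\}$, and a renaming of terminals and of $P_1,P_2$ puts this into the form (C-IR). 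Your outline eventually acknowledges this (``in the explicit style of Lemma~\ref{case1-1}''), so you would do the same case analysis anyway; the detour through $\mathcal I_c$ buys nothing and introduces the unjustified claims above. I recommend abandoning the normal-partition route here and writing the proof directly as the paper does.
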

\begin{proof}
If $\Lambda_3\subseteq\Pi\setminus\{S_1,S_2,S_3\}$, then by (2) of
Theorem \ref{simple-case}, $\text{RG}(D^{**})$ is feasible. So we
assume that $\Lambda_3\cap\{S_1,S_2,S_3\}\neq\emptyset$. Moreover,
if $|\Lambda_3|\geq 3$, then it is easy to see that $\mathcal
I=\{[P_1]\}\cup\{[R]; R\in\Pi\backslash[P_1]\}$ is a compatible
partition of $\Pi$, where $[P_1]=\{P_1,P_2,P_3\}$ and $[R]=\{R\}$
for all $R\in\Pi\backslash[P_1]$. By Lemma \ref{comp-code},
$\text{RG}(D^{**})$ is feasible. An illustration is given in Fig.
\ref{fg-3t-2} (a). So we further assume that $|\Lambda_3|=2$.
Thus, by 1) of Lemma \ref{s2-ps}, we have the following three
cases:

Case 1: $\Lambda_{3}=\{S_3,P\}$. By 1) of Lemma \ref{s2-ps},
$P\in\text{reg}^\circ(S_1,S_2)$. By (3) of Theorem
\ref{simple-case}, $\text{RG}(D^{**})$ is feasible.

Case 2: $\Lambda_{3}=\{S_2,P\}$. By 1) of Lemma \ref{s2-ps},
$P\in\text{reg}^\circ(S_1,S_3)$. We assert that
$\text{RG}(D^{**})$ is feasible if $P\neq P_3$. In fact, let
$\mathcal I=\{[S_2],[P_1]\}\cup\{[R];
R\in\Pi\backslash([S_2]\cup[P_1])\}$, where $[S_2]=\{S_2,P\},
[P_1]=\{P_1,P_2,P_3\}$ and $[R]=\{R\}$ for all
$R\in\Pi\backslash([S_2]\cup[P_1])$. By Definition \ref{cmptl}, we
can check that $\mathcal I$ is a compatible partition of $\Pi$. By
Lemma \ref{comp-code}, $\text{RG}(D^{**})$ is feasible. An
illustration is given in Fig. \ref{fg-3t-2} (b).

Case 3: $\Lambda_{3}=\{S_1,P\}$ and
$P\in\text{reg}^\circ(S_2,S_3)$. Similar to Case 2, we can prove
that $\text{RG}(D^{**})$ is feasible if $P\neq P_2$.

By the above discussion, if $\text{RG}(D^{**})$ is infeasible,
then either $\Lambda_{3}=\{S_1,P_2\}$ or
$\Lambda_{3}=\{S_2,P_3\}$. Suppose $\Lambda_{3}=\{S_1,P_2\}$. (See
Fig. \ref{fg-3t-3} (a).) Then by renaming regions, condition
(C-IR) holds. (See Fig. \ref{fg-3t-3} (b).) Similarly, if
$\Lambda_{3}=\{S_2,P_3\}$, then by renaming regions, condition
(C-IR) holds.

Thus, we proved that if $\text{RG}(D^{**})$ is infeasible, then
the condition (C-IR) holds.
\end{proof}

%%%%%%%%%%%%%%%%%%%%%%%%%%%%%%%%%%%%%%%%%%%
\renewcommand\figurename{Fig}
\begin{figure}[htbp]
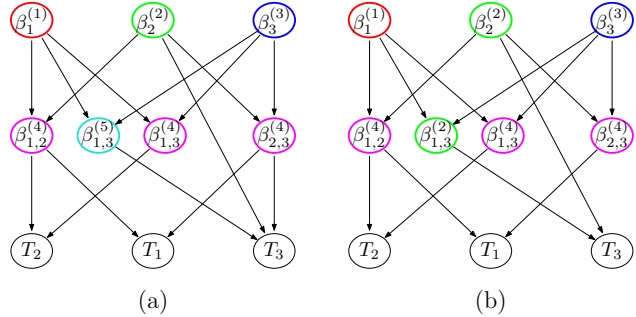

\begin{center}
\vspace{0.0cm}\includegraphics[height=4.2cm]{fg-3t.3}
\hspace{0.55cm}\includegraphics[height=4.2cm]{fg-3t.4}
\end{center} \vspace{-0.3cm}\caption{Illustrations of
code construction for Lemma \ref{case1-2}: (a) illustrates a code
for $|\Lambda_3|\geq 3$; (b) illustrates a code for
$\Lambda_3=\{S_2,P\}$. }\label{fg-3t-2}
\end{figure}
%%%%%%%%%%%%%%%%%%%%%%%%%%%%%%%%%%%%%%%%%%%%%%

%%%%%%%%%%%%%%%%%%%%%%%%%%%%%%%%%%%%%%%%%%%
\renewcommand\figurename{Fig}
\begin{figure}[htbp]
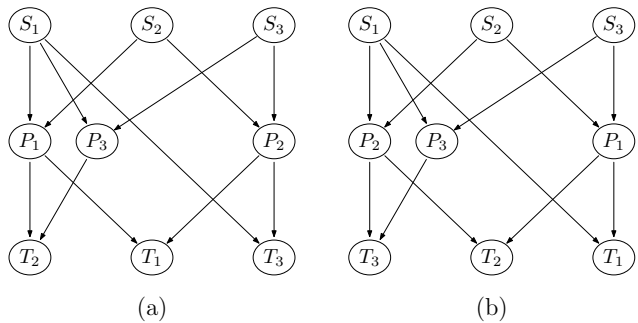

\begin{center}
\vspace{0.2cm}\includegraphics[height=4.2cm]{fg-3t.5}
\hspace{0.55cm}\includegraphics[height=4.2cm]{fg-3t.6}
\end{center} \vspace{-0.3cm}\caption{Illustration of region renaming:
The region graph (b) is obtain from (a) by renaming regions. We
can check that (b) satisfies condition (C-IR). }\label{fg-3t-3}
\end{figure}
%%%%%%%%%%%%%%%%%%%%%%%%%%%%%%%%%%%%%%%%%%%%%%

\begin{lem}\label{case2-1}
Suppose $P_2,P_3\in\text{reg}^\circ(S_2,S_3)$ such that
$\Lambda_1=\{S_1,P_2\}$ and $\Lambda_2=\{S_1,P_3\}$. Then
$\text{RG}(D^{**})$ is feasible.
\end{lem}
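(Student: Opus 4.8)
The plan is to prove feasibility by exhibiting a \emph{compatible} partition of $\Pi$ (Definition \ref{cmptl}) and then quoting Lemma \ref{comp-code}; throughout I use that $\text{RG}(D^{**})$ is terminal-separable with exactly the three terminal regions $T_1,T_2,T_3$, so the only sets to control are $\Lambda_1,\Lambda_2,\Lambda_3$. First I would dispose of an easy case: since $P_2\in\Lambda_1$ and $P_3\in\Lambda_2$ both lie in $\text{reg}^\circ(S_2,S_3)$, the sets $\Lambda_1,\Lambda_2$ already meet $\text{reg}^\circ(S_2,S_3)$; hence if $\Lambda_3\cap\text{reg}^\circ(S_2,S_3)\neq\emptyset$ as well, condition (3) of Theorem \ref{simple-case} with $\{\ell',\ell''\}=\{2,3\}$ applies and we are done. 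It therefore remains to treat $\Lambda_3\cap\text{reg}^\circ(S_2,S_3)=\emptyset$. Using (\ref{eq-nt-2}) and the disjoint decomposition $\Pi=\{S_1,S_2,S_3\}\cup\text{reg}^\circ(S_1,S_2)\cup\text{reg}^\circ(S_1,S_3)\cup\text{reg}^\circ(S_2,S_3)$, this means $\Lambda_3\subseteq\text{reg}(S_1,S_2)\cup\text{reg}(S_1,S_3)$, while by 2) of Lemma \ref{in-reg-lmd-t-s} it is contained in neither $\text{reg}(S_1,S_2)$ nor $\text{reg}(S_1,S_3)$.

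The common device for this remaining case is to place the (R-closed) set $\text{reg}^\circ(S_2,S_3)$ into $[S_1]$, so that it becomes the subclass $[S_1]_{2,3}$ and each of its regions — in particular $P_2,P_3$ — receives the vector $\beta^{(1)}_{2,3}=\alpha_2+\alpha_3$. Then $S_1\in[S_1]_1$ and $P_i\in[S_1]_{2,3}$ lie in two different subclasses of the single class $[S_1]$, so by 3) of Theorem \ref{prtn-dctr-code} we get $\bar\alpha\in\langle d_{S_1},d_{P_i}\rangle$; thus $\Lambda_1,\Lambda_2$ are handled and, just as importantly, neither induces a connection \emph{between} two classes. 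What is left is to distribute $\text{reg}^\circ(S_1,S_2)$ and $\text{reg}^\circ(S_1,S_3)$ so that $\Lambda_3$ likewise has two of its members lying in two different subclasses of one and the same class.

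Here I would split on how $\Lambda_3$ meets the two interiors. If $\Lambda_3$ meets both $\text{reg}^\circ(S_1,S_2)$ and $\text{reg}^\circ(S_1,S_3)$, I take $\mathcal I=\{[S_1],[S_2],[S_3],[W]\}$ with $[S_1]=\{S_1\}\cup\text{reg}^\circ(S_2,S_3)$, $[S_2]=\{S_2\}$, $[S_3]=\{S_3\}$ and $[W]=\text{reg}^\circ(S_1,S_2)\cup\text{reg}^\circ(S_1,S_3)$; then $\Lambda_3$ contains members of the distinct subclasses $[W]_{1,2}$ and $[W]_{1,3}$, both of the single class $[W]$, whose vectors $\beta^{(4)}_{1,2},\beta^{(4)}_{1,3}$ span $\bar\alpha$ by property (2) of Lemma \ref{genc-code}. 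Otherwise $\Lambda_3$ misses one interior, which (since $\Lambda_3\nsubseteq\text{reg}(S_1,S_3)$, resp.\ $\nsubseteq\text{reg}(S_1,S_2)$) forces the corresponding source $S_2$ (resp.\ $S_3$) into $\Lambda_3$; then I take the ``fully absorbing'' partition $[S_1]=\{S_1\}\cup\text{reg}^\circ(S_2,S_3)$, $[S_2]=\{S_2\}\cup\text{reg}^\circ(S_1,S_3)$, $[S_3]=\{S_3\}\cup\text{reg}^\circ(S_1,S_2)$, so that $S_2\in[S_2]_2$ together with any member of $\Lambda_3\cap\text{reg}^\circ(S_1,S_3)\subseteq[S_2]_{1,3}$ again lie in two subclasses of one class $[S_2]$ (the degenerate possibility $\Lambda_3=\{S_1,S_2,S_3\}$ spans $\bar\alpha$ outright). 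In every subcase the two designated members of $\Lambda_3$ are placed inside a common equivalent class.

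Finally I would verify compatibility of the displayed partition via Definition \ref{cmptl}. Condition (1) is immediate; condition (3) collapses to $\Lambda_j\nsubseteq\text{reg}(S_{i_1},S_{i_2})$ because in both partitions the source subclasses satisfy $[S_{i_1}]_{i_1}=\{S_{i_1}\}$ and $[S_{i_2}]_{i_2}=\{S_{i_2}\}$, both lying in $\text{reg}(S_{i_1},S_{i_2})$, and this is exactly 2) of Lemma \ref{in-reg-lmd-t-s}. Condition (2) — that no two classes are connected (Definition \ref{cntd}) — is the main obstacle. Connections of type (2) of Definition \ref{cntd} are excluded because every block used is R-closed and the blocks have pairwise disjoint super-regions. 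A connection of type (1) would require some $\Lambda_j$ to be covered by two subclasses coming from two distinct classes; by construction the two designated members of each $\Lambda_j$ lie in a common class, and any genuinely cross-class covering by exactly two subclasses would force $\Lambda_j\subseteq\text{reg}(S_{i_1},S_{i_2})$, again barred by 2) of Lemma \ref{in-reg-lmd-t-s}. This is also why a single partition cannot serve the whole case: the $[W]$ choice would connect $[S_2]$ to $[W]$ in the source subcase, and the fully absorbing choice would connect $[S_2]$ to $[S_3]$ in the both-interior subcase, so the split on $\Lambda_3$ is essential. Once compatibility is established, Lemma \ref{comp-code} yields feasibility of $\text{RG}(D^{**})$.
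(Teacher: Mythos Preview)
Your proof is correct and follows essentially the same strategy as the paper: first dispose of $\Lambda_3\cap\text{reg}^\circ(S_2,S_3)\neq\emptyset$ via (3) of Theorem \ref{simple-case}, then build a compatible partition of $\Pi$ (with $\text{reg}^\circ(S_2,S_3)$ absorbed into $[S_1]$) and invoke Lemma \ref{comp-code}. The only difference is cosmetic: the paper splits into ``$\Lambda_3$ meets at least one interior'' versus ``meets neither'' and in the former case merges both interiors into $[S_3]$ (three classes), whereas you split into ``meets both'' versus ``misses one'' and use a separate fourth class $[W]$ in the first subcase and the symmetric three-class absorbing partition in the second; both choices yield compatible partitions for the same reasons.
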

\begin{proof}
If $\Lambda_{3}\cap\text{reg}^\circ(S_{2},S_{3})\neq\emptyset$,
then by (3) of Theorem \ref{simple-case}, $\text{RG}(D^{**})$ is
feasible. So we assume
$\Lambda_{3}\cap\text{reg}^\circ(S_{2},S_{_3})=\emptyset$. Then we
have
\begin{align}
\Lambda_{3}\subseteq\Pi\setminus\text{reg}^\circ(S_{2},S_{_3})
=\text{reg}(S_{1},S_{2})\cup\text{reg}(S_{1},S_{3}).
\label{eq-sgl-3t-8}
\end{align}
By enumerating, we have the following two cases:

Case 1: $\Lambda_{3}\cap(\text{reg}^\circ(S_{1},S_{2})\cup
\text{reg}^\circ(S_{1},S_{3}))\neq\emptyset$. Without loss of
generality, assume
$Q_1\in\Lambda_{3}\cap\text{reg}^\circ(S_{1},S_{2})$. Since by 2)
of Lemma \ref{in-reg-lmd-t-s},
$\Lambda_{3}\nsubseteq\text{reg}(S_{1},S_{2})$, then by
(\ref{eq-sgl-3t-8}), there is a
$Q_2\in\Lambda_{3}\cap\text{reg}(S_{1},S_{3})\backslash\{S_{1}\}$.
Let $\mathcal I=\{[S_{1}], [S_{2}], [S_{3}]\}$, where
$[S_{1}]=\{S_{1}\}\cup\text{reg}^\circ(S_{2}, S_{3})$,
$[S_{2}]=\{S_{2}\}$ and $[S_{3}]=\text{reg}^\circ(S_{1},
S_{2})\cup\text{reg}^\circ(S_{1}, S_{3})\cup\{S_{3}\}$. Then by
Definition \ref{cmptl}, we can check that $\mathcal I$ is
compatible. By Lemma \ref{comp-code}, $\text{RG}(D^{**})$ is
feasible. An illustration is given in Fig. \ref{fg-3t-4} (a).

Case 2: $\Lambda_{3}\cap(\text{reg}^\circ(S_{1},S_{2})\cup
\text{reg}^\circ(S_{1},S_{3}))=\emptyset$. By (\ref{eq-sgl-3t-8}),
$\Lambda_{j_3}\subseteq\{S_1,S_2,S_3\}$. On the other hand, by 2)
of Lemma \ref{in-reg-lmd-t-s},
$\Lambda_{3}\nsubseteq\text{reg}(S_{i_1},S_{i_2})$ for all
$\{i_1,i_2\}\subseteq\{1,2,3\}$. Then
$\Lambda_{3}=\{S_1,S_2,S_3\}$. Let $\mathcal
I=\{[S_{1}]\}\cup\{[R]; R\in\Pi\backslash[S_{1}]\}$, where
$[S_{1}]=\{S_{1}\}\cup\text{reg}^\circ(S_{2},S_{3})$ and
$[R]=\{R\}, \forall R\in\Pi\backslash[S_{1}]$. We can check that
$\mathcal I$ is compatible. By Lemma \ref{comp-code},
$\text{RG}(D^{**})$ is feasible. An illustration is given in Fig.
\ref{fg-3t-4} (b).

In both cases, $\text{RG}(D^{**})$ is feasible, which completes
the proof.
\end{proof}

%%%%%%%%%%%%%%%%%%%%%%%%%%%%%%%%%%%%%%%%%%%
\renewcommand\figurename{Fig}
\begin{figure}[htbp]
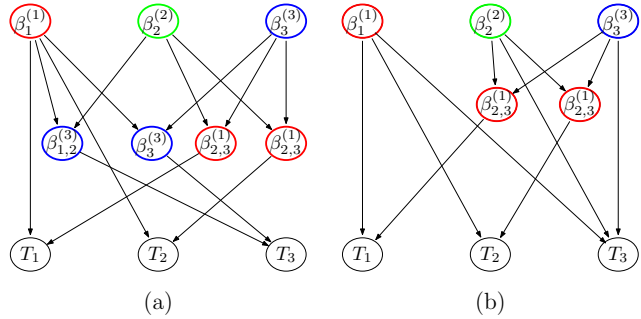

\begin{center}
\vspace{0.0cm}\includegraphics[height=4.2cm]{fg-3t.8}
\hspace{0.3cm}\includegraphics[height=4.2cm]{fg-3t.7}
\end{center} \vspace{-0.3cm}\caption{Illustrations of
code construction for the proof of Lemma \ref{case2-1}: (a)
illustrates a code for Case 1 and (b) illustrates a code for Case
2. }\label{fg-3t-4}
\end{figure}
%%%%%%%%%%%%%%%%%%%%%%%%%%%%%%%%%%%%%%%%%%%%%%

\begin{lem}\label{case2-2}
Suppose $\Lambda_1=\{S_1,P_1\}$ and $\Lambda_2=\{P_1,P_2\}$ for
some $P_1\in\text{reg}^\circ(S_2,S_3)$ and
$P_2\in\text{reg}^\circ(S_1,S_2)$. If $\text{RG}(D^{**})$ is
infeasible, then condition (C-IR) holds.
\end{lem}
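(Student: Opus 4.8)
The hypotheses already supply $P_1 \in \text{reg}^\circ(S_2,S_3)$ and $P_2 \in \text{reg}^\circ(S_1,S_2)$ with $\Lambda_1=\{S_1,P_1\}$ and $\Lambda_2=\{P_1,P_2\}$, which are exactly the first requirements of (C-IR). Thus proving (C-IR) reduces to the single remaining clause $\Lambda_3\subseteq\text{reg}(S_1,P_2)\cup\text{reg}(S_1,S_3)$. The plan is to argue by contraposition: assuming $\Lambda_3\not\subseteq\text{reg}(S_1,P_2)\cup\text{reg}(S_1,S_3)$, I will construct a compatible partition of $\Pi$ and invoke Lemma \ref{comp-code} to conclude that $\text{RG}(D^{**})$ is feasible, contradicting infeasibility.

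First I would dispose of the easy case $\Lambda_3\cap\text{reg}^\circ(S_2,S_3)\neq\emptyset$: since $P_1\in\Lambda_1\cap\Lambda_2$ already lies in $\text{reg}^\circ(S_2,S_3)$, all of $\Lambda_1,\Lambda_2,\Lambda_3$ meet $\text{reg}^\circ(S_2,S_3)$, so condition (3) of Theorem \ref{simple-case} (with the pair $\{2,3\}$) yields feasibility immediately. Hence I may assume $\Lambda_3\cap\text{reg}^\circ(S_2,S_3)=\emptyset$, which forces $\Lambda_3\subseteq\text{reg}(S_1,S_2)\cup\text{reg}(S_1,S_3)$. Together with the assumption $\Lambda_3\not\subseteq\text{reg}(S_1,P_2)\cup\text{reg}(S_1,S_3)$ this produces a region $Q\in\Lambda_3\cap(\text{reg}^\circ(S_1,S_2)\setminus\text{reg}(S_1,P_2))$; and since $\Lambda_3\not\subseteq\text{reg}(S_1,S_2)$ by Lemma \ref{in-reg-lmd-t-s}, there is also some $Q_2\in\Lambda_3\cap\text{reg}^\circ(S_1,S_3)$.

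With $Q$ and $Q_2$ in hand I would take the partition $\mathcal I$ of $\Pi$ whose only non-singleton classes are $[S_1]=\text{reg}(S_1,P_2)\cup\{P_1\}$ and $[W]=\{Q,Q_2\}$, all remaining regions forming singleton classes. This grouping is dictated by the forced behaviour of any feasible code — in which $d_{P_1}$ must be parallel to $\alpha_2+\alpha_3$ and $d_{P_2}$ to $\alpha_1$, the very fact that makes (C-IR) infeasible. Under $\mathcal I$ the sets $\Lambda_1=\{S_1,P_1\}$ and $\Lambda_2=\{P_1,P_2\}$ each straddle the two subclasses $[S_1]_1$ (containing $S_1,P_2$) and $[S_1]_{2,3}=\{P_1\}$, while $\Lambda_3$ straddles the subclasses $[W]_{1,2}=\{Q\}$ and $[W]_{1,3}=\{Q_2\}$; these are precisely the configurations Lemma \ref{comp-code} uses to certify condition (3) of Definition \ref{sgl-sets} for every $j$.

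The crux — and the only place the hypothesis $Q\notin\text{reg}(S_1,P_2)$ is used — is checking that $\mathcal I$ is compatible in the sense of Definition \ref{cmptl}. I expect three delicate points. (i) R-closedness of each class reduces to the identities $\text{reg}(\text{reg}(S_1,P_2))=\text{reg}(S_1,P_2)$ and $\text{reg}(\{R\})=\{R\}$ for any single non-source region $R$ (the latter because every non-source region has at least two parents). (ii) No connection (Definition \ref{cntd}) can occur: the overlap-connection (2) is excluded because every region of $\text{reg}(S_1,P_2)$, of $\{S_1\}$ and of $\{P_1\}$ already lies inside $[S_1]$ and $Q\notin\text{reg}(S_1,P_2)$, so no singleton or $[W]$-subclass can overlap an $[S_1]$-subclass; the $\Lambda$-connection (1) is excluded because $\Lambda_1,\Lambda_2$ sit entirely inside $[S_1]$ while $\Lambda_3$ already occupies two subclasses of the single class $[W]$. (iii) Clause (3) of Definition \ref{cmptl}: for the pair $\{1,3\}$ the region $Q$ witnesses $\Lambda_3\not\subseteq[S_1]_1\cup[S_3]_3\cup\text{reg}(S_1,S_3)$, and for the other pairs the claim follows from $\Lambda_3\not\subseteq\text{reg}(S_{i_1},S_{i_2})$ (Lemma \ref{in-reg-lmd-t-s}); the analogous checks for $\Lambda_1,\Lambda_2$ are routine. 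Once compatibility is verified, Lemma \ref{comp-code} closes the argument.
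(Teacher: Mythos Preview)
Your overall strategy---assume $\Lambda_3\not\subseteq\text{reg}(S_1,P_2)\cup\text{reg}(S_1,S_3)$, build a compatible partition around $[S_1]=\text{reg}(S_1,P_2)\cup\{P_1\}$, and invoke Lemma \ref{comp-code}---is exactly the paper's approach. The gap is in the extraction of $Q$ and $Q_2$ and the single partition you use for every case.

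From $\Lambda_3\subseteq\text{reg}(S_1,S_2)\cup\text{reg}(S_1,S_3)$ and $\Lambda_3\not\subseteq\text{reg}(S_1,P_2)\cup\text{reg}(S_1,S_3)$ you only get $Q\in\text{reg}(S_1,S_2)\setminus\text{reg}(S_1,P_2)$, not $Q\in\text{reg}^\circ(S_1,S_2)$: the region $S_2$ lies in that difference. Likewise $\Lambda_3\not\subseteq\text{reg}(S_1,S_2)$ only forces $Q_2\in\text{reg}(S_1,S_3)\setminus\{S_1\}$, which allows $Q_2=S_3$. Both degeneracies can occur simultaneously: take $\Lambda_3=\{S_1,S_2,S_3\}$ (here $S_1\in\text{reg}(S_1,P_2)$, $S_2$ is the unique witness for $Q$, and $S_3$ the unique witness for $Q_2$). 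Then your class $[W]=\{Q,Q_2\}=\{S_2,S_3\}$ forces $[S_2]=[S_3]$, violating condition (1) of Definition \ref{cmptl}, so your partition is not compatible and the argument stalls.

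The paper avoids this by splitting into three cases. When $\Lambda_3\cap\text{reg}(S_1,P_2)\neq\emptyset$ (which covers $\Lambda_3=\{S_1,S_2,S_3\}$) or when $|\Lambda_3|\geq 3$, it takes $[S_1]=\text{reg}(S_1,P_2)\cup\{P_1\}$ as the \emph{only} non-singleton class: then $\Lambda_3$ meets three distinct subclasses (one in $[S_1]_1$, one in a singleton of $\text{reg}(S_1,S_2)\setminus\text{reg}(S_1,P_2)$, one in a singleton of $\text{reg}(S_1,S_3)\setminus\{S_1\}$), so no $\Lambda_3$-connection arises and condition (3) holds via an $\mathcal I$-independent triple. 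Only in the remaining case $|\Lambda_3|=2$ does the paper add the second class $\{P_3,P_4\}=\Lambda_3$, and there $\Lambda_3\neq\{S_2,S_3\}$ follows from $\Lambda_3\not\subseteq\text{reg}(S_2,S_3)$, so the obstruction you hit cannot occur. Your argument is easily repaired by adopting this case split; the rest of your compatibility verification is sound.
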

\begin{proof}
If $\Lambda_{3}\cap\text{reg}^\circ(S_{2},S_{3})\neq\emptyset$,
then by (3) of Theorem \ref{simple-case}, $\text{RG}(D^{**})$ is
feasible. So we assume
$\Lambda_{3}\cap\text{reg}^\circ(S_{2},S_{3})=\emptyset$. Then we
have
\begin{align}
\Lambda_{3}\subseteq\Pi\setminus\text{reg}^\circ(S_{2},S_{3})
=\text{reg}(S_{1},S_{2})\cup \text{reg}(S_{1},S_{3}).
\label{eq-sgl-3t-2}
\end{align}
Now, suppose
\begin{align}
\Lambda_{3}\nsubseteq\text{reg}(S_{1},P_2)\cup\text{reg}(S_{1},
S_{3}). \label{eq-sgl-3t-3}
\end{align}
By enumerating, we have the following three cases:

Case 1: $\Lambda_{3}\cap\text{reg}(S_{1},P_2)\neq\emptyset$. We
can assume $$Q_1\in\Lambda_{3}\cap\text{reg}(S_{1},P_2).$$ Since
by 2) of Lemma \ref{in-reg-lmd-t-s},
$\Lambda_{3}\nsubseteq\text{reg}(S_{1},S_{2})$, then by
(\ref{eq-sgl-3t-2}), $\Lambda_{3}\cap(\text{reg}S_{1},S_{3})
\backslash\{S_{1}\}\neq\emptyset.$ Assume
$$Q_2\in\Lambda_{3}\cap\text{reg}(S_{1},S_{3})
\backslash\{S_{1}\}.$$ Moreover, by (\ref{eq-sgl-3t-2}) and
(\ref{eq-sgl-3t-3}), $\Lambda_{3}\cap\text{reg}(S_{1},S_{2})
\backslash\text{reg}(S_{1},P_2)\neq\emptyset$. Then we can assume
$$Q_3\in\Lambda_{3}\cap\text{reg}(S_{1},S_{2})
\backslash\text{reg}(S_{1},P_2).$$ Let $\mathcal
I=\{[S_{1}]\}\cup\{[R]; R\in\Pi\backslash[S_{1}]\}$, where
$[S_{1}]=\text{reg}(S_{1},P_2)\cup\{P_1\}$ and $[R]=\{R\}$ for all
$R\in\Pi\backslash[S_{1}]\}$. We can check that $\mathcal I$ is
compatible. By Lemma \ref{comp-code}, $\text{RG}(D^{**})$ is
feasible. An illustration is given in Fig. \ref{fg-3t-5}.

Case 2: $\Lambda_{3}\cap\text{reg}(S_{1},P_2)=\emptyset$ and
$|\Lambda_{3}|\geq 3.$ Similar to Case 1, we can prove that
$\text{RG}(D^{**})$ is feasible.

Case 3: $\Lambda_{3}\cap\text{reg}(S_{1},P_2)=\emptyset$ and
$|\Lambda_{3}|=2.$ Since by 2) of Lemma \ref{in-reg-lmd-t-s},
$\Lambda_{3}\nsubseteq\text{reg}(S_{1},S_{2})$ and
$\Lambda_{3}\nsubseteq\text{reg}(S_{1},S_{3})$, then by
(\ref{eq-sgl-3t-2}) and (\ref{eq-sgl-3t-3}),
$\Lambda_{3}=\{P_3,P_4\}$ for some $P_3\in\text{reg}(S_{1},S_{2})
\backslash\text{reg}(S_{1},P_2)$ and
$P_4\in\text{reg}(S_{1},S_{3})\backslash\{S_{1}\}$. Also by 2) of
Lemma \ref{in-reg-lmd-t-s},
$\Lambda_{3}\nsubseteq\text{reg}(S_{2},S_{3})$, then
$\Lambda_{3}=\{P_3,P_4\}\neq\{S_2,S_3\}$. Let $\mathcal
I=\{[S_{1}],[P_3]\}\cup\{[R];
R\in\Pi\backslash[S_{1}]\cup[P_3]\}$, where
$[S_{1}]=\text{reg}(S_{1},P_2)\cup\{P_1\}$, $[P_3]=\{P_3,P_4\}$
and $[R]=\{R\}$ for all $R\in\Pi\backslash[S_{1}]\cup[P_3]\}$. By
Definition \ref{cmptl}, we can check that $\mathcal I$ is
compatible. By Lemma \ref{comp-code}, $\text{RG}(D^{**})$ is
feasible. An illustration is given in Fig. \ref{fg-3t-6}.

Thus, if $\text{RG}(D^{**})$ is infeasible, then
(\ref{eq-sgl-3t-3}) is violated. So
$\Lambda_{3}\subseteq\text{reg}(S_{1},P_2)\cup\text{reg}(S_{1},
S_{3})$. Combining assumptions of this lemma, we derived condition
(C-IR).
\end{proof}

%%%%%%%%%%%%%%%%%%%%%%%%%%%%%%%%%%%%%%%%%%%
\renewcommand\figurename{Fig}
\begin{figure}[htbp]
\begin{center}
\vspace{0.0cm}\includegraphics[height=4.2cm]{fg-3t.9}
\hspace{0.3cm}\includegraphics[height=4.2cm]{fg-3t.10}
\end{center} \vspace{-0.3cm}\caption{Illustration of
code construction for case 1 in the proof of Lemma \ref{case2-2}:
(a) is the region graph and (b) is a code on the graph.
}\label{fg-3t-5}
\end{figure}
%%%%%%%%%%%%%%%%%%%%%%%%%%%%%%%%%%%%%%%%%%%%%%

%%%%%%%%%%%%%%%%%%%%%%%%%%%%%%%%%%%%%%%%%%%
\renewcommand\figurename{Fig}
\begin{figure}[htbp]
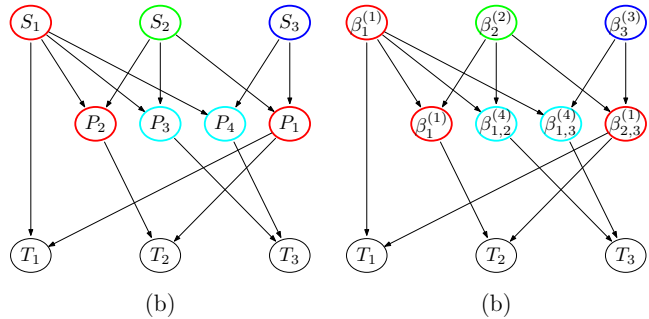

\begin{center}
\vspace{0.0cm}\includegraphics[height=4.2cm]{fg-3t.11}
\hspace{0.3cm}\includegraphics[height=4.2cm]{fg-3t.12}
\end{center} \vspace{-0.3cm}\caption{Illustration of
code construction for case 3 in the proof of Lemma \ref{case2-2}:
(a) is the region graph and (b) is a code on the graph.
}\label{fg-3t-6}
\end{figure}
%%%%%%%%%%%%%%%%%%%%%%%%%%%%%%%%%%%%%%%%%%%%%%

Now we can prove Theorem \ref{sgl-3t}. We will prove the necessity
and sufficiency separately.
\begin{proof}[Proof of Necessity]
Suppose $\text{RG}(D^{**})$ is infeasible. We need to prove the
condition (C-IR). We can prove this by enumerating.

If $\Lambda_{j_1}\cap\Lambda_{j_2}=\emptyset$ for all
$j_1,j_2\subseteq\{1,2,3\}$ such that
$|\Lambda_{j_1}|=|\Lambda_{j_2}|=2$, then by (1) of Theorem
\ref{simple-case}, $\text{RG}(D^{**})$ is feasible. So we assume
that the collection $\{\Lambda_1,\Lambda_2,\Lambda_3\}$ contains
two subsets which have size $2$ and have a non-empty intersection.
By proper naming, we can assume
\begin{align}
\Lambda_{1}=\{P_1,P_2\} ~\text{and}~
\Lambda_{2}=\{P_1,P_3\}.\label{eq-sgl-3t-1}
\end{align}
Then by enumerating, we have the following two cases:

Case 1: $\{P_1,P_2,P_3\}\subseteq\Pi\backslash\{S_1,S_2,S_3\}
=\text{reg}^\circ(S_1,S_2)\cup
\text{reg}^\circ(S_1,S_3)\cup\text{reg}^\circ(S_2,S_3).$ Note that
by (\ref{eq-sgl-3t-1}), $\Lambda_{1}=\{P_1,P_2\}$. Then by 2) of
Lemma \ref{s2-ps} and by proper naming,
\begin{align*}
P_1\in\text{reg}^\circ(S_{1},S_{2}) ~\text{and}~
P_2\in\text{reg}^\circ(S_{2},S_{3}).%\label{eq-sgl-3t-4}
\end{align*}
Again by (\ref{eq-sgl-3t-1}), $\Lambda_{2}=\{P_1,P_3\}$. Then by
2) of Lemma \ref{s2-ps}, $P_3\in\text{reg}^\circ(S_{2},S_{3})$ or
$P_3\in\text{reg}^\circ(S_{1},S_{3})$. If
$P_3\in\text{reg}^\circ(S_{2},S_{3})$, then by Lemma
\ref{case1-1}, $\text{RG}(D^{**})$ is feasible; Otherwise, by
Lemma \ref{case1-2}, the condition (C-IR) holds.

Case 2: $\{P_1,P_2,P_3\}\cap\{S_1,S_2,S_3\}\neq\emptyset$. By
proper naming, we can assume $S_{1}\in\{P_1,P_2,P_3\}$. Then we
can enumerating the following two subcases:

Case 2.1: $S_1=P_1$. Then by (\ref{eq-sgl-3t-1}) and 1) of Lemma
\ref{s2-ps}, $P_2,P_3\in\text{reg}^\circ(S_{2},S_{3})$. By Lemma
\ref{case2-1}, $\text{RG}(D^{**})$ is feasible.

Case 2.2: $S_1=P_2$ or $S_1=P_3$. By proper naming, we can assume
$S_1=P_3$. Since by (\ref{eq-sgl-3t-1}),
$\Lambda_{2}=\{P_1,P_3\}=\{P_1,S_1\}$, then by 1) of Lemma
\ref{s2-ps}, $P_1\in\text{reg}^\circ(S_{2},S_{3})$. Moreover,
since by (\ref{eq-sgl-3t-1}), $\Lambda_{1}=\{P_1,P_2\}$, then by
2) of Lemma \ref{s2-ps} and by proper naming, we have
$P_2\in\text{reg}^\circ(S_{1},S_{2})$. Now, we interchange the
name of $T_1$ and $T_2$. Then by Lemma \ref{case2-2}, the
condition (C-IR) holds.

Combining the above discussions, we can conclude that if
$\text{RG}(D^{**})$ is infeasible, then the condition (C-IR)
holds.
\end{proof}

\begin{proof}[Proof of Sufficiency]
Suppose the condition (C-IR) holds, we need to prove that
$\text{RG}(D^{**})$ is infeasible. We can prove this by
contradiction. Suppose $\tilde{C}_\Pi=\{d_R\in\mathbb F^3;
R\in\Pi\}$ is an arbitrary feasible code on $\Pi$. Since by the
condition (C-IR), $P_2\in\text{reg}^\circ(S_{1},S_{2})$, then by
1) of Lemma \ref{lem-sgl-sets},
\begin{align*}
d_{P_2}\in\langle\alpha_{1},\alpha_{2}\rangle.
\end{align*}
Moreover, since $\Lambda_{1}=\{S_{1},P_1\}$ and
$\Lambda_{2}=\{P_{1},P_2\}$, then by 2) of Lemma
\ref{lem-sgl-sets}, $\langle\bar{\alpha},
d_{S_{1}}\rangle=\langle\bar{\alpha},
d_{P_1}\rangle=\langle\bar{\alpha},d_{P_2}\rangle$. So
\begin{align}\label{eq-sf-sgl-3t-2}
d_{P_2}\in\langle\bar{\alpha},
d_{S_{1}}\rangle\cap\langle\alpha_{1},
\alpha_{2}\rangle=\langle\bar{\alpha},
\alpha_{1}\rangle\cap\langle\alpha_{1},
\alpha_{2}\rangle=\langle\alpha_{1}\rangle.
\end{align}
By (\ref{eq-sf-sgl-3t-2}) and 1) of Lemma \ref{lem-sgl-sets}, we
have $d_R\in\langle\alpha_{1}\rangle$ for all
$R\in\text{reg}(S_{1},P_2)$ and $d_R\in\langle\alpha_{1},
\alpha_{3}\rangle$ for all $R\in\text{reg}(S_{1},S_{3})$. Since by
the condition (C-IR), $\Lambda_{3}\subseteq\text{reg}(S_{1},P_2)
\cup\text{reg}(S_{1},S_{3})$, then by condition (3) of Definition
\ref{sgl-sets} and (\ref{eq-sf-sgl-3t-2}), we have
$$\bar{\alpha}\in\langle d_R; R\in\Lambda_{3}\rangle
\subseteq\langle\alpha_{1}, d_{P_2},
\alpha_{3}\rangle=\langle\alpha_{1}, \alpha_{3}\rangle,$$ which is
a contradiction. Thus, there exists no feasible code on $\Pi$ and
by Theorem \ref{lmd-solv}, $\text{RG}(D^{**})$ is infeasible.
\end{proof}
\end{document}